\DeclareMathAlphabet\mathbfcal{OMS}{cmsy}{b}{n}
\tikzset{snake it/.style={decorate, decoration=snake}}
\tikzset{mytip/.tip={Triangle[length=2mm, width=2mm]}}
\tikzset{midbidir/.style={ thick, preaction={ decorate, decoration={ markings, mark=at position 0.3 with {\arrowreversed[]{mytip}}}}, postaction={decorate, decoration={markings, mark=at position 0.7 with {\arrow[]{mytip}} } }}}
\tikzset{
  midarrow/.style={
    decoration={
      markings,
      mark=at position 0.5 with {\arrow{Triangle}}
    },
    postaction={decorate}
  }
}
\tikzset{
	%Define standard arrow tip
    >=stealth',
    %Define style for boxes
    punkt/.style={
           rectangle,
           rounded corners,
           draw=black, very thick,
           text width=6.5em,
           minimum height=2em,
           text centered},
    % Define arrow style
    pil/.style={
           ->,
           thick,
           shorten <=2pt,
           shorten >=2pt,},
    % style to apply some styles to each segment of a path
  on each segment/.style={
    decorate,
    decoration={
      show path construction,
      moveto code={},
      lineto code={
        \path [#1]
        (\tikzinputsegmentfirst) -- (\tikzinputsegmentlast);
      },
      curveto code={
        \path [#1] (\tikzinputsegmentfirst)
        .. controls
        (\tikzinputsegmentsupporta) and (\tikzinputsegmentsupportb)
        ..
        (\tikzinputsegmentlast);
      },
      closepath code={
        \path [#1]
        (\tikzinputsegmentfirst) -- (\tikzinputsegmentlast);
      },
    },
  },
  % style to add an arrow in the middle of a path
  mid arrow/.style={postaction={decorate,decoration={
        markings,
        mark=at position .5 with {\arrow[#1]{stealth'}}
      }}}
}
\newtheorem{theorem}{Theorem}
\newtheorem{algorithm}[theorem]{Algorithm}
\newtheorem{corollary}[theorem]{Corollary}
\newtheorem{definition}[theorem]{Definition}
\newtheorem{lemma}[theorem]{Lemma}
\newtheorem{remark}[theorem]{Remark}
\newtheorem{protocol}[theorem]{Protocol}
\renewenvironment{proof}[1][Proof]{\noindent\textbf{#1. }}{\ \rule{0.5em}{0.5em}}
\tikzset{
	%Define standard arrow tip
    >=stealth',
    %Define style for boxes
    punkt/.style={
           rectangle,
           rounded corners,
           draw=black, very thick,
           text width=6.5em,
           minimum height=2em,
           text centered},
    % Define arrow style
    pil/.style={
           ->,
           thick,
           shorten <=2pt,
           shorten >=2pt,},
    % style to apply some styles to each segment of a path
  on each segment/.style={
    decorate,
    decoration={
      show path construction,
      moveto code={},
      lineto code={
        \path [#1]
        (\tikzinputsegmentfirst) -- (\tikzinputsegmentlast);
      },
      curveto code={
        \path [#1] (\tikzinputsegmentfirst)
        .. controls
        (\tikzinputsegmentsupporta) and (\tikzinputsegmentsupportb)
        ..
        (\tikzinputsegmentlast);
      },
      closepath code={
        \path [#1]
        (\tikzinputsegmentfirst) -- (\tikzinputsegmentlast);
      },
    },
  },
  % style to add an arrow in the middle of a path
  mid arrow/.style={postaction={decorate,decoration={
        markings,
        mark=at position .5 with {\arrow[#1]{stealth'}}
      }}}
}
\begin{document}

\title{Entanglement summoning from entanglement sharing}

\author[1,3]{Lana Bozanic}
\email{lbozanic@uwaterloo.ca}
\orcid{}

\author[1,2]{Alex May}
\email{amay@perimeterinstitute.ca}
\orcid{0000-0002-4030-5410}

\author[1,2,4]{Stanley Miao}
\email{stanley.miao@uwaterloo.ca}
\orcid{0009-0000-7930-7563}

\affiliation[1]{Perimeter Institute for Theoretical Physics}
\affiliation[2]{Institute for Quantum Computing, Waterloo, Ontario}
\affiliation[3]{Department of Physics and Astronomy, University of Waterloo}
\affiliation[4]{Department of Combinatorics and Optimization, University of Waterloo}

\abstract{In an entanglement summoning task, a set of distributed, co-operating parties attempt to fulfill requests to prepare entanglement between distant locations.
The parties share limited communication resources: timing constraints may require the entangled state be prepared before some pairs of distant parties can communicate, and a restricted set of links in a quantum network may further constrain communication. 
Building on \cite{adlam2018relativistic, dolev2021distributing}, we continue the characterization of entanglement summoning. We give an if and only if condition on entanglement summoning tasks with only bidirected causal connections, and provide a set of sufficient conditions addressing the most general case containing both oriented and bidirected causal connections.
Our results rely on the recent development of entanglement sharing schemes \cite{Khanian2025}. 
}

\maketitle

\tableofcontents

%%%%%%%%%%%%%%%%%%%%%%%%%%%%%%%%%%%%%%%%%%%%%%%%%%%%%%%%%%%%
\section{Introduction}
%%%%%%%%%%%%%%%%%%%%%%%%%%%%%%%%%%%%%%%%%%%%%%%%%%%%%%%%%%%%

In this work, we are interested in how entanglement can be prepared and distributed among distant parties.
These parties are allowed to communicate using a limited set of communication links and are required to prepare entanglement in response to time-sensitive requests. 
This setting is formalized in a problem known as \emph{entanglement summoning}, which was introduced in \cite{adlam2018relativistic} and studied further in \cite{dolev2021distributing}. 

\begin{figure}
  \centering
  % First subfigure: Pentagon (unchanged)
  \begin{subfigure}{0.45\textwidth}
    \centering
    \begin{tikzpicture}[edge/.style={black, thick, midarrow}]

      % place the nodes
      \node[label=above left:$D_3$] (D3) {};
      \node[label=above right:$D_2$] (D2) at (3,0) {};
      \node[label=below:$D_1$] (D1) at (1.5,-2) {};

      \draw[fill=black] (D1) circle (0.1);
      \draw[fill=black] (D2) circle (0.1);
      \draw[fill=black] (D3) circle (0.1);
    
      % draw the arrows (midpoint arrows)
      \draw[edge] (D2) -- (D3);
      \draw[edge] (D1) -- (D2);
    
    \end{tikzpicture}
  \end{subfigure}
  \quad
  \begin{subfigure}{0.45\textwidth}
    \centering
    \begin{tikzpicture}[edge/.style={black, thick, midarrow}]

      % place the nodes
      \node[label=left:$D_1$] (D1) at (0,0) {};
      \node[label=right:$D_2$] (D2) at (3,0) {};
      \node[label=right:$D_4$] (D4) at (3,3) {};
      \node[label=left:$D_3$] (D3) at (0,3) {};

      \draw[fill=black] (D1) circle (0.1);
      \draw[fill=black] (D2) circle (0.1);
      \draw[fill=black] (D3) circle (0.1);
      \draw[fill=black] (D4) circle (0.1);
    
      % draw the arrows (midpoint arrows)
      \draw[midbidir] (D2) -- (D4);
      \draw[midbidir] (D1) -- (D2);
      \draw[midbidir] (D3) -- (D4);
    
    \end{tikzpicture}
  \end{subfigure}
  \caption{Graphs representing examples of entanglement summoning problems. The vertices $D_i$ represent network or spacetime locations where requests for entanglement may be received. Edges represent directed communication links. a) A simple example where all the communication links are singly-directed. b) An example where all of the communication links are bidirected.}\label{fig:simpleexample}
\end{figure}
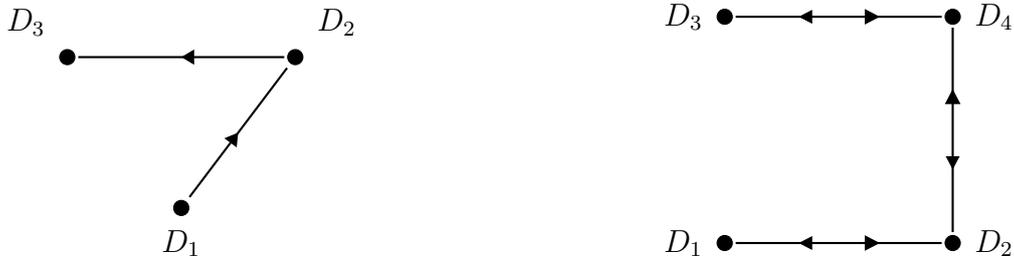

Examples of entanglement summoning problems are shown in figure \ref{fig:simpleexample}. 
An example is specified by a directed graph.
While we define the problem more abstractly in the main text, here the vertices can be taken to represent locations in a quantum network and the edges to represent allowed communication links. 
We consider a set of users who give requests at exactly two nodes in the network, and ask if a maximally entangled state can be prepared between those nodes and returned to the user. 
Note that we only allow a single round of communication, so edges $A\rightarrow B$ and $B\rightarrow C$ do not allow signals to be forwarded from $A$ to $C$. 
Our goal is to characterize when the causal constraints given by the network, combined with properties of quantum states, allow for this task to be completed. 

The entanglement summoning setting was initially motivated in the context of the quantum tasks framework \cite{kent2012quantum}, which asks for a general understanding of what quantum information processing tasks are possible when considering relativistic constraints. 
In this context, entanglement is needed between specified spacetime regions, rather than between spatial locations, to achieve certain goals. 
For instance, cheating strategies in quantum position verification \cite{Buhrman2014, Kent2011-qpv} require entanglement between appropriate spacetime regions. 
Entanglement summoning addresses when this can be achieved, and can be viewed as a basic building block for more elaborate spacetime quantum information processing scenario's. 

Entanglement summoning also builds on the study of single-system summoning \cite{kent2013no,hayden2016summoning}, which considers how a quantum system can be made available in response to time-sensitive requests in a network.
In that setting, summoning can be understood as characterizing operationally how quantum information can and cannot move through spacetime. 
We sometimes make use of single-system summoning in our work, and we take inspiration from its solution using secret-sharing schemes to address the bipartite, entangled case. 

The work \cite{dolev2021distributing} fully characterized entanglement summoning in the case where all pairs of locations in the network share only one-way causal connections: lab A can signal lab B, or B can signal A, but not both. 
Here, we fully characterize entanglement summoning in the case where all locations in the network share bidirected causal connections. 
In this case, the necessary and sufficient condition on the causal graph for entanglement summoning to be possible is simple, as we state next. 
\begin{theorem}\label{thm:bidirectedsummoning}
    In an entanglement summoning problem whose causal graph contains only bidirected edges, summoning is possible if and only if the cuasal graph admits a two-clique partition.  
\end{theorem}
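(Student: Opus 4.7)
\emph{Proof plan.} The plan is to establish the biconditional in two parts, sufficiency and necessity.

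For the sufficiency direction, suppose $V = V_1 \sqcup V_2$ is a two-clique partition. Since every pair of parties in $V_i$ is joined by a bidirected edge, any member of $V_i$ can consolidate quantum information held elsewhere in $V_i$ in a single round of communication; each clique thus effectively behaves as a single super-party. The protocol I have in mind prepares three Bell pairs in parallel---one purely inside $V_1$, one purely inside $V_2$, and one crossing the partition---and distributes each half among the members of the relevant clique using the entanglement sharing schemes of \cite{Khanian2025}. These schemes ensure that any single party in a clique can, with one round of incoming messages from its co-clique members, reconstruct its designated half. Upon receiving a request $(r_1, r_2)$, the parties route the Bell pair corresponding to the locations of $r_1$ and $r_2$ to the requesters; in each case the reconstruction step is carried out within the relevant clique.

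For the necessity direction, I argue the contrapositive. The nonexistence of a two-clique partition of $G$ is equivalent to the complement $\bar G$ being non-bipartite, i.e., to $\bar G$ containing an odd cycle. Hence there exist vertices $v_1, \ldots, v_{2k+1}$ (indices modulo $2k+1$) in $V$ with $v_i$ and $v_{i+1}$ non-adjacent in $G$ for each $i$. The strategy is to consider the $2k+1$ requests of the form $(v_i, v_{i+1})$ along this cycle and derive a contradiction from the assumption that a single pre-shared state and protocol fulfills all of them. Since the two parties in each such request cannot communicate directly, the entanglement they must end up sharing must be sourced either from the pre-shared state or via one-round messages from common neighbors. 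Summing a suitable entropic / monogamy bound around the cycle should then produce an inequality whose satisfiability fails precisely when the cycle length is odd.

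The main obstacle is the necessity direction. The base case of a length-three odd cycle in $\bar G$---three pairwise non-adjacent vertices in $G$---follows from elementary monogamy of entanglement: the three vertices would have to be pairwise maximally entangled, contradicting the monogamy bound. For longer odd cycles one must also rule out protocols that use common-neighbor helpers to mediate entanglement between non-adjacent requesters, which calls for a quantitative entropic inequality tailored to the cyclic structure. Crafting this inequality so that the odd parity of the cycle length is what causes the contradiction is the crux of the proof; I expect the entanglement sharing framework of \cite{Khanian2025} to supply the right bookkeeping for this step.
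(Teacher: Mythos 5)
Your sufficiency protocol has a concrete problem in the intra-clique case: a single Bell pair ``purely inside $V_1$'' whose halves are distributed so that \emph{any} pair of clique members can each reconstruct a designated half from one round of incoming messages, without knowing who their partner is, is exactly an entanglement sharing scheme whose access-pair graph is the complete graph on $|V_1|$ vertices with pairwise disjoint shares; for $|V_1|\geq 3$ this contains a triangle and violates the monogamy condition of \cite{Khanian2025}, so no such state exists. The fix is what the paper does: two requesters inside the same clique are adjacent, and because the edge is bidirected each sees the other's call, so a dedicated Bell pair pre-placed on each edge handles all intra-clique requests with no sharing scheme at all. Your cross-partition pair, reconstructed within each clique by what is effectively single-system summoning, is fine in spirit.

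The more serious gap is necessity. You correctly reduce to an odd cycle in $\bar G$ and dispose of the triangle case by monogamy of entanglement, but the ``entropic inequality summed around the cycle'' that would rule out longer odd cycles is not constructed, and constructing it directly is hard because you must simultaneously exclude adaptive one-round protocols that route entanglement through common-neighbour helpers. The paper sidesteps this with a structural reduction: it argues that any protocol for a bidirected summoning task can be brought, without loss of generality, to a canonical form in which each vertex either keeps all of its systems (if called) or sends everything to its neighbours (if not), so that the pre-shared state must itself be an entanglement sharing scheme with access-pair graph $G_C^c$ (lemma \ref{lemma:es-implies-essup}). Necessity of the two-clique partition then follows from the impossibility direction of the ESS characterization (theorem \ref{thm:ESSwithunknownpartner}) together with the equivalence, for these particular graphs, of the monogamy condition and the absence of odd cycles (lemma \ref{lem:noodd-implies-monog}). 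That reduction of the dynamical task to a static state-existence question is the missing idea in your plan; without it, your cycle inequality would have to be proved against every one-round protocol rather than against a single fixed pre-shared state.
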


The proof of theorem \ref{thm:bidirectedsummoning} makes use of a novel equivalence between performing entanglement summoning in a given graph and constructing a quantum state with corresponding entanglement properties. 
In particular, we show that entanglement summoning with a bipartite causal graph amounts to constructing an entanglement sharing scheme \cite{Khanian2025}. 

In an entanglement sharing scheme, non-communicating parties hold subsystems $S_1$,...,$S_n$. 
We consider subsets $T_i\subset \{S_1,...,S_n\}$, pairs of subsets $\{T_i, T_j\}$, and finally a set of these pairs, $\mathcal{A}=\{\{T_1, T_2\}, ...\}$. 
Given such a set $\mathcal{A}$, the goal is to construct quantum states $\ket{\Psi}_{S_1...S_n}$ such that the designated pairs of subsets of parties can recover a maximally entangled state by local operations. 
In \cite{Khanian2025}, a set of conditions on the list of designated pairs for it to be realized as an entanglement sharing scheme was understood. 
These are most easily stated in terms of an undirected graph $G_{\mathcal{A}}$, which has vertices labelled by the subsets $T_i$ and edges for each pairing $\{T_i,T_j\}\in \mathcal{A}$. 
We find that entanglement summoning with bidirected edges in a causal graph $G$ amounts to constructing an entanglement sharing scheme with sharing graph $G_{\mathcal{A}}$, given by taking the graph of causal connections defining the summoning problem, $G_C$, interpreting it as an undirected graph (by removing the arrows on all edges), and then taking the complement. 
An example of this transformation is shown in figure \ref{fig:pentagon-bidirected}. 

In the case of singly directed edges, we can also apply entanglement sharing schemes to give a protocol that addresses a broad set of cases. 
This class is large enough to include both the one-way only and bidirected only settings as special cases. 
We leave open whether this protocol addresses all feasible cases; in other words, we give sufficient conditions but leave open if they are necessary. 
The simplest uncharacterized entanglement summoning problem is shown in figure \ref{fig:uncharacterized}. 

\begin{figure}
    \centering
    \begin{tikzpicture}

    \foreach \i in {1,...,5}{
        \coordinate (V\i) at ({90+72*(\i-1)}:2cm);
      }
        \draw[line width=1pt, midbidir]
        (V1) -- (V2);
        \draw[line width=1pt, midbidir]
        (V2) -- (V3);
        \draw[line width=1pt, midbidir]
        (V3) -- (V4);
        \draw[line width=1pt, midbidir]
        (V4) -- (V5);
        \draw[line width=1pt, midbidir]
        (V5) -- (V1);
        \draw[line width=1pt, mid arrow]
        (V3) -- (V1);
        
      \foreach \i in {1,...,5}{
        \node at (V\i) {};
      }
      
    \node at ($(V1)+(0,0.3)$) {$D_1$};
    \node at ($(V5)+(0.3 ,0)$) {$D_2$};
    \node at ($(V4)+(0.3 ,0)$) {$D_3$};
    \node at ($(V3)+(-0.3 ,0)$) {$D_4$};
    \node at ($(V2)+(-0.3 ,0)$) {$D_5$};

    \end{tikzpicture}
    \caption{The simplest uncharacterized entanglement summoning problem. Removing the single-directed edge results in an impossible task, while making that edge bidirected results in a solvable task.}
    \label{fig:uncharacterized}
\end{figure}
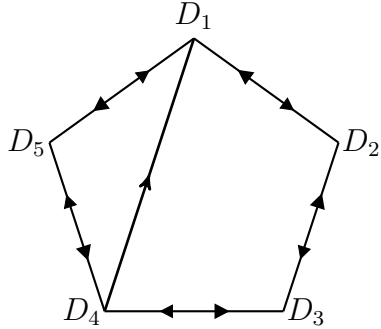

Our strategy for continuing the characterization of entanglement summoning problems was inspired by the solution to the single-system summoning problem \cite{hayden2016summoning}.
There, the first step in a solution is to observe that, given a causal graph, an error-correcting code which corrects a set of errors related to the structure of that graphs is necessary and sufficient to solve the summoning problem.  
Then the problem reduces to constructing an appropriate code.
Our approach proceeded by analogy to this one, with the modification that the appropriate object is no longer an error-correcting code, which stores quantum information, but instead the problem (in the bidirected case) reduces to the construction of an entanglement sharing scheme, which stores entanglement. 
Indeed, completing this analogy led us directly to the definition of entanglement sharing schemes and motivated the work \cite{Khanian2025}. 

Entanglement sharing and entanglement summoning together probe the limits and possibilities of entanglement structure in many-body quantum states. 
Entanglement sharing schemes address which patterns of entanglement can be realized in quantum states; entanglement summoning extends this to a dynamical setting. 

%%%%%%%%%%%%%%%%%%%%%%%%%%%%%%%%%%%%%%%%%%%%%%%%%%%%%%%%%%%%%%%%%%%%%%%%
\subsection*{Acknowledgements}
%%%%%%%%%%%%%%%%%%%%%%%%%%%%%%%%%%%%%%%%%%%%%%%%%%%%%%%%%%%%%%%%%%%%%%%%

AM, LB, and SM acknowledge the support of the Natural Sciences and Engineering Research Council of Canada (NSERC); this work was supported by an NSERC-UKRI Alliance grant (ALLRP 597823-24). We thank Adrian Kent for helpful discussions and comments.
Research at the Perimeter Institute is supported by the Government of Canada through the Department of Innovation, Science and Industry Canada and by the Province of Ontario through the Ministry of Colleges and Universities.

%%%%%%%%%%%%%%%%%%%%%%%%%%%%%%%%%%%%%%%%%%%%%%%%%%%%%%%%%%%%
\section{Background}
%%%%%%%%%%%%%%%%%%%%%%%%%%%%%%%%%%%%%%%%%%%%%%%%%%%%%%%%%%%%

%%%%%%%%%%%%%%%%%%%%%%%%%%%%%%%%%%%%%%%%%%%%%%%%%%%%%%%%%%%%
\subsection{Entanglement summoning tasks}
%%%%%%%%%%%%%%%%%%%%%%%%%%%%%%%%%%%%%%%%%%%%%%%%%%%%%%%%%%%%

The entanglement summoning scenario captures a wide variety of settings where entanglement needs to be distributed between distant parties. 
In particular, it explores how causal constraints allow or hinder entanglement preparation. 
We give a general definition of this scenario first, before describing a concrete setting where it arises. 
\begin{definition}\label{def:entanglement-summoning}
    An \textbf{entanglement summoning} task involves two players, Alice and Bob, and is defined by a directed graph ${G}_C$ known as the causal graph. 
    The graph specifies a set of communication channels that can be used in one simultaneous round to complete the task. 
    
    Before the task begins, Alice may distribute an arbitrary entangled state with shares held at each vertex. 
    To begin the task, at each vertex $i$, Bob provides Alice a single bit $b_i\in\{0,1\}$. 
    Alice is promised that exactly two of the input bits, $b_{j^*}$ and $b_{k^*}$, take the value $1$. 
    Alice can act on her locally-held systems at each vertex, use the communication channel defined by the graph, and then act locally again. 
    Alice succeeds in the task if she outputs the two subsystems of a fixed maximally entangled state, $\ket{\Psi^+}_{AB}$, at vertices $j^*$ and $k^*$.
\end{definition}

One setting where the entanglement summoning scenario arises is in a spacetime context, which we describe next. 
To capture the spacetime setting, we begin with a spacetime manifold $\mathcal{M}$. 
The manifold is a set of points and comes equipped with a metric, which allows us to define light cones and a notion of causality. 
In particular, if it is possible to travel from point $p$ to point $q$ without moving faster than the speed of light, we write $p\rightarrow q$. 
Given a point $p$, we define the sets
\begin{align}
    J^+(p) = \{q \in \mathcal{M}: p\rightarrow q\},\nonumber \\
    J^-(p) = \{q \in \mathcal{M}: q\rightarrow p\}.
\end{align}
$J^+(p)$ is referred to as the causal future of $p$; $J^-(p)$ is referred to as the causal past of $p$. 

To realize an entanglement summoning scenario in this spacetime setting, we suppose requests can be given at one set of spacetime points, and that shares from entangled states must be returned at another set of associated spacetime points. 
Concretely, consider a set of pairs of points $\{(c_i,r_i)\}_i$. 
At $c_i$, a \emph{call}, consisting of a bit $b_i\in\{0,1\}$, is given at $c_i$ to indicate whether or not a share of the returned maximally entangled state should be brought to the corresponding point $r_i$. 
Alice is guaranteed that exactly two of the calls, labelled $b_{j^*}$ and $b_{k^*}$, will take the value $1$, while the remaining calls will be $0$. 

Given $c_i$ and $r_i$, we define the casual diamond
\begin{align}
    D_i=J^+(c_i) \cap J^-(r_i).
\end{align}
This is the set of points that can see the call given at $c_i$, and influence the output at $r_i$. 
It turns out to be convenient to describe spacetime entanglement summoning scenarios in terms of their causal diamonds rather than as pairs of points $(c_i,r_i)$. 
We give the following definitions related to causal diamonds. 
\begin{definition}
    For two distinct causal diamonds, $D_i$ and $D_j$, we denote
    \begin{equation}
        D_i\to D_j
    \end{equation}
    \noindent if there exists a causal curve from a point in $D_i$ to a point in $D_j$. Moreover, we denote $D_i\stackrel{!}{\to} D_j$ to mean that $D_i\to D_j$ but $D_i\not\leftarrow D_j$. 
\end{definition}

\begin{definition}
    For two distinct causal diamonds $D_i$ and $D_j$ we denote
    \begin{equation}
        D_i\not\sim D_j    
    \end{equation}
    \noindent iff $D_i\not\to D_j$ and $D_i\not\leftarrow D_j$.
\end{definition}

From these notions of causal connections among diamonds, we can specify a causal graph. 
Given an entanglement summoning problem with diamonds $\{D_i\}_i$, we consider a graph ${G}$ with a vertex for each diamond. 
We then include a directed edge $(D_i,D_j)$ between vertices in ${G}$ whenever $D_i \rightarrow D_j$. 
Considering the set of strategies available in the spacetime setting, we can see that they are exactly captured by the notion of entanglement summoning given in definition \ref{def:entanglement-summoning}. 
In making this connection, we identify the graph describing the causal connections among diamonds $D_i$ as the causal graph ${G}_C$ appearing in definition \ref{def:entanglement-summoning}. 

Earlier work \cite{adlam2018relativistic,dolev2021distributing} identified the task of entanglement summoning generally with its spacetime realization, whereas we emphasize the somewhat abstract viewpoint of defining the problem in terms of a causal graph. 
Note that \cite{adlam2018relativistic, dolev2021distributing} define a number of variants of entanglement summoning. 
Our definition corresponds to unassisted entanglement summoning in their language.\footnote{It would also be interesting to attempt to characterize the assisted case using the strategies used here, but we leave this to future work. We believe doing so would require understanding variants of entanglement sharing schemes.}

Before continuing, we review some existing results on entanglement summoning. 
The first is a necessary condition on causal graphs for the task to be achievable. 
\begin{lemma}\label{lemma:no-two-out}
    \textbf{(No two-out)} Any entanglement summoning task corresponding to the ``two-out'' causal graph as in figure \ref{fig:no-two-out} or any spanning subgraph thereof is unachievable.
\end{lemma}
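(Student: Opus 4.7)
The plan is to prove infeasibility of the two-out graph itself by a monogamy-style argument, and then handle all spanning subgraphs for free: any protocol for a spanning subgraph is also a valid protocol on the two-out graph (since dropping edges only restricts communication), so impossibility transfers immediately. I will work with the two-out graph interpreted as vertices $D_1, D_2, D_3$ with edges $D_1 \to D_2$ and $D_1 \to D_3$, and no other causal connections; in particular $D_2 \not\sim D_3$.

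Suppose for contradiction that a summoning protocol succeeds, and compare two call patterns that $D_1$ cannot distinguish: $(b_1,b_2,b_3) = (1,1,0)$ and $(1,0,1)$. Since $D_1$ receives no messages and sees only $b_1=1$ in both patterns, its local operations and its outgoing quantum messages are identical, so the global state right after $D_1$'s round is a single pure state $|\Phi\rangle_{ABC}$, where $A$ is $D_1$'s retained register (which we may take to coincide with the output $O_1$ by absorbing $D_1$'s second local operation), $B$ is $D_2$'s register consisting of the initial share plus the incoming message, and $C$ is the analogous register at $D_3$. Success in pattern $(1,1,0)$ requires an isometry $V_2^{(1)}$ on $B$ to yield a pure state $|\Psi^+\rangle_{O_1 O_2}$ on the $O_1 O_2$ marginal, and by purity this forces the stronger factorization $(I_A \otimes V_2^{(1)} \otimes I_C)|\Phi\rangle_{ABC} = |\Psi^+\rangle_{O_1 O_2} \otimes |\eta\rangle_{J_2 C}$ for some pure $|\eta\rangle$ on the junk register $J_2$ and $C$. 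Inverting $V_2^{(1)}$ on its image and expanding $|\eta\rangle$ in its Schmidt decomposition then gives, by direct computation, $\rho_{AC} = \mathrm{Tr}_B |\Phi\rangle\langle\Phi| = \tfrac{I_A}{d} \otimes \rho_C$. But pattern $(1,0,1)$ demands that some isometry on $C$ alone distill $|\Psi^+\rangle_{O_1 O_3}$ from $\rho_{AC}$, and no local operation on $C$ can create correlation between $A$ and $C$ out of a product state. This contradicts the assumption and completes the argument for the two-out graph.

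The main obstacle is the factorization step in which I read off the product structure of $\rho_{AC}$ from the success condition in pattern $(1,1,0)$. The content of this step is exactly the monogamy of maximal entanglement: if $A$ is maximally entangled with a subfactor of $B$ then $A$ must be uncorrelated with $C$. I would execute it cleanly by first using the fact that an isometry preserves purity, so the pure output $|\Psi^+\rangle_{O_1 O_2}$ on $O_1 O_2$ forces pure-times-pure tensor structure globally, and then using the orthonormality of $V_2^{(1)}$'s image to compute the $AC$-marginal in closed form. A minor but worth-stating technical point is that without loss of generality all of Alice's local operations may be taken to be isometries via Stinespring dilation, so that the two-pattern comparison is meaningful at the level of pure states.
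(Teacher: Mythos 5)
The paper does not prove Lemma~\ref{lemma:no-two-out} itself but defers to \cite{adlam2018relativistic}, and your self-contained argument is correct and is essentially the standard monogamy-of-entanglement proof given there: the two-out vertex cannot distinguish the call patterns $(1,1,0)$ and $(1,0,1)$, success in the first pattern forces its output register to be in a product state with the third vertex's systems, and no local channel at the third vertex can then produce the required maximal entanglement in the second pattern. Your reduction of spanning subgraphs to the full two-out graph (removing edges only restricts the protocol) is also handled correctly; the only cosmetic omission is stating explicitly that the pre-shared resource state may be taken pure, with the purifying system handed to one party, so that the two-pattern comparison really does take place at the level of a single pure state.
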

This was proven in \cite{adlam2018relativistic}.

\begin{figure}
    \centering
    \begin{tikzpicture}[edge/.style={black, thick, midarrow}]

      % place the nodes
      \node[label=above left:$D_3$] (D3) {};
      \node[label=above right:$D_2$] (D2) at (3,0) {};
      \node[label=below:$D_1$] (D1) at (1.5,-2) {};

      \draw[fill=black] (D1) circle (0.1);
      \draw[fill=black] (D2) circle (0.1);
      \draw[fill=black] (D3) circle (0.1);
    
      % draw the arrows (midpoint arrows)
      \draw[edge] (D2) -- (D3);
      \draw[edge] (D2) -- (D1);
    
    \end{tikzpicture}
    \caption{A causal graph corresponding to a ``two-out'' scenario.}
    \label{fig:no-two-out}
\end{figure}
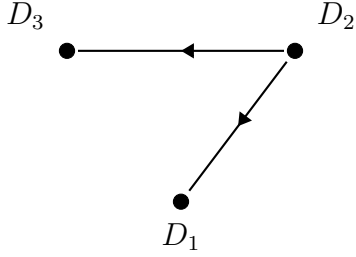

In \cite{dolev2021distributing}, the following partial characterization of entanglement summoning tasks was given. 
\begin{theorem}\label{thm:oriented-entanglement-summoning}
    \textbf{(Oriented entanglement summoning)} An entanglement summoning task whose causal graph $G_{C}$ is an oriented graph\footnote{By oriented we mean that for every pair of vertices, at most one of $(v_i,v_j)$ or $(v_j,v_i)$ is included.} is achievable if and only if for every $D_j\in\mathcal{D}$ the subgraph $G_{\mathcal{S}_j}$ induced by the subset
    \begin{equation}
        \mathcal{S}_j = \{D_i\in \mathcal{D} \,|\, D_i\not\to D_j\}
    \end{equation}
    \noindent is a tournament.\footnote{A tournament is an oriented graph where for every pair of vertices $i,j$, exactly one of $(v_i,v_j)$ or $(v_j,v_i)$ is included.}
\end{theorem}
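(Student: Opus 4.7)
The plan is to prove both directions of Theorem~\ref{thm:oriented-entanglement-summoning}.

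For necessity, I argue the contrapositive. Suppose $\mathcal{S}_j$ fails to be a tournament for some $D_j$, so there exist $D_i, D_k \in \mathcal{S}_j$ with $D_i \not\sim D_k$. Consider running Alice's supposed protocol on the (promise-violating) call configuration $b_i = b_j = b_k = 1$ with all other calls zero. Because $D_i \not\to D_j$ and $D_k \not\to D_j$, the calls $b_i$ and $b_k$ never reach $D_j$; and because $D_i \not\sim D_k$, the call $b_i$ never reaches $D_k$ and the call $b_k$ never reaches $D_i$. Therefore the output at $D_j$ is identical to the output under the allowed request $(b_i = b_j = 1, \text{others } 0)$ and also to the output under $(b_k = b_j = 1, \text{others } 0)$; the output at $D_i$ is identical to its value under $(b_i = b_j = 1, \text{others } 0)$; and the output at $D_k$ is identical to its value under $(b_k = b_j = 1, \text{others } 0)$. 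A valid protocol would therefore force $A_i A_j = |\Psi^+\rangle$ and $A_k A_j = |\Psi^+\rangle$ to hold simultaneously in this hypothetical run, contradicting the monogamy of entanglement. This generalises the no-two-out obstruction of Lemma~\ref{lemma:no-two-out}.

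For sufficiency, assume $\mathcal{S}_j$ is a tournament for every $D_j$. This condition is equivalent to the statement that, for every spacelike pair $(D_p, D_q)$, every other vertex $D_r$ satisfies $D_p \to D_r$ or $D_q \to D_r$. I would construct an explicit protocol: for each spacelike pair $(D_p, D_q)$, Alice pre-distributes a maximally entangled $|\Psi^+\rangle$ with one half at $D_p$ and the other at $D_q$; for each directed edge $D_p \to D_q$, Alice handles the corresponding request by preparing $|\Psi^+\rangle$ on the fly at $D_p$ and transmitting one half along the edge. Vertices whose calls are zero forward their pre-shared halves along outgoing edges, where they can be absorbed at a non-called vertex in the out-neighbourhood of one of the two called locations; the existence of such an absorbing vertex is guaranteed by the coverage property derived from the tournament condition.

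The main obstacle is establishing that these local routing decisions---each made using only the causal past information at each vertex---combine consistently so that, under every valid request, the two called vertices jointly output exactly a single clean $|\Psi^+\rangle$ with no residual correlations to the rest of the protocol. The hardest subcase is when both called vertices are spacelike and each holds multiple pre-shared halves corresponding to different spacelike partners, so that the ``correct'' partner is ambiguous from local data. Carrying out the sufficiency cleanly will likely require either an induction on the number of vertices, peeling off a vertex whose removal preserves the tournament condition, or reformulating the pre-distributed resource as a quantum secret-sharing code whose decoding sets are the $\mathcal{T}_j$, with the tournament condition serving as precisely the combinatorial property ensuring that such a code exists.
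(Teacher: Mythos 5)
First, note that the paper does not prove Theorem~\ref{thm:oriented-entanglement-summoning}: it is imported as background from \cite{dolev2021distributing}, so there is no in-paper proof to compare against. Judged on its own, your necessity argument is essentially correct and is the standard one. Two small points of care: (i) you need $D_i,D_k\neq D_j$, which holds under the usual convention that $D_j\to D_j$ so that $D_j\notin\mathcal{S}_j$; (ii) monogamy requires the \emph{joint} state of the systems at $r_i$ and $r_j$ in the three-call run to coincide with that in the $(b_i=b_j=1)$ run, not merely the two marginals separately. Your causality reasoning does deliver the joint statement (since $b_k$ only influences operations at $D_k$ and systems sent along $D_k$'s outgoing edges, none of which reach $r_i$ or $r_j$ in a single round), but the write-up should say so explicitly.

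The genuine gap is the sufficiency direction, which you have sketched but not proven, and which is the nontrivial content of the theorem. The difficulty you flag is real and is not a technicality: when $D_p$ is called and its partner does not lie in its causal past, $D_p$ cannot determine from local data whether its partner is an out-neighbour or a spacelike diamond, nor which one, so it cannot decide which of its many held shares to designate as output. Pre-distributing a dedicated Bell pair per spacelike pair does not resolve this, because the called vertex must commit to a single output register without knowing its partner's identity; this is exactly the ``unknown partner'' problem that the rest of the paper formalizes via entanglement sharing schemes. Your closing suggestion (encode the resource as a sharing/secret-sharing structure whose decoding sets are determined by the tournament condition) is the right idea --- indeed the present paper's Theorem~\ref{thm:mainsufficiency} and Appendix~B show that the tournament condition implies the conditions under which such a scheme exists --- but as written the construction, the consistency of the local routing decisions, and the use of the tournament hypothesis are all asserted rather than established. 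As it stands the proposal proves only the ``only if'' half of the theorem.
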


%%%%%%%%%%%%%%%%%%%%%%%%%%%%%%%%%%%%%%%%%%%%%%%%%%%%%%%%%%%%
\subsection{Entanglement sharing schemes}
%%%%%%%%%%%%%%%%%%%%%%%%%%%%%%%%%%%%%%%%%%%%%%%%%%%%%%%%%%%%

Entanglement sharing was introduced in \cite{Khanian2025}. 
The task of entanglement sharing involves a dealer who prepares a state $|\Psi\rangle_{S_1...S_n}$. 
The dealer distributes the shares $S_1S_2...S_n$, with share $S_i$ given to party $i$. 
The goal is for specified sets of parties to be able to prepare a maximally entangled state by operations local to their subsystems, while other sets of parties should not be able to. 
To make this precise, we introduce the following definition. 
\begin{definition}
    A \textbf{pair access structure} $\mathcal{S}=(\mathcal{A}, \mathcal{U})$ on $n$ parties consists of a set of authorized pairs $\mathcal{A}=\{A_k\}_k$, and a set of unauthorized pairs $\mathcal{U}=\{U_\ell\}_\ell$.
    Each $A_k$ or $U_\ell$ is a pairing of two subsets, $\{T_i, T_j\}$ with $T_i$ and $T_j$ subsets of $\{S_i\}_{i=1}^n$. We require that any pairing $\{T_i,T_j\}$ occurring as an authorized or unauthorized set be disjoint, $T_i\cap T_j =\emptyset$.
\end{definition}
We require that the pairs $\{T_i,T_j\}\in \mathcal{A}$ can be used to recover a maximally entangled state, while pairs $\{T_i,T_j\}\in \mathcal{U}$ cannot. 
This is made more precise below. 

Entanglement sharing schemes come in two variants: given a pairing $\{T_i,T_j\}$, we can tell the party holding the set $T_i$ that they are trying to entangle with a partner holding $T_j$, or we can withhold this information. 
We refer to these cases as entanglement sharing with a known partner and entanglement sharing with an unknown partner, respectively. 
In this work, we will only make use of results on entanglement sharing schemes with an unknown partner. 
We define this more carefully next. 
\begin{definition}
    We say a state $\Psi_{S_1...S_n}$ is an \textbf{entanglement sharing scheme (ESS)} with an unknown partner and with access structure $\mathcal{S}$ if the following hold. 
    \begin{enumerate}
        \item There exists a family of channels $\{\mathcal{N}_{T_i\rightarrow X_i}\}$ such that for each $A_{k}=\{T_i,T_j\}\in \mathcal{A}$, 
        \begin{align}
            \mathcal{N}_{T_iT_j\rightarrow X_iX_j}(\Psi_{T_iT_j})= \Psi^+_{X_iX_j}
        \end{align}
        where we define $\mathcal{N}_{T_iT_j\rightarrow X_iX_j}=\mathcal{N}_{T_i\rightarrow X_i}\otimes \mathcal{N}_{T_j\rightarrow X_j}$.\footnote{By $\Psi^+_{AB}$ we mean the maximally entangled state on $AB$.}
        \item For each $U_k=\{T_i,T_j\}\in \mathcal{U}$, all possible channels $\mathcal{M}^{k}_{T_iT_j\rightarrow X_iX_j} = \mathcal{M}^k_{T_i\rightarrow X_i}\otimes \mathcal{M}^k_{T_j\rightarrow X_j}$ we have
        \begin{align}
            F(\mathcal{M}^{k}_{T_iT_j\rightarrow X_iX_j}(\Psi_{T_iT_j}),\Psi^+_{X_iX_j}) \leq 1-\delta.
        \end{align}
        for $\delta>0$.
    \end{enumerate} 
\end{definition}
We will use results from \cite{Khanian2025} that characterize when an entanglement sharing scheme can be achieved. 
The simplest way to understand this characterization is in terms of the following object. 
\begin{definition}
    The \textit{access-pair graph} associated to an entanglement sharing scheme $|\Psi\rangle_{S_1S_2...S_n}$ equipped with $\mathcal{C} = (\mathcal{A,U})$ is defined as follows. 
    Form the undirected graph $G_\mathcal{A}$ by letting each $T_i\subseteq \{S_i\}_{i=1}^n$ that occurs in $\mathcal{A}$ be a vertex, and each $\{T_i, T_j\}\in\mathcal{A}$ be an edge between vertices $T_i$ and $T_j$.
\end{definition}

We next state the following characterization of entanglement sharing schemes, which covers all cases where there are no unauthorized pairs, $\mathcal{U}=\emptyset$. 
\begin{theorem}\label{thm:ESSwithunknownpartner}
A pair access structure $\mathcal{S}=(\mathcal{A}, \varnothing)$ can be realized as an entanglement sharing scheme with an unknown partner if and only if the following condition holds. 
\begin{itemize}
    \item \textbf{Monogamy:} Suppose there is an even length path $\{T_1,T_2\}$, $\{T_2,T_3\}$, ...,$\{T_{k-1},T_k\}$ in $G_\mathcal{A}$. Then $T_1\cap T_k\neq \emptyset$. 
\end{itemize}
\end{theorem}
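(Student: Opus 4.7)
The proof naturally splits into necessity and sufficiency.

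For necessity, my plan is to derive a contradiction via entanglement monogamy whenever the monogamy condition fails. I would begin with the base case of an even walk of length $2$, namely $T_1-T_2-T_3$ with $T_1\cap T_3=\emptyset$. The three sets are then pairwise disjoint (the consecutive intersections are empty by authorization), so the channels $\mathcal{N}_{T_1},\mathcal{N}_{T_2},\mathcal{N}_{T_3}$ can be applied simultaneously to the dealer's pure state $|\Psi\rangle$. The $\{T_1,T_2\}$-authorized condition $\sigma_{X_1X_2}=\Psi^+$, together with the purity of $\Psi^+$, forces $\sigma_{X_3}$ to decouple from $\sigma_{X_1X_2}$; the $\{T_2,T_3\}$-authorized condition $\sigma_{X_2X_3}=\Psi^+$ is then contradicted. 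For longer even walks, I would extend the argument by tracking how the fixed channel $\mathcal{N}_{T_{2j}}$ produces an output $X_{2j}$ whose maximally entangled partner must, by purity, lie inside both $T_{2j-1}$ and $T_{2j+1}$. Iterating this along the walk, and carefully identifying these partner subsystems via Stinespring dilations, one chains the identifications to force a common subsystem into $T_1\cap T_k$.

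For sufficiency, my first observation is that the monogamy condition forces $G_\mathcal{A}$ to be bipartite: an odd cycle of length $2m+1$ yields a walk of even length $2m$ between two vertices that are also directly joined by an edge in $\mathcal{A}$, simultaneously demanding intersection (by monogamy) and disjointness (by authorization of the edge). So it suffices to construct an ESS when $G_\mathcal{A}$ is bipartite. In each bipartite component with parts $P,Q$, monogamy applied to the even walks that always exist within one part forces pairwise intersections within $P$ and within $Q$. Whenever these pairwise-intersecting families admit common members $s_P\in\bigcap_{T\in P}T$ and $s_Q\in\bigcap_{T'\in Q}T'$, a single EPR pair $|\Psi^+\rangle_{s_P s_Q}$, extracted by each channel as a projection onto its common member, realizes the ESS for that component; tensoring over components gives the full scheme.

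The main obstacle I anticipate is in the sufficiency direction, specifically when the pairwise intersections within a part fail to produce a common element (a Helly-type failure). Handling this will likely require a more elaborate construction, such as assigning EPR pairs between each pair of intersecting subsystems with the dealer producing a suitable superposition or stabilizer-style state, while still respecting the unknown-partner constraint that each $\mathcal{N}_T$ is a single fixed channel. Designing and verifying such a construction, and checking that the necessity argument above really does chain cleanly from the length-$2$ base case to arbitrary even length, is where I expect the technical weight of the proof to lie.
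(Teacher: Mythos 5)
First, note that the paper does not prove this theorem at all: it is imported from \cite{Khanian2025} (the text immediately after the statement says ``This theorem is proven in \cite{Khanian2025}''), and the only nearby argument the paper supplies is lemma \ref{lemma:monogamygivesNOC} (monogamy implies no odd cycles), which your sufficiency step correctly reproduces. Judging your sketch on its own terms, the necessity argument is sound for the length-two base case but the chaining you propose for longer even paths has a genuine gap. The step you need is ``the purification of $X_{2j}$ is recoverable from $T_{2j-1}$ and from $T_{2j+1}$, hence can be localized to $T_{2j-1}\cap T_{2j+1}$,'' and that implication is false: encode the purifying system of $X_{2j}$ in a $((2,3))$ threshold quantum secret sharing scheme with shares $s_1,s_2,s_3$ and put $s_1,s_2\in T_{2j-1}$ and $s_2,s_3\in T_{2j+1}$; the purification is then recoverable from each set but not from their intersection $\{s_2\}$. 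Recoverability from two overlapping sets only yields non-disjointness (which is all the base case needs), not localization, so ``iterating along the walk'' does not chain. Already for a path of length four you cannot apply the channels at $T_1$, $T_3$, $T_5$ simultaneously, since consecutive odd-indexed sets must intersect by the base case, and a genuinely different argument is needed to relate $T_1$ to $T_5$.

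Second, your sufficiency construction is incomplete exactly where you suspect, and the missing tool is quantum secret sharing rather than an ad hoc ``stabilizer-style superposition.'' Monogamy does force pairwise intersection within each side $P$, $Q$ of each bipartite component, but pairwise intersection does not give a common share ($T_1\cap T_3$, $T_3\cap T_5$, $T_1\cap T_5$ can be three distinct singletons), so the single-EPR-pair-on-a-common-member scheme fails. The natural repair is: for each connected component take one maximally entangled pair $\bar A\bar B$, encode $\bar A$ in a quantum secret sharing (erasure-correcting) code whose authorized sets are exactly the members of $P$ --- realizable precisely because no two authorized sets are disjoint, by the Gottesman/Cleve--Gottesman--Lo characterization --- and likewise encode $\bar B$ over $Q$; each $T$ then applies its fixed decoding map, which respects the unknown-partner constraint. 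This is in the spirit of the paper's own remark that single-system summoning is solved via secret sharing. As it stands, your proposal establishes a correct base case and a correct reduction to bipartite graphs, but both the general-$k$ necessity argument and the general sufficiency construction remain to be supplied.
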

This theorem is proven in \cite{Khanian2025}, though we have stated this theorem somewhat differently than is done there. 
In \cite{Khanian2025}, they give a second condition that $G_\mathcal{A}$ have no odd cycles. 
This is actually implied by the monogamy condition as long as we impose $T_i\cap T_j=\emptyset$ whenever $\{T_i,T_j\}\in \mathcal{A}$, which we require for the pairing $\{T_i,T_j\}$ to be valid. 
We record this as the next lemma. 
\begin{lemma}\label{lemma:monogamygivesNOC}
    For valid sets of access pairs, the monogamy condition implies that there are no odd cycles in $G_\mathcal{A}$.
\end{lemma}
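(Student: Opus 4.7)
The plan is to argue by contradiction: assume an odd cycle exists in $G_\mathcal{A}$, then produce a path to which the monogamy condition applies, and show that its conclusion directly contradicts the validity requirement on the edge we removed.

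Concretely, suppose $G_\mathcal{A}$ contains an odd cycle, i.e.\ a sequence of vertices $T_1, T_2, \ldots, T_{2m+1}$ with $\{T_i, T_{i+1}\} \in \mathcal{A}$ for $1 \leq i \leq 2m$ and $\{T_{2m+1}, T_1\} \in \mathcal{A}$. Deleting the single edge $\{T_{2m+1}, T_1\}$ leaves the walk $\{T_1,T_2\}, \{T_2,T_3\}, \ldots, \{T_{2m},T_{2m+1}\}$, which is a path of length $2m$, hence of even length. The monogamy hypothesis then forces $T_1 \cap T_{2m+1} \neq \emptyset$. On the other hand, the deleted edge $\{T_{2m+1},T_1\}$ belongs to $\mathcal{A}$, so validity of the pair access structure (the requirement $T_i \cap T_j = \emptyset$ whenever $\{T_i,T_j\} \in \mathcal{A}$) gives $T_1 \cap T_{2m+1} = \emptyset$. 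This is the desired contradiction, so no odd cycle can exist.

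One subtle point to check is that the path obtained by deleting the edge is genuinely a path in the sense required by the monogamy condition (i.e.\ a sequence of consecutive edges in $\mathcal{A}$), rather than an induced subgraph or a simple path in the graph-theoretic sense. Since the cycle's vertices are distinct and all the listed edges lie in $\mathcal{A}$, this is immediate. Because the argument is essentially a one-line parity check against the validity constraint, I do not anticipate a substantive obstacle; the only care needed is in bookkeeping the parity conventions ("even length" meaning an even number of edges) used in the statement of monogamy.
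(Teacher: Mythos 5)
Your proof is correct and takes essentially the same approach as the paper: delete one edge of the odd cycle to obtain an even-length path, apply monogamy to conclude the endpoints intersect, and contradict the disjointness $T_1\cap T_{2m+1}=\emptyset$ required for the deleted edge to be a valid authorized pair. No issues.
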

\begin{proof}
    We proceed by contradiction. 
    Suppose that $G_\mathcal{A}$ satisfies monogamy, but also contains an odd cycle, $\{T_{i_1}, T_{i_2}\},...,\{T_{i_{k-1}},T_{i_{k}}\},\{T_{i_{k}}, T_{i_1}\}$ for some $\{i_1, i_2,...,i_k\}\subseteq\{1,...,n\}$, where $k$ is odd. 
    Now begin with this cycle and remove $\{T_{i_k}, T_{i_1}\}$. 
    Since \{$T_{i_k},T_{i_1}$\} is an edge in $G_\mathcal{A}$, we have that $T_{i_k}\cap T_{i_1}=\emptyset$. 
    But this means we have found an even-length path for which the start and end points have no intersection, which violates monogamy, and is hence a contradiction. 
\end{proof}

Note that the no odd cycles condition does not imply the monogamy condition. 

%%%%%%%%%%%%%%%%%%%%%%%%%%%%%%%%%%%%%%%%%%%%%%%%%%%%%%%%%%%%
\subsection{Graph theory results and terminology}
%%%%%%%%%%%%%%%%%%%%%%%%%%%%%%%%%%%%%%%%%%%%%%%%%%%%%%%%%%%%

Throughout this work, we will use some language and results from graph theory.
We introduce some aspects of graph theory here, beginning with the definition of a graph. 
\begin{definition}
    A \textbf{directed graph} $G=(V,E)$ is a collection of vertices $V=\{v_1,...,v_n\}$, and collection of edges $E$ which are ordered pairings of vertices, e.g. $(v_i,v_j)$. 
    A pair of vertices $v_i$, $v_j$ are said to be connected if either $(v_i,v_j)\in E$ or $(v_j,v_i)\in E$. 
\end{definition}
Next we introduce the following terminology. 
\begin{definition}
    A \textbf{tournament} is a directed graph with exactly one edge between every pair of vertices. 
\end{definition}
\begin{definition}
    A \textbf{spanning subgraph} of a graph $G$ is a subgraph of $G$ that includes all of the vertices of $G$ (and a subset of the edges). 
\end{definition}
\begin{definition}
    A \textbf{path} in a directed graph is a collection of edges $(v_{i_1},v_{i_2}),(v_{i_2},v_{i_3}),..., \\ (v_{i_{k-1}},v_{i_k})$, where the second vertex of the $j$th edge is the same as the first vertex of the $j+1$st edge. 
\end{definition}
\begin{definition}
    A \textbf{cycle} in a directed graph is a path with the same starting and ending vertex.
\end{definition}

An undirected graph is analogous to a directed graph, but the edges are now unordered pairings, e.g. $\{v_i,v_j\}$. 
We next introduce a few notions relevant to undirected graphs.
\begin{definition}
    A \textbf{clique} of an undirected graph $G$ is a subset of vertices $S$ such that every pair of vertices in $S$ are connected. 
\end{definition}

\begin{definition}
    A \textbf{bipartite} graph is a graph $G$ which admits a decomposition of the vertices $V$ into $V_1$, $V_2$ such that $V=V_1\cup V_2$, $V_1\cap V_2=\emptyset$, and such that every edge in $G$ has one vertex in $V_1$ and one vertex in $V_2$. 
\end{definition}

\begin{definition}
    An undirected graph G=(V,E) is said to have a \textbf{two-clique partition} if its vertices can be divided into two disjoint sets, $K_1$ and $K_2$, such that $K_1\cup K_2=V$ each of which forms a clique.
\end{definition}

We will also need an adjacent notion of a clique in the case where we are concerned with directed graphs, which motivate the following definitions.

\begin{definition}
    Given a directed graph $G=(V,E)$, we will say that a subset $C\subseteq V$ forms a \textbf{quasi-clique} if for every pair of vertices $v_1,v_2\in C$, we have that at least one of $(v_1,v_2)$ or $(v_2,v_1)$ is in $E$.
\end{definition}

We will be interested in graphs that admit a two-quasi-clique partitioning. More specifically, we have the following definition.

\begin{definition}
    Given a directed graph $G=(V,E)$, we say that $G$ has a \textbf{two-quasi-clique partitioning} if there are two subsets, $K_1, K_2\subset V$ such that $K_1\cap K_2=\emptyset$, $K_1\cup K_2 = V$, and the induced subgraphs of $K_1$ and $K_2$ each form a quasi-clique.
    
\end{definition}

Next, we give a few basic results from graph theory. 
\begin{theorem}\label{thm:bipartite-is-noc}
    A graph is bipartite if and only if it does not contain any odd cycles. 
\end{theorem}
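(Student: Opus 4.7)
The plan is to prove the two directions separately, using the standard ``distance-parity'' colouring argument for the harder direction.

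For the forward direction, I would assume that $G=(V,E)$ is bipartite with bipartition $V=V_1\cup V_2$, and consider any cycle $v_{i_1},v_{i_2},\ldots,v_{i_k},v_{i_1}$ in $G$. Because every edge of $G$ has one endpoint in $V_1$ and one in $V_2$, the vertices along the cycle must alternate between $V_1$ and $V_2$. Returning to the starting vertex requires an even number of alternations, so $k$ must be even, and $G$ has no odd cycles.

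For the converse, I would assume $G$ has no odd cycles and construct an explicit bipartition. Without loss of generality I would reduce to the connected case, handling each connected component independently (a disjoint union of bipartite graphs is bipartite). Pick any vertex $v_0$, and let $d(v_0,u)$ denote the length of a shortest path from $v_0$ to $u$. Define
\begin{equation}
    V_1 = \{u \in V : d(v_0,u) \text{ is even}\}, \qquad V_2 = \{u \in V : d(v_0,u) \text{ is odd}\}.
\end{equation}
Then $V_1\cap V_2=\emptyset$ and $V_1\cup V_2=V$ since the graph is connected. It remains to check that no edge has both endpoints in $V_1$ or both in $V_2$.

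The main step, and the one that requires a bit of care, is showing that every edge crosses the partition. Suppose for contradiction that there is an edge $\{u,w\}$ with $u,w$ both in $V_1$ (the $V_2$ case is analogous). Let $P_u$ be a shortest $v_0$--$u$ path and $P_w$ a shortest $v_0$--$w$ path; both have even length by construction. Let $x$ be the last vertex common to $P_u$ and $P_w$ (which exists because they share $v_0$). The portions of $P_u$ and $P_w$ from $x$ to $u$ and from $x$ to $w$ respectively must each have the same length, for otherwise one could be replaced by a shorter subpath, contradicting minimality of $P_u$ or $P_w$. Concatenating the $x$-to-$u$ segment of $P_u$, the edge $\{u,w\}$, and the $w$-to-$x$ segment of $P_w$ in reverse produces a closed walk of odd length through $x$. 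A standard argument then extracts an odd cycle from this closed walk, contradicting the assumption that $G$ has no odd cycles. Hence the partition $(V_1,V_2)$ witnesses that $G$ is bipartite, completing the proof.
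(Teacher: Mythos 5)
The paper does not actually supply its own proof of this statement; it simply cites a graph theory textbook. Your distance-parity argument is the standard proof of K\H{o}nig's characterization, and both directions are structured correctly: the alternation argument for the forward direction is fine, and the reduction to connected components plus the partition by parity of $d(v_0,\cdot)$ is the right construction for the converse.

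There is, however, one genuinely false intermediate claim in your key step. You assert that the portions of $P_u$ and $P_w$ from $x$ to $u$ and from $x$ to $w$ ``must each have the same length,'' justified by minimality. This is not true and the justification does not apply: the two segments end at different vertices, so neither can be substituted for the other, and they can certainly have different lengths (take $x=v_0$ with $d(v_0,u)=2$ and $d(v_0,w)=4$). What your argument actually needs, and what is true, is that the two segments have the same \emph{parity}. This follows from two facts you did not state: first, the initial segments of $P_u$ and $P_w$ from $v_0$ to $x$ are both shortest $v_0$--$x$ paths (subpaths of shortest paths are shortest), hence both have length $d(v_0,x)$; second, $d(v_0,u)$ and $d(v_0,w)$ are both even by assumption, so the tail segments have lengths $d(v_0,u)-d(v_0,x)$ and $d(v_0,w)-d(v_0,x)$, which agree modulo $2$. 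The concatenated closed walk then has length
\begin{equation}
    \bigl(d(v_0,u)-d(v_0,x)\bigr) + 1 + \bigl(d(v_0,w)-d(v_0,x)\bigr),
\end{equation}
which is odd, and since $x$ is the last common vertex the two segments are internally disjoint, so this is in fact already an odd cycle (you do not even need the closed-walk-to-cycle extraction lemma). With that parity correction the proof is complete and matches the standard argument the paper's cited reference would give.
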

See \cite{diestel2025graphtheory} for a proof. 

\begin{theorem}\label{thm:cobipart-clique}
    A graph has a two-clique partition if and only if its complement is bipartite.
\end{theorem}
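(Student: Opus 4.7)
The plan is to exploit the standard complement duality between cliques and independent sets. For any graph $G=(V,E)$ with complement $\bar G=(V,\bar E)$, a subset $K\subseteq V$ is a clique in $G$ if and only if $K$ is an independent set in $\bar G$; this follows immediately from the definition of complement, since two distinct vertices are adjacent in $G$ precisely when they are non-adjacent in $\bar G$.

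With this observation in hand, I would split the proof into the two implications. For the forward direction, suppose $G$ admits a two-clique partition $V=K_1\sqcup K_2$ with $K_1,K_2$ cliques in $G$. By the duality, $K_1$ and $K_2$ are independent sets in $\bar G$. Since every edge of $\bar G$ has both endpoints in $V=K_1\sqcup K_2$ and neither $K_1$ nor $K_2$ contains both endpoints of any edge of $\bar G$, every edge of $\bar G$ crosses between $K_1$ and $K_2$. Hence $(K_1,K_2)$ is a bipartition of $\bar G$, so $\bar G$ is bipartite.

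For the reverse direction, suppose $\bar G$ is bipartite with vertex bipartition $V=V_1\sqcup V_2$, where $V_1,V_2$ are independent in $\bar G$. Applying the duality again, $V_1$ and $V_2$ are cliques in $G$, and since $V_1\cap V_2=\emptyset$ and $V_1\cup V_2=V$, we have a two-clique partition of $G$.

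There is no substantial obstacle here beyond bookkeeping; the only mild subtlety is to allow one of the parts $K_i$ or $V_i$ to be empty, which is harmless since the empty set is vacuously both a clique and an independent set, and the convention on bipartiteness in the excerpt (via Definition of bipartite graph) accommodates this case.
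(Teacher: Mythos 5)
Your proposal is correct and follows essentially the same route as the paper's proof: both rest on the observation that a set is a clique in $G$ exactly when it is independent in the complement, so a two-clique partition of $G$ corresponds precisely to a bipartition of $G^c$. Your write-up is simply a more explicit, two-direction version of the same argument, including the harmless edge case of an empty part.
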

\begin{proof}
    If the complement of a graph $G = (V,E)$ is bipartite, $G^c = (V, E^c)$ can be divided into two sets of vertices, $V_1$ and $V_2$, such that $V_1 \cup V_2 = V$ and there are no edges connecting the vertices in each set by definition. This occurs if and only if the vertices in $V_1$ and $V_2$ are each separately completely connected in $G$, meaning $G$ can be divided into two cliques.
\end{proof}

We also present a corresponding theorem for directed graphs.

\begin{theorem}\label{thm:two-quasi-clique-partition}
    A directed graph has a two-quasi-clique partitioning if and only if its undirected complement\footnote{By the undirected complement of $G$, we mean take the complement of $G$ and then remove the directionality of the edges to obtain an undirected graph.} is bipartite.
\end{theorem}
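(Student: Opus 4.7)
The plan is to mirror the proof of Theorem \ref{thm:cobipart-clique} for undirected graphs, with the single twist that ``connectedness'' in a directed graph (i.e., having at least one of the two orientations of an edge present) plays the role that ``being adjacent'' played in the undirected case. The key observation is definitional: taking the undirected complement $\bar{G}$ of a directed graph $G = (V,E)$ means that $\{v_i,v_j\}$ is an edge of $\bar{G}$ exactly when $v_i$ and $v_j$ are \emph{not} connected in $G$ (neither $(v_i,v_j)$ nor $(v_j,v_i)$ lies in $E$). Consequently, a subset $C \subseteq V$ is a quasi-clique in $G$ if and only if the induced subgraph of $\bar{G}$ on $C$ contains no edges.

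With that observation in hand, the equivalence follows by chasing the definitions in both directions. For the forward implication, I would start with a two-quasi-clique partition $V = K_1 \sqcup K_2$ of $G$; by the observation, $\bar{G}$ has no edges inside $K_1$ nor inside $K_2$, so every edge of $\bar{G}$ crosses the partition, which means $\bar{G}$ is bipartite with bipartition $(K_1,K_2)$. For the reverse implication, I would start from a bipartition $V = V_1 \sqcup V_2$ of $\bar{G}$; the absence of edges of $\bar{G}$ within $V_1$ or $V_2$ translates, via the observation, into the statement that every pair of vertices in $V_i$ is connected in $G$, so the induced subgraph on $V_i$ is a quasi-clique and $(V_1,V_2)$ is a two-quasi-clique partition of $G$.

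There is essentially no obstacle here beyond keeping the definitions straight, since the proof is a bookkeeping exercise that piggybacks on the undirected result. The only subtle point worth flagging explicitly in the write-up is the precise meaning of the undirected complement (already pinned down in the footnote on Theorem \ref{thm:two-quasi-clique-partition}), so that the reader sees why ``not adjacent in $\bar{G}$'' corresponds exactly to ``at least one directed edge present in $G$,'' which is the defining condition for a quasi-clique.
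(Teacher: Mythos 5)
Your proposal is correct and follows essentially the same definition-chasing route as the paper's proof: a two-quasi-clique partition of $G$ corresponds exactly to a partition of the undirected complement with no internal edges, i.e.\ a bipartition. Your explicit statement of the key observation (quasi-clique in $G$ $\Leftrightarrow$ edgeless induced subgraph in $\bar{G}$) is a slightly cleaner packaging of what the paper does in both directions, and your reading of ``undirected complement'' matches the one the paper actually uses elsewhere (e.g.\ in identifying the body of $G_{C\rightarrow\mathcal{A}}$ with $G_C^c$).
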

\begin{proof}
    We first assume the graph has a two-quasi-clique partitioning and show its undirected complement is bipartite. 
    Denote by $K_1$ and $K_2$ the vertex sets of the partitioning. 
    Then in the undirected complement of $G$, the vertices in $K_1$ and $K_2$ are completely disconnected. 
    The complement of $G$ is hence bipartite.
    
    Next, suppose that the undirected complement of $G$ is bipartite. 
    Then there exists a partition of the vertices in ${G}^{\bar{c}}$,\footnote{We use the $\bar{c}$ notation for the undirected complement.} call it $\{K_1, K_2\}$, such that all edges have one end in $K_1$ and the other in $K_2$. 
    Then in the complement of $G^{\bar{c}}$, the vertices of $K_1$ become all-to-all connected, so that $K_1$ is a clique in the complement of $G^{\bar{c}}$. 
    Similarly, $K_2$ is a clique in the complement of $G^{\bar{c}}$. 
    Since the edges in the complement of $G^{\bar{c}}$ are undirected but correspond to directed edges in $G$, the induced subgraphs of $K_1$ and $K_2$ are in general quasi-cliques.
\end{proof}

We will refer to graphs with an bipartite undirected complement as \textbf{co-bipartite.} 

%%%%%%%%%%%%%%%%%%%%%%%%%%%%%%%%%%%%%%%%%%%%%%%%%%%%%%%%%%%%
\section{Bidirected entanglement summoning}\label{sec:bidirected}
%%%%%%%%%%%%%%%%%%%%%%%%%%%%%%%%%%%%%%%%%%%%%%%%%%%%%%%%%%%%

Recall that the case of entanglement summoning with oriented causal graphs was solved in \cite{dolev2021distributing}. 
In this section, we take up another special case of the general entanglement summoning problem: we restrict our attention to graphs where every edge is bidirected. 
In other words, the edge $(v_i,v_j)$ is included in the causal graph iff $(v_j,v_i)$ is. 
This set of examples turns out to have a simple solution. 
Note that an example of our technique was already given in \cite{Khanian2025}, where it was applied to the pentagon graph shown in figure \ref{fig:sidebyside}. 
We extend the same technique in this section to general bidirected graphs.

%%%%%%%%%%%%%%%%%%%%%%%%%%%%%%%%%%%%%%%%%%%%%%%%%%%%%%%%%%%%
\subsection{Characterization in terms of the complement graph}
%%%%%%%%%%%%%%%%%%%%%%%%%%%%%%%%%%%%%%%%%%%%%%%%%%%%%%%%%%%%

\begin{figure}
  \centering
  %
  % First subfigure: Pentagon (unchanged)
  \begin{subfigure}{0.45\textwidth}
    \centering
    \begin{tikzpicture}[scale=0.9]
      \foreach \i in {1,...,5}{
        \coordinate (V\i) at ({90+72*(\i-1)}:2cm);
      }
        \draw[line width=1pt, midbidir]
        (V1) -- (V2);
        \draw[line width=1pt, midbidir]
        (V2) -- (V3);
        \draw[line width=1pt, midbidir]
        (V3) -- (V4);
        \draw[line width=1pt, midbidir]
        (V4) -- (V5);
        \draw[line width=1pt, midbidir]
        (V5) -- (V1);
        
      \foreach \i in {1,...,5}{
        \node at (V\i) {};
      }
    \node at ($(V1)+(0,0.3)$) {$D_1$};
      \node at ($(V5)+(0.3 ,0)$) {$D_2$};
      \node at ($(V4)+(0.3 ,0)$) {$D_3$};
      \node at ($(V3)+(-0.3 ,0)$) {$D_4$};
      \node at ($(V2)+(-0.3 ,0)$) {$D_5$};

    \end{tikzpicture}
  \end{subfigure}
  \quad
  \begin{subfigure}{0.45\textwidth}
    \centering
    \begin{tikzpicture}[scale=0.9]
      \foreach \i in {1,...,5}{
        \coordinate (V\i) at ({90+72*(\i-1)}:2cm);
      }
      
      \foreach \i/\j in {1/3,2/4,3/5,4/1,5/2}{
        \draw[dashed, blue] (V\i) -- (V\j);
      }
      \foreach \i in {1,...,5}{
        \node at (V\i) {};
      }
      \node at ($(V1)+(0,0.3)$) {$T_1$};
      \node at ($(V5)+(0.3 ,0)$) {$T_2$};
      \node at ($(V4)+(0.3 ,0)$) {$T_3$};
      \node at ($(V3)+(-0.3 ,0)$) {$T_4$};
      \node at ($(V2)+(-0.3 ,0)$) {$T_5$};
    \end{tikzpicture}
  \end{subfigure}
  \caption{On the left is a bidirected entanglement summoning tasks involving $5$ diamonds, and on the right is the corresponding access-pair graph as described by the construction in protocol \ref{protocol:bidirected}.}\label{fig:pentagon-bidirected}
  
\end{figure}
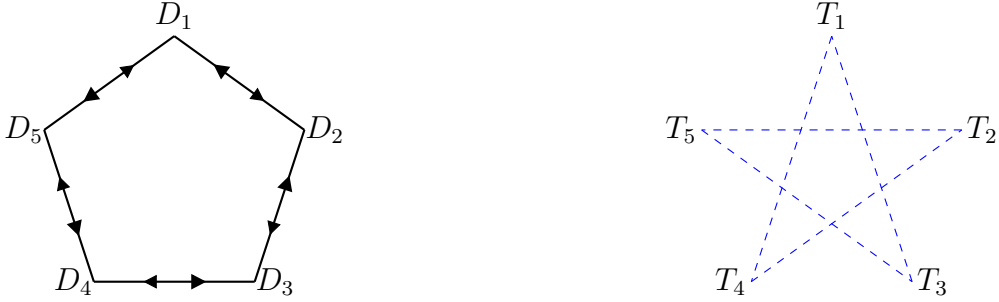

The first step in the characterization of the bidirectional case is to give a mapping between entanglement summoning and entanglement sharing.
This means we reduce from a question about the existence of a dynamical protocol to a question about the existence of a particular quantum state.
To do this, we describe a fully general protocol for completing summoning tasks with bidirected causal graphs. 

The first step in specifying a general protocol is to notice that we can ignore many of the possible inputs to the summoning task. 
In particular, we can distribute a dedicated maximally entangled state between every pair of connected diamonds. 
Then suppose there is a call to a pair of diamonds $D_i, D_j$ which are connected by an edge $\{D_i,D_j\}$.
The fact that this pair of calls has been made is apparent at both outputs $r_i$ and $r_j$, so we can discard any other systems used and return the dedicated maximally entangled state.
This means we can always respond correctly to neighbouring calls, so that the entanglement summoning task is possible if and only if it is possible when we restrict the inputs to non-neighbouring pairs of calls.

We next proceed to describe a protocol which addresses non-neighbouring calls. 
Define for each diamond $D_i$ the set of indices
\begin{align}
    K_i=\{j:D_j \leftrightarrow D_i\}.
\end{align}
Then the general protocol is as follows:
\begin{itemize}
    \item Pre-share a state $\ket{\Psi}_{X_1...X_n}$ with system $X_i$ brought to point $c_i$.  
    \item At $c_i$, after Alice receives call $b_i$, she applies $\mathcal{N}^{b_i}_{X_i\rightarrow Y[K_i]}$ where we define $Y^+[K_i]=\bigotimes_{j\in K_i}Y^{i\to j}$.
    \item Alice forwards system $Y^{i\to j}$ from $c_i$ to $r_j$. 
    \item At each $r_i$ with $b_i=1$, Alice applies a channel $\mathcal{M}_{Y^-[K_i]\rightarrow A_i}$ and returns $A_i$, where we defined $Y^-[K_i] = \bigotimes_{j\in K_i} Y^{j\to i}$. At the remaining $r_i$ she traces out all systems and gives no output. 
\end{itemize}
By showing this general protocol can be simplified, we will eventually identify a corresponding entanglement sharing scheme.
Note that $Y^+[K_i]=\bigotimes_{j\in K_i}Y^{i\to j}$ is the set of systems sent outward from vertex $i$ when $i$ does not receive a call, while the set $Y^-[K_i] = \bigotimes_{j\in K_i} Y^{j\to i}$ is the set of systems received at vertex $i$ when $i$ does receive a call.\footnote{By assumption, when $D_i$ receives a call none of its neighbours do, so they all forward shares to $D_i$.}
    
A first observation we can make towards simplifying this protocol is that, when Alice at $D_i$ receives $b_i=0$, we can have that Alice not keep any systems.
This is clear, as when $b_i=0$ Alice at vertex $i$ has no output.
Next, notice that since we only consider non-neighbouring calls, when Alice receives $b_i=1$ she can keep all her systems.
This is because in that case we have by assumption that there is no call to a neighbour, so they have no output, so there is no need to send systems to them.
This means we can take $\mathcal{N}^{b_i=0}=\mathcal{I}$ and have Alice keep $X_i$ when $b_i=1$. 
Overall then, the procedure at each diamond is to keep everything (when $b_i=1$) or send everything (when $b_i=0$). 

Next, consider that Alice can always replace her remaining non-trivial operation, $\mathcal{N}^{b_i}_{X_i\rightarrow Y[K_i]}$, with its isometric extension, which we call $V^i_{X_i\rightarrow Y[K_i]}$.\footnote{The purifying system can be absorbed into any one of the output systems.}
This means that rather than have Alice begin with the resource state $\ket{\Psi}_{X_1...X_n}$, we can instead give her $\ket{\Psi'}_{Y^+[K_1]...Y^+[K_n]}$, since she can always apply $(V^i_{X_i\rightarrow K_i})^\dagger$ and recover the earlier resource system $X_i$.

We record the simplified summoning protocol for bidirected edges below. 
\begin{protocol}\label{protocol:bidirected}\textbf{Bidirected edge summoning protocol:}
    The state $\ket{\Psi}_{Y^+[K_1]...Y^+[K_n]}$ is pre-distributed among the diamonds $D_i$, with system $Y^+[K_i]$ held at $D_i$. 
    Upon receiving the input $b_i$, the agent at vertex $i$ keeps all of $Y^+[K_i]$ if $b_i=1$, and send $Y^{i\to j}$ to each vertex $j\in K_i$ if $b_i=0$.
\end{protocol}

Finally, by using this simplified but still fully general protocol, we show entanglement summoning tasks in the setting of bidirected causal graphs require an associated ESS scheme be completed. 
\begin{lemma}\label{lemma:es-implies-essup}
    Consider an entanglement summoning task with a bidirectional causal graph $G_{C}$. 
    Consider the undirected complement of this causal graph, which we denote ${G}_C^c$. 
    Then the entanglement summoning task with causal graph $G_C$ can be completed if and only if the entanglement sharing scheme with graph $G_\mathcal{A}=G_C^c$ can be completed. 
\end{lemma}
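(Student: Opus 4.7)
The plan is to prove the equivalence by identifying the summoning resource state used in Protocol \ref{protocol:bidirected} with the state of an ESS. For the forward direction (summoning $\Rightarrow$ ESS), the key input is that Protocol \ref{protocol:bidirected} was already argued to be fully general, so any successful summoning strategy can be taken to use a pre-distributed resource state $\ket{\Phi}$ on $\bigotimes_i Y^+[K_i]$. I would declare $\ket{\Phi}$ to be the ESS state, with shares indexed by directed edges of $G_C$, writing $S_{ab}\equiv Y^{a\to b}$, and with subsets
\[
T_i \;\equiv\; W_i \;:=\; \{\,Y^{a\to b}\,:\,\{a,b\}\in E(G_C),\; i\in\{a,b\}\,\},
\]
i.e.\ the collection of subsystems $D_i$ ends up holding when $b_i=1$. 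A direct check shows $T_i\cap T_j$ collects those $Y^{a\to b}$ whose undirected edge is incident to both $i$ and $j$, which forces $\{a,b\}=\{i,j\}$; so $T_i\cap T_j=\varnothing$ exactly when $\{i,j\}\in E(G_C^c)$. Declaring these pairs authorized, the access-pair graph is $G_\mathcal{A}=G_C^c$, and the summoning success condition on the non-neighbouring inputs $b_i=b_j=1$ is precisely the ESS recovery condition for the authorized pair $\{T_i,T_j\}$.

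For the backward direction (ESS $\Rightarrow$ summoning), suppose an ESS with $G_\mathcal{A}=G_C^c$ exists, realized by subsets $\{T_i\}_{i=1}^n$ and some state $\ket{\Psi}_{S_1\ldots S_m}$. First I would observe that for each share $S_k$, the set $N(S_k):=\{i:S_k\in T_i\}$ is a clique in $G_C$: any two distinct elements of $N(S_k)$ have intersecting subsets, so they cannot form an authorized pair, so they are connected in the bidirected graph $G_C$. If the ESS is of \emph{pair-only} form (every $|N(S_k)|\leq 2$), the summoning resource state is built cleanly: for $N(S_k)=\{i_1,i_2\}$ embed $S_k$ in $Y^{i_1\to i_2}$, and for $|N(S_k)|=1$ embed it anywhere inside the owner's $Y^+[K_{i_1}]$. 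Running Protocol \ref{protocol:bidirected} on calls $b_{j^*}=b_{k^*}=1$ with $\{j^*,k^*\}\in E(G_C^c)$, each $S_k\in T_{j^*}$ arrives at $D_{j^*}$ (either kept there, or forwarded along $Y^{i_1\to i_2}$ by the unique other member of $N(S_k)$, which must have $b=0$ since $j^*,k^*$ are non-neighbours), and symmetrically for $T_{k^*}$; the ESS recovery channels applied locally at $D_{j^*},D_{k^*}$ then produce $\Psi^+$.

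The main obstacle is reducing an arbitrary ESS realization to a pair-only one. For this I would exploit the graph-theoretic characterization of ESS existence: by \cref{thm:ESSwithunknownpartner} and \cref{lemma:monogamygivesNOC}, the existence of any ESS with $G_\mathcal{A}=G_C^c$ forces $G_C^c$ to have no odd cycles, hence by \cref{thm:bipartite-is-noc} to be bipartite. Given bipartite $G_C^c$, I construct a pair-only ESS explicitly: for each pair $\{i,j\}$ connected in $G_C^c$ by a path of even length but not by an edge (in particular $i,j$ lie in the same part of the bipartition), introduce one fresh share placed in exactly $T_i$ and $T_j$, and realize the resulting access structure using the state construction of \cite{Khanian2025}. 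Bipartiteness of $G_C^c$ ensures such pairs $\{i,j\}$ are never themselves edges of $G_\mathcal{A}$, so disjointness is preserved along every authorized pair, and the monogamy condition is satisfied by construction; the resulting pair-only ESS with access-pair graph $G_C^c$ then feeds the summoning construction above.
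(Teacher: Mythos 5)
Your forward direction is essentially the paper's entire proof. The paper likewise leans on the preceding derivation that protocol \ref{protocol:bidirected} is fully general, identifies $T_i$ with the systems available at $D_i$ on input $b_i=1$ (its $Y^+[K_i]\cup Y^-[K_i]$ is your $W_i$), and checks that $T_i\cap T_j=\varnothing$ exactly for non-neighbouring $i,j$, so that $G_\mathcal{A}=G_C^c$. Where you diverge is the converse. The paper reads ``the entanglement sharing scheme with graph $G_C^c$'' as the \emph{specific} access structure whose subsets are the $T_i$ induced by the protocol (this is also what its later steps rely on, e.g.\ lemma \ref{lem:noodd-implies-monog} via remark \ref{remark:noedge}), so for the paper the converse is immediate: pre-distribute the realizing state as $\ket{\Psi}_{Y^+[K_1]\ldots Y^+[K_n]}$ and run the protocol. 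You instead quantify over \emph{all} ESS realizations with access-pair graph $G_C^c$, and therefore need the reduction to a ``pair-only'' realization, since a share lying in three or more of the $T_i$ cannot be routed by the protocol. That is a legitimate (and more general) route for this direction.

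However, the pair-only construction as written has a hole: you introduce fresh shares only for the monogamy-forced intersections (pairs joined by an even-length path but not an edge), and never specify which shares carry the entanglement for the authorized pairs themselves. If $G_C^c$ is a single edge, your prescription yields $T_i=\varnothing$ for both endpoints, and no state can realize that authorized pair. The fix is easy --- additionally give each vertex $i$ a dedicated share lying only in $T_i$, analogous to the $Y^{i\to i}$ device the paper uses in section \ref{sect:general-summoning-tasks} --- but it needs to be stated. Alternatively, you could bypass the pair-only detour entirely: the existence of any ESS with graph $G_C^c$ forces monogamy and hence no odd cycles; by lemma \ref{lem:noodd-implies-monog} the protocol's own access structure then satisfies monogamy, so by theorem \ref{thm:ESSwithunknownpartner} it is realizable, and that realization is directly protocol-compatible.
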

\begin{proof}\,
    We claim that the state $\ket{\Psi}_{Y^+[K_1]...Y^+[K_n]}$ appearing in protocol \ref{protocol:bidirected} for entanglement summoning must be an entanglement sharing scheme for the access pair graph $G_C^c$. 
    
    To see why, consider the systems that the called-to locations $r_{j}$ and $r_{k}$ receive in this protocol. 
    At $j$, Alice keeps systems $Y^+[K_{j}]$ and receives system $Y^-[K_{j}]$, so that at vertex $j$ the systems $T_j=Y^+[K_{j}]\cup Y^-[K_{j}]$ are available. 
    Vertex $k$ similarly ends up with $T_k=Y^+[K_{k}]\cup Y^-[K_{k}]$.
    At this stage, a maximally entangled state must be prepared from these two collections of systems with local operations alone. 
    In other words, $\ket{\Psi}_{Y^+[K_1]...Y^+[K_n]}$ is exactly an entanglement sharing scheme, where $\{T_j,T_k\}$ must be authorized whenever $j,k$ are non-neighbouring vertices in the causal graph $G_C$, and hence are connected vertices in $G_C^c$. 
    Thus $G_\mathcal{A}=G_C^c$ as claimed. 
    
    Note that when $j$ and $k$ are non-neighbouring vertices, $T_j\cap T_k=\emptyset$, as required in the definition of an entanglement sharing scheme. 
    This occurs because a system starting at $j$ must be sent to $k$ or vice versa in some instance of the protocol for these systems to overlap, which cannot occur if they are non-neighbouring.
    Since the non-neighbouring pairs are exactly the edges in the complement of $G_C$,\footnote{Here we've interpreted $G_C$ as an undirected graph. To do this, each bidirected edge is replaced with an undirected edge.} the resulting entanglement sharing scheme has $G_\mathcal{A}=G_C^c$, as claimed.
\end{proof}

The above lemma reduces entanglement summoning to entanglement sharing, at least when the summoning problem involves only bidirected edges. 
Combined with theorem \ref{thm:ESSwithunknownpartner}, this tells us that entanglement summoning with bidirected edges is possible if and only if the complement of the causal graph, $G_C^c$, satisfies the monogamy condition.
Recall that the monogamy condition for entanglement sharing depends not only on the associated graph, but also on the structure of the associated subsets $T_i$. 
In our case, the subsets $T_i$ also happen to be fixed by the graph, so it should be possible to fully state our result as a condition on the graph $G_C$ alone. 
To do so, we begin with the following remark. 
\begin{remark}\label{remark:noedge}
    In the entanglement sharing scheme constructed from a summoning problem in lemma \ref{lemma:es-implies-essup}, the systems $T_j$ and $T_k$ overlap if and only if vertices $j$ and $k$ are neighbouring in $G_C$. That is, $T_j\cap T_k \neq \emptyset$ iff $D_i \sim D_k$.
\end{remark}
That these systems are non-overlapping when $j$ and $k$ are non-neighbouring was already noted in the proof of lemma \ref{lemma:es-implies-essup}. 
Conversely, if they are neighbouring, $T_k$ and $T_j$ both contain $Y^{j\to k}$ and $Y^{k\to j}$. 

The above lemma is sufficient data about the $T_i$ to reduce our condition on bipartite summoning to one stated purely in terms of the graph $G_C$.
In particular, we have the following. 
\begin{lemma}\label{lem:noodd-implies-monog}
    For an entanglement sharing scheme with an unknown partner whose access graph $G_{\mathcal{A}}$ generated from entanglement summoning, the following are equivalent.
    \begin{enumerate}
        \item $G_\mathcal{A}$ has no odd cycles.
        \item $G_\mathcal{A}$ and the associated authorized pairings $\{T_i,T_j\}$ satisfy monogamy.
    \end{enumerate}
\end{lemma}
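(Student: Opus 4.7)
The plan is to prove the two implications separately. For $(2)\Rightarrow(1)$, I would simply invoke Lemma \ref{lemma:monogamygivesNOC}, which establishes this direction for any pair access structure with disjoint authorized pairings---no use of the ES-specific structure is needed here.

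The substantive direction is $(1)\Rightarrow(2)$, where we must use the special structure of ES-generated schemes (captured by Remark \ref{remark:noedge}) because, as noted immediately after that lemma, the no-odd-cycles condition does not imply monogamy in general. My approach is as follows. First, I would invoke Theorem \ref{thm:bipartite-is-noc} to pass from ``no odd cycles'' to ``bipartite,'' producing a $2$-coloring of the vertices of $G_{\mathcal{A}}$. Given any even-length path $\{T_1,T_2\},\ldots,\{T_{k-1},T_k\}$ in $G_{\mathcal{A}}$ (so $k-1$ is even and $k$ is odd), the endpoints $T_1$ and $T_k$ must lie in the same part of the bipartition, since each edge flips the color and there are an even number of edges. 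Hence $\{T_1,T_k\}$ is not an edge of $G_{\mathcal{A}}$. Recalling that $G_{\mathcal{A}} = G_C^c$, non-adjacency in $G_{\mathcal{A}}$ translates to $D_1 \sim D_k$ in the causal graph $G_C$. Applying Remark \ref{remark:noedge} then yields $T_1 \cap T_k \neq \emptyset$, which is exactly the monogamy condition.

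The main obstacle---and the reason the statement is not a triviality---is precisely the step that invokes Remark \ref{remark:noedge}: it is what lets us translate a purely graph-theoretic statement (non-adjacency in $G_{\mathcal{A}}$) into a statement about the $T_i$'s (nonempty intersection). Without this structural property tying the subsystems to the edges of $G_C$, one can easily construct bipartite access graphs in which same-part pairs have disjoint $T_i$'s, so that monogamy fails; this is exactly why the converse of Lemma \ref{lemma:monogamygivesNOC} does not hold for general ESSs. A minor subtlety worth checking is whether ``path'' is meant strictly (i.e.\ simple) or as a walk, but the color-preservation argument goes through in either case, and when $T_1=T_k$ the conclusion $T_1 \cap T_k\neq \emptyset$ is immediate.
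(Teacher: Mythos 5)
Your proposal is correct and follows essentially the same route as the paper: the $(2)\Rightarrow(1)$ direction defers to Lemma~\ref{lemma:monogamygivesNOC}, and the $(1)\Rightarrow(2)$ direction deduces that the endpoints of an even path cannot be adjacent in $G_\mathcal{A}=G_C^c$ (the paper argues this directly via closing an odd cycle rather than via a $2$-colouring, a cosmetic difference) and then invokes Remark~\ref{remark:noedge} to convert adjacency in $G_C$ into $T_1\cap T_k\neq\emptyset$. Your identification of Remark~\ref{remark:noedge} as the step where the summoning-specific structure enters is exactly the point the paper relies on.
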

\begin{proof}
From lemma \ref{lemma:monogamygivesNOC}, we had already that the monogamy condition implies no odd cycles.

We now show that $G_\mathcal{A}$ having no odd cycles implies that $G_\mathcal{A}$ satisfies monogamy.
Suppose that $G_\mathcal{A}$ has no odd cycles. 
This implies that for every even path $\{T_{i_1}, T_{i_2}\},...,\{T_{i_{k-1}}, T_{i_k}\}$ for some $\{i_1, i_2,...,i_k\}\subseteq\{1,...,n\}$, it is guaranteed that $\{T_{i_1}, T_{i_k}\}$ is \textbf{not} authorized, hence non-neighbouring in $G_C^c$, and so is neighbouring in $G_C$. 
By remark \ref{remark:noedge}, it follows that $T_{i_1} \cap T_{i_k} \neq \emptyset$. 
Hence, monogamy is satisfied.
\end{proof}

Finally, we are ready to collect our results so far and prove the following characterization of entanglement summoning tasks in the bidirected setting.
\begin{theorem}\label{thm:bidirectedandNOC}
    (\textbf{Bidirected entanglement summoning}) An entanglement summoning task whose causal graph $G_C$ is a bidirected graph is achievable if and only if the complement graph, $G^c_C$ contains no odd cycles.
\end{theorem}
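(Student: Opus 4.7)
The plan is to simply chain together the three building blocks established in the preceding subsection, since each has done the real work. The statement to prove is an "if and only if", and the right-hand side of the equivalence (no odd cycles in $G_C^c$) is several reductions away from the dynamical task of summoning, so I will walk the reader through those reductions in order.

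First I would invoke lemma \ref{lemma:es-implies-essup} to translate the dynamical summoning question into a static question about the existence of an entanglement sharing scheme: the summoning task with bidirected causal graph $G_C$ is achievable if and only if the ESS with access-pair graph $G_\mathcal{A}=G_C^c$ (with the subsets $T_i=Y^+[K_i]\cup Y^-[K_i]$ dictated by protocol \ref{protocol:bidirected}) is realizable. Next I would apply theorem \ref{thm:ESSwithunknownpartner} to the access structure $\mathcal{S}=(\mathcal{A},\varnothing)$ — note we have no unauthorized pairs to worry about, since summoning only requires that the called-to pair succeed — which tells us that this ESS is realizable if and only if the monogamy condition holds on $G_\mathcal{A}$ together with its prescribed subsets.

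Finally I would apply lemma \ref{lem:noodd-implies-monog}, which is tailored precisely to the access-pair graphs arising from summoning: it says that for such graphs monogamy is equivalent to the purely combinatorial condition that $G_\mathcal{A}$ contain no odd cycles. Since $G_\mathcal{A}=G_C^c$, this is exactly the condition that $G_C^c$ contain no odd cycles, which is what we wanted.

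There is no real obstacle to overcome here, as the conceptual work was done in lemma \ref{lemma:es-implies-essup} (establishing the summoning-to-sharing equivalence via protocol \ref{protocol:bidirected}) and in lemma \ref{lem:noodd-implies-monog} (using remark \ref{remark:noedge} to collapse monogamy to the no-odd-cycle condition specifically for the subsets $T_i$ that arise from summoning). The only care required is to make explicit that $\mathcal{U}=\varnothing$ in our access structure so that theorem \ref{thm:ESSwithunknownpartner} applies directly, and to be careful that when we interpret $G_C$ as an undirected graph before taking the complement, each bidirected edge is treated as a single undirected edge — exactly the convention used in lemma \ref{lemma:es-implies-essup}. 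With these bookkeeping points noted, the three-step chain closes and the theorem follows.
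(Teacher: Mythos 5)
Your proposal is correct and follows exactly the paper's own three-step chain: lemma \ref{lemma:es-implies-essup} to reduce summoning to entanglement sharing with $G_\mathcal{A}=G_C^c$, theorem \ref{thm:ESSwithunknownpartner} to reduce realizability to monogamy, and lemma \ref{lem:noodd-implies-monog} to collapse monogamy to the no-odd-cycles condition. The extra bookkeeping you note ($\mathcal{U}=\varnothing$ and the undirected-complement convention) is consistent with, though left implicit in, the paper's proof.
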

\begin{proof}
    From lemma \ref{lemma:es-implies-essup}, we have that the entanglement summoning problem can be solved if and only if an entanglement sharing problem with access pair graph $G_C^c$ can be. 
    From theorem \ref{thm:ESSwithunknownpartner}, we have that the entanglement sharing scheme is possible if and only if $G_C^c$ satisfies monogamy. 
    From lemma \ref{lem:noodd-implies-monog}, we get that in this setting, the monogamy and no odd cycles conditions are equivalent, so we have that the summoning problem is possible if and only of $G_C^c$ contains no odd cycles, as needed. 
\end{proof}

\begin{figure}
  \centering
  %
  % First subfigure: Pentagon (unchanged)
  \begin{subfigure}[t]{0.45\textwidth}
    \centering
    \begin{tikzpicture}[scale=1, every node/.style={circle,fill=black,inner sep=1.5pt}]
    
        \node (T1) at (0,3) {};
        \node (T2) at (3,3) {};
        \node (T3) at (3,0) {};
        \node (T4) at (0,0) {};

        \draw[line width=1pt, midbidir]
        (T1) -- (T2);
        \draw[line width=1pt,midbidir]
        (T1) -- (T3);
        \draw[line width=1pt,midbidir]
        (T1) -- (T4);
      % draw dashed triangle connecting vertices 1,3,5
      \draw[dashed, blue]
      (T2) -- (T3) -- (T4) -- (T2)-- cycle;

    \end{tikzpicture}
  \end{subfigure}
  \quad
  %
  % Second subfigure: Hexagon + alternating-triangle
  \begin{subfigure}[t]{0.45\textwidth}
    \centering
    \begin{tikzpicture}[scale=1, every node/.style={circle,fill=black,inner sep=1.5pt}]
      % vertices of a regular hexagon
      \foreach \i in {1,...,6}{
        \coordinate (H\i) at ({60+60*(\i-1)}:2cm);
      }
      % draw outer hexagon
        \draw[line width=1pt, midbidir]
        (H1) -- (H2);
        \draw[line width=1pt, midbidir]
        (H2) -- (H3);
        \draw[line width=1pt, midbidir]
        (H3) -- (H4);
        \draw[line width=1pt, midbidir]
        (H4) -- (H5);
        \draw[line width=1pt, midbidir]
        (H5) -- (H6);
        \draw[line width=1pt, midbidir]
        (H6) -- (H1);
        \draw[line width=1pt, midbidir]
        (H1) -- (H4);
      % draw dashed triangle connecting vertices 1,3,5
      \draw[dashed, blue]
        (H1) -- (H3) -- (H5) -- cycle;
        \draw[dashed, blue]
        (H2) -- (H4) -- (H6) -- cycle;
      % draw the nodes on top
      \foreach \i in {1,...,6}{
        \node at (H\i) {};
      }
    \end{tikzpicture}

    \label{fig:hexagon}
  \end{subfigure}
  \caption{Examples of unachievable bidirectional entanglement summoning tasks. Both causal graphs contain odd cycles in their complement.}
  \label{fig:sidebyside}
\end{figure}
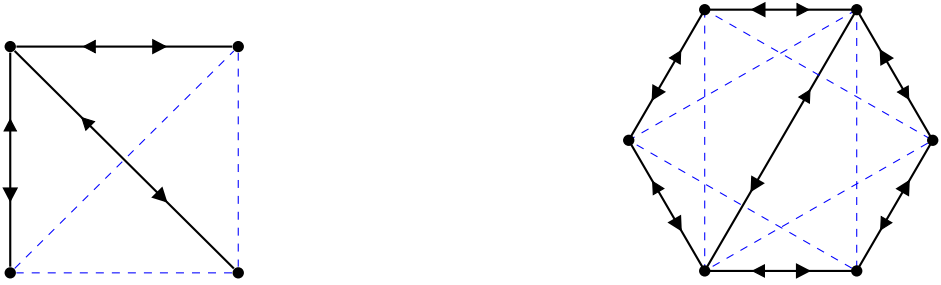

%%%%%%%%%%%%%%%%%%%%%%%%%%%%%%%%%%%%%%%%%%%%%%%%%%%%%%%%%%%
\subsection{Characterization in terms of the causal graph}\label{section:physical-bidirected}
%%%%%%%%%%%%%%%%%%%%%%%%%%%%%%%%%%%%%%%%%%%%%%%%%%%%%%%%%%%

The above analysis gives a characterization of achievable bidirected entanglement summoning tasks in terms of its complement graph, ${G}_C^c$. 
In particular, we found that there should be no odd cycles in the complement graph. 
To interpret this condition, it is useful to rephrase this as a condition directly on the original graph. 

To do this, we make use of theorem \ref{thm:cobipart-clique}, which allows us to restate theorem \ref{thm:bidirectedandNOC}
\begin{theorem}\label{thm:es-when-twoclique}
    A bidirected entanglement summoning task is achievable if and only if it admits a two-clique partition.
\end{theorem}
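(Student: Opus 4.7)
The plan is to chain together three results already available in the excerpt. Theorem \ref{thm:bidirectedandNOC} hands us the characterization in terms of the complement graph $G_C^c$ containing no odd cycles, and we would like to repackage this condition as a structural statement about $G_C$ itself. The natural bridge is bipartiteness: odd-cycle freeness is a classical equivalent of bipartiteness, and bipartiteness of the complement is exactly the two-clique partition condition.

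First, I would invoke Theorem \ref{thm:bidirectedandNOC} to reduce achievability of the bidirected summoning task to the statement that $G_C^c$ contains no odd cycles. Next, I would apply Theorem \ref{thm:bipartite-is-noc} to translate the absence of odd cycles in $G_C^c$ into the statement that $G_C^c$ is bipartite. Finally, I would invoke Theorem \ref{thm:cobipart-clique}, which asserts that a graph admits a two-clique partition if and only if its complement is bipartite; applied to $G_C$, this gives that $G_C^c$ being bipartite is equivalent to $G_C$ admitting a two-clique partition. Concatenating these three biconditionals yields the claimed equivalence.

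There is essentially no obstacle here, since every step is a direct citation of an earlier result in the paper. The only thing to be careful about is to note that $G_C$ is a bidirected graph, so interpreting it (or its complement) as an undirected graph is unambiguous, and hence Theorems \ref{thm:bipartite-is-noc} and \ref{thm:cobipart-clique}, which are stated for undirected graphs, apply without modification. The proof is thus a two- or three-line composition of biconditionals, and no new combinatorial argument is required.
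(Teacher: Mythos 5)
Your proposal is correct and follows the same route as the paper: the paper obtains this theorem directly by restating Theorem~\ref{thm:bidirectedandNOC} through Theorem~\ref{thm:cobipart-clique}, with Theorem~\ref{thm:bipartite-is-noc} silently bridging the no-odd-cycles and bipartiteness conditions exactly as you make explicit. Your added remark that the bidirected graph is unambiguously interpreted as undirected is a reasonable clarification but introduces nothing beyond the paper's argument.
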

This result is already suggestive: it suggests that the two cliques occurring in the partition are somehow associated with the two subsystems of the maximally entangled state that should be returned. 

We can make this more precise by making use of a result on single-system summoning. 
In a single system summoning problem, there is a collection of diamonds $\{D_i\}$ which share causal connections described by a graph $G_C$. 
A single diamond will receive a call, and the goal is to return a quantum system $A$ there which is in an unknown (to Alice, the player) state. 
The following result from \cite{hayden2016summoning} characterizes when this is possible. 
\begin{theorem}\label{thm:single}
A single-system summoning task on diamonds $\{D_i\}$ is achievable if and only if the causal graph $G_C$ contains a tournament as a spanning subgraph. 
\end{theorem}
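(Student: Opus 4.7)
The plan is to reduce the dynamical single-system summoning task to a static question about a quantum code on a pre-distributed state, in direct analogy to the reduction from entanglement summoning to entanglement sharing carried out in Section~\ref{sec:bidirected}. For each diamond I would define the causal past
\begin{align}
    P_i = \{D_i\} \cup \{D_j : D_j \to D_i\}.
\end{align}
The goal is to show that summoning is achievable iff a quantum erasure-correcting code exists on a pre-shared state whose recoverable sets include each $P_i$, and then to translate this into the spanning-tournament condition on $G_C$ directly.

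First I would run the same purification-and-simplification argument that produced protocol~\ref{protocol:bidirected}: after absorbing Alice's local isometries into the pre-distributed state, her resource is a state whose subsystems are labelled by the directed edges of $G_C$, with $Y^{i\to j}$ initially held at $D_i$. Upon the call arriving at $D_i$, she keeps her outbound bundle $Y_i^+=\bigotimes_{k:\,D_i\to D_k} Y^{i\to k}$ and receives $Y^{j\to i}$ from each $D_j\to D_i$, while every uncalled diamond forwards all its outbound subsystems. Thus the recovery channel at $D_i$ acts on
\[
R_i = Y_i^+ \cup \{Y^{j\to i} : D_j\to D_i\},
\]
and summoning is possible iff the pre-distributed state admits a recovery from each $R_i$.

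For necessity, I would suppose $D_i\not\sim D_j$ and inspect subsystem labels: $R_i\cap R_j=\emptyset$, since the only candidate overlaps are $Y^{i\to j}$ (which lies in $R_j$ only if $D_i\to D_j$) and $Y^{j\to i}$ (which lies in $R_i$ only if $D_j\to D_i$), both ruled out by assumption. But two disjoint subsystems of a single pre-distributed state from which an arbitrary encoded $\ket{\psi}$ can each be recovered would clone $\ket{\psi}$, contradicting no-cloning. Hence every pair of diamonds must be causally connected, i.e.\ $G_C$ contains a tournament as a spanning subgraph.

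For sufficiency, I would assume $G_C$ contains a spanning tournament. The same bookkeeping shows that every pair $R_i,R_j$ intersects in at least one of $Y^{i\to j}$ or $Y^{j\to i}$, so no two authorized sets are disjoint. By the standard existence theorem for perfect quantum secret sharing --- which saturates the no-cloning bound --- this is precisely the condition under which a code realizing the access structure $\{R_i\}$ exists, and pre-distributing such a code completes the protocol. The main obstacle will be this last step: one must either import the quantum secret sharing existence result or give a direct construction of the scheme that is compatible with the single-round forwarding structure, which is the substance of the argument in~\cite{hayden2016summoning}.
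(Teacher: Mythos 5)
The paper does not actually prove Theorem~\ref{thm:single}; it is imported verbatim from \cite{hayden2016summoning}, so there is no in-paper proof to compare against. Judged on its own terms, your reconstruction is sound and, pleasingly, mirrors the paper's methodology for the entangled case: reduce the dynamical task to the existence of a static pre-distributed state via the keep-or-forward simplification, then invoke a realizability theorem for the resulting access structure. Two points are worth making explicit. First, the reduction's full generality (which you need for necessity, not just sufficiency) rests on the fact that exactly one diamond is called, so the called diamond may keep everything and every uncalled diamond may forward everything --- the same observation that justified protocol~\ref{protocol:bidirected}; your necessity argument via disjointness of $R_i$ and $R_j$ and no-cloning is then correct. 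Second, the step you flag as the remaining obstacle is genuinely closable: a monotone access structure in which every two authorized sets intersect is always realizable as a quantum code (Gottesman's realizability theorem for quantum secret sharing; in fact you need only recoverability from authorized sets, not perfect secrecy from unauthorized ones, which is strictly weaker), whereas the original proof in \cite{hayden2016summoning} instead gives an explicit recursive construction that routes shares of iterated threshold schemes along a Hamiltonian path of the spanning tournament. Your route trades that explicit construction for an appeal to a general existence theorem; both are valid, and yours has the virtue of making the analogy with Lemma~\ref{lemma:es-implies-essup} exact, at the cost of larger (and less physically transparent) share dimensions.
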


This theorem leads to the following interpretation of our result that entanglement summoning in bidirected graphs is possible iff the causal graph admits a two-clique partition. 
To complete the entanglement summoning task, we take a single maximally entangled state $\ket{\Psi^+}_{AB}$ and associate $A$ to the first subset of the partition $P_1$ and $B$ to the second subset of the partition $P_2$. 
We then execute a single system summoning protocol to move $A$ through $P_1$ and another protocol to move $B$ through $P_2$. 
Because a clique contains a tournament as a spanning subgraph,\footnote{Recall that in our clique the undirected edges $\{v_i,v_j\}$ correspond to having both directed edges $(v_i,v_j)$, $(v_j,v_i)$ in the (directed) causal graph.} this is possible according to theorem \ref{thm:single}. 
This successfully responds to calls to pairs of diamonds that are across the partition. 
Since every vertex within a partition is connected by a bidirected edge, calls to diamonds within the partitions can be handled by using dedicated maximally entangled states for each edge (as we also used in the last section).
Thus, our result amounts to the statement that this somewhat obvious protocol is actually the most general one. 

%%%%%%%%%%%%%%%%%%%%%%%%%%%%%%%%%%%%%%%%%%%%%%%%%%%%%%%%%%%%
\section{General Summoning Tasks}\label{sect:general-summoning-tasks}
%%%%%%%%%%%%%%%%%%%%%%%%%%%%%%%%%%%%%%%%%%%%%%%%%%%%%%%%%%%%

We now turn to more general summoning tasks. 
So far, entanglement summoning tasks whose causal graphs only have directed edges have been characterized by the work in \cite{dolev2021distributing} and the bidirected case by our analysis in section \ref{sec:bidirected}. 
The final task is to address the mixed case. 
We fall short of a complete characterization in the general case. 
However, we find that by extending our entanglement sharing strategy to the mixed case, we can give a protocol general enough to implement all feasible singly-directed and bidirected examples, and many mixed examples with both types of edges as well. 
We leave open if the examples we can complete in this way constitute all feasible cases. 

%%%%%%%%%%%%%%%%%%%%%%%%%%%%%%%%%%%%%%%%%%%%%%%%%%%%%%%%%%%%
\subsection{Construction of the associated sharing scheme}
%%%%%%%%%%%%%%%%%%%%%%%%%%%%%%%%%%%%%%%%%%%%%%%%%%%%%%%%%%%%

We begin by finding an associated entanglement sharing scheme starting from a summoning task with mixed causal edges.
The construction is similar in spirit to the bidirected case handled in the last section, but it becomes somewhat more involved. 
More concretely, using the causal graph $G_C$, we will construct an associated access pair graph $G_{\mathcal{A}}$. 
THe construction is such that an entanglement sharing scheme for $G_{\mathcal{A}}$ leads to a protocol for the summoning problem with causal graph $G_C$. 
We begin by defining the following objects. 
\begin{definition}\label{def:inoutbisets}
    For each causal diamond $D_i$ in a causal graph with $n$ total diamonds, we define the following index sets.
    \begin{enumerate}
        \item The \textbf{out edge set} $\mathcal{O}_i$ is defined to be the subset of indices $k\in \{1,..,n\}$ such that $D_i\stackrel{!}{\to} D_k$. If a causal diamond has no outgoing edges, we define $\mathcal{O}_i=\{i\}$.
        \item The \textbf{in edge set} $\mathcal{I}_i$ is defined to be the subset of indices $k\in\{1,...,n\}$ such that $D_k\stackrel{!}{\to} D_i$.
        \item The \textbf{bidirected edge set} $\mathcal{B}_i$ is the set of indices $k\in \{1,...,n\}$ such that $D_i \leftrightarrow D_k$.
    \end{enumerate}
\end{definition}
We have that $\mathcal{O}_i \cap \mathcal{I}_i = \emptyset$ for all $i\in\{1,..n\}$.

Note that any calls to neighbouring diamonds which are connected by a bidirected edge can be handled by a dedicated maximally entangled state, as was also done in the last section, so we can ignore those cases. 
Next, given a set of causal diamonds $D_i$, we define a state $\ket{\Psi}_{Y^+[\mathcal{O}_1]...Y^+[\mathcal{O}_n]}$ with $Y^+[\mathcal{O}_i]$ held at $D_i$, and where we define
\begin{align}
    Y^+[\mathcal{O}_i] &= \bigcup_{j\in \mathcal{O}_i\cup\mathcal{B}_i} Y^{i\to j},\nonumber \\
    Y^-[\mathcal{I}_i] &= \bigcup_{j \in \mathcal{I}_i\cup\mathcal{B}_i} Y^{j\to i}.
\end{align}
Then we consider the following protocol. 
\begin{protocol}\label{protocol:mixeddirected}\textbf{Entanglement summoning protocol:}
The state $\ket{\Psi}_{Y^+[\mathcal{O}_1]...Y^+[\mathcal{O}_n]}$ is pre-distributed among the diamonds $D_i$, with system $Y^+[\mathcal{O}_i]$ held at $D_i$. 
Upon receiving the input $b_i$, the agent at vertex $i$ keeps all of $Y^+[\mathcal{O}_i]$ if $b_i=1$, and sends $Y^{i\to j}$ to each vertex $j\in \mathcal{O}_i\cup \mathcal{B}_i$ if $b_i=0$.
\end{protocol}
We claim this protocol defines an entanglement sharing scheme. 
To see how, consider the subsets of systems that become available at each vertex given different possibilities for the inputs $b_i$. 
If $D_i$ receives a call and none of its neighbours do, then at $r_i$ the systems
\begin{equation}\label{eq:Tidef}
    T_i = \left[\bigcup_{k\in\mathcal{I}_i} Y^{k\to i}\right]\cup \left[\bigcup_{l\in\mathcal{O}_i} Y^{i\to l}\right]
\end{equation}
become available. 
We add a vertex to the graph of the sharing scheme for each $T_i$. 
Notice that $T_i$ can never be empty. 
This is assured because either $D_i$ has an edge pointing outward to another diamond $D_j$, in which case $Y^{i\rightarrow j}\in T_i$, or $D_i$ has no outgoing edges, in which case we took $\mathcal{O}_i=\{i\}$ so that there is a system $Y^{i\rightarrow i}\in T_i$. 

Pairs of such systems $\{T_i,T_j\}$ with $D_i \not\sim D_j$ must be authorized pairs: since the two diamonds are non-neighbouring, they each receive the subsets $T_i$ and $T_j$ respectively with $T_i\cap T_j=\emptyset$\footnote{Note that remark \ref{remark:noedge} continues to hold in the setting of this section.}, and hence it must be possible to recover a maximally entangled state from these two subsystems. 
To represent this, we add an edge to the sharing access pair graph for each such $\{T_i,T_j\}$. 

Next, we move on to consider calls to neighbouring diamonds.
Recall that we could separately handle the case where the diamonds share a bidirected edge and both receive a call, so we are interested in the case where they share a singly directed edge.
Suppose we have two diamonds $D_i,D_j$ with $D_j\rightarrow D_i$.
Define the set $T_{i\backslash j}$ to be the systems diamond $i$ receives on inputs $b_i=b_j=1$. 
Considering the above protocol, this set will be
\begin{align}\label{eq:Tijdef}
    T_{i\backslash j} = \left[\bigcup_{l\in\mathcal{I}_i\backslash\{j\}} Y^{l\to i}\right]\cup \left[\bigcup_{k\in\mathcal{O}_i} Y^{i\to k} \right]  = T_i\setminus \{Y^{j\to i}\}.
\end{align}
In words, $T_{i\backslash j}$ consists of all those systems given to $i$ initially (which are kept because $b_i=1$) plus almost all of those systems associated with edges that point in to $D_i$, with the exception being $Y^{j\to i}$ (because $b_j=1$ so all systems starting at $D_j$ are kept there). 
Notice that $T_{i\backslash j}$ can never be empty. 
Again, this is because there is always at least one system that starts at $D_i$ (because $\mathcal{O}_i$ is defined such that it is never the empty set), and having the call to $j$ can only prevent systems from arriving at $D_i$, but it cannot remove the systems that start there. 

Considering again calls to $b_i=b_j=1$, notice that the diamond $D_j$ does not see the call at $D_i$, so it receives the same systems $T_j$ as before. 
This means $\{T_i, T_{j\backslash i}\}$ must be an authorized pair in the sharing scheme so that the protocol can successfully respond to this pair of calls.

We can summarize the construction of the access pair graph as follows.
\begin{algorithm}\label{algorithm:accesspairgraphconstruction}
    Given a summoning problem defined by a causal graph $G_C$, we construct an access pair graph $G_{C\rightarrow \mathcal{A}}$ according to the following prescription. 
    Note that $G_C$ is a directed graph while $G_{C\rightarrow \mathcal{A}}$ is undirected.
    \begin{enumerate}
        \item For each vertex $i\in G_C$, add a vertex labelled $T_i$ to $G_{C\rightarrow \mathcal{A}}$. Vertices added in this way are known as \textbf{body} vertices.
        \item For every pair of vertices $i,j$ with no edge between them in $G_C$, add an edge $\{T_i,T_j\}$ in $G_{C\rightarrow \mathcal{A}}$. 
        \item For every pair of vertices $D_i,D_j$ in the causal graph with $D_i\stackrel{!}{\rightarrow} D_j$, add a vertex labelled $T_{j\setminus i}$ to $G_{C\rightarrow \mathcal{A}}$. Vertices added in this way are known as \textbf{wing} vertices. Further, add an edge $\{T_i,T_{j\backslash i}\}$ to the edge set of $G_{C\rightarrow \mathcal{A}}$.  
    \end{enumerate}
\end{algorithm}
Notice that the symbols $T_i$, $T_{j\backslash i}$ are used for two meanings: they label vertices in $G_{C\rightarrow \mathcal{A}}$, and they label subsets of quantum systems. 
Note that the construction never creates edges between wing vertices. 
Also notice that the subgraph defined by the body vertices is exactly $G_C^c$. 

\begin{figure}[h]
    \centering
    \begin{subfigure}[t]{0.45\textwidth}
    \centering
    \begin{tikzpicture}[
    >=Triangle,  % use triangular arrow tip
    node/.style={circle, fill=black, inner sep=1.5pt},
    lab/.style={font=\small, inner sep=2pt},
    edge/.style={black, thick, midarrow} % arrow in middle
  ]

  %--- Square layout ----------------
  \node[node] (T1) at (0,2) {};
  \node[lab, above=2pt of T1] {$D_1$};

  \node[node] (T2) at (2,2) {};
  \node[lab, above=2pt of T2] {$D_2$};

  \node[node] (T3) at (2,0) {};
  \node[lab, below=2pt of T3] {$D_3$};

  \node[node] (T4) at (0,0) {};
  \node[lab, below=2pt of T4] {$D_4$};

  % Edges with midarrow style for triangle in middle
  \draw[edge] (T2) -- (T3);
  \draw[edge] (T2) -- (T1);
  \draw[edge] (T3) -- (T4);

\end{tikzpicture}
\subcaption{Causal graph.}
\label{subfig:causal-example}
    \end{subfigure}
    \begin{subfigure}[t]{0.45\textwidth}
\begin{tikzpicture}[>=stealth, 
    node/.style={circle, fill=black, inner sep=1.5pt},
    lab/.style={font=\small, inner sep=2pt},
    green edge/.style={green!70!black, thick},
    blue edge/.style={blue!70!black, thick}
  ]

  %--- Green graph -----------------------------------
  \node[node] (T1) at (0,2) {};
  \node[lab, above=2pt of T1] {$T_1$};

  \node[node] (T2) at (2,2) {};
  % shift label slightly left for clarity
  \node[lab, above=2pt of T2, xshift=-6pt] {$T_2$};

  \node[node] (T3) at (2,0) {};
  \node[lab, below=2pt of T3] {$T_3$};

  \node[node] (T4) at (0,0) {};
  \node[lab, below=2pt of T4] {$T_4$};

  \draw[green edge] (T1) -- (T4);
  \draw[green edge] (T1) -- (T3);
  \draw[green edge] (T2) -- (T4);

  %--- Blue L-shape attached at T2 ---------------
  \node[node] (T1s) at ($(T2)+(0,1.5)$) {};
  \node[lab, above=2pt of T1s] {$T_{1\backslash 2}$};

  \node[node] (T3s) at ($(T2)+(1.5,0)$) {};
  \node[lab, right=2pt of T3s] {$T_{3\backslash 2}$};

  \draw[blue edge] (T2) -- (T1s);
  \draw[blue edge] (T2) -- (T3s);

  %--- Blue extension from T3 -------------------
  \node[node] (T4s) at ($(T3)+(1.5,0)$) {};
  \node[lab, right=2pt of T4s] {$T_{4\backslash 3}$};
  \draw[blue edge] (T3) -- (T4s);

    \end{tikzpicture}
    
    \subcaption{Corresponding access-pair graph}
    \label{subfig:access-pair-example}
    \end{subfigure}
    \caption{An entanglement summoning task whose causal graph has directed edges, and its corresponding access-pair graph.}
    \label{fig:directed-sq}
\end{figure}
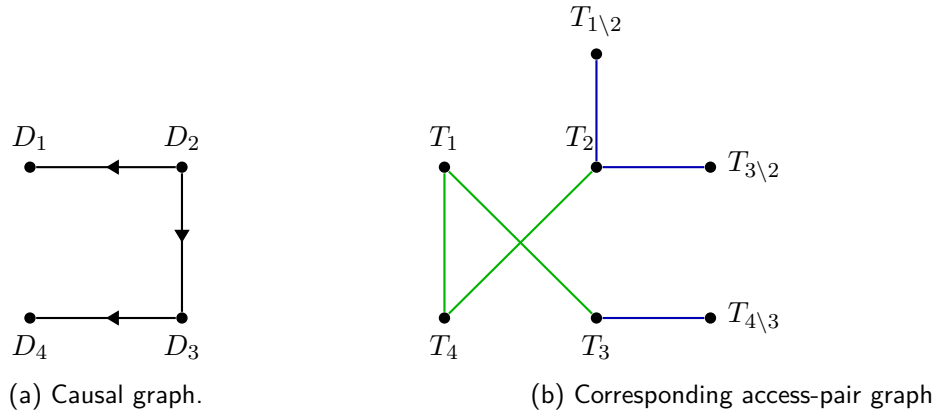

An example of this construction is depicted in figure \ref{fig:directed-sq}. 
Figure \ref{subfig:causal-example} depicts a summoning task where  $D_2\to D_3, D_1$ and $D_3\to D_4$. The corresponding access-pair graph is the complement, with additional edges at $T_2$ to $T_{1 \backslash 2}$ and $T_{3 \backslash 2}$, and at $T_3$ to $T_{4 \backslash 3}$. 
The subsystems associated to each vertex are:
\begin{multicols}{2}
\begin{itemize}
    \item $T_1 = Y^{1\to 1} Y^{2\to 1}$
    \item $T_2 = Y^{2\to 1} Y^{2\to 3}$
    \item $T_3 = Y^{3\to 4} Y^{2\to 3}$
    \item $T_4 = Y^{4\to 4} Y^{3\to 4}$
\end{itemize}
\columnbreak
\begin{itemize}
    \item $T_{1 \backslash 2} = Y^{1\to 1}$
    \item $T_{3 \backslash 2} = Y^{3\to 4}$
    \item $T_{4 \backslash 3} = Y^{4\to 4}$
\end{itemize}
\end{multicols}
The authorized pairs are graphically depicted by an edge in the access-pair graph, as shown in figure \ref{subfig:access-pair-example}.

Next, we would like to understand when the access pair graphs arising from an entanglement summoning task can be realized. 
From theorem \ref{thm:ESSwithunknownpartner}, we see that we need to understand the monogamy condition as applied to these graphs.
Unlike the bidirectional case, odd cycles are no longer sufficient to characterize the access-pair graph. 
For example, in figure \ref{fig:directed-sq}, there are no odd cycles, but monogamy is violated: There is an even length path $\{\{T_{1 \backslash 2}, T_2\}, \{T_2, T_{3\backslash 2}\}\}$, but $T_{1 \backslash 2} \cap T_{3 \backslash 2} = \emptyset$.  

%%%%%%%%%%%%%%%%%%%%%%%%%%%%%%%%%%%%%%%%%%%%%%%%%%%%%%%%%%%%
\subsection{Sufficient conditions on the access pair graph}
%%%%%%%%%%%%%%%%%%%%%%%%%%%%%%%%%%%%%%%%%%%%%%%%%%%%%%%%%%%%

In this section, we characterize when the access pair graphs arising from the protocol of the last section can be realized as entanglement sharing schemes. 
When they can be, this shows the associated entanglement summoning problem can be solved. 

Our characterization is handled in the following lemma, which gives conditions on the sharing graph $G_{C\rightarrow \mathcal{A}}$ for it to represent a realizable access structure. 
\begin{lemma}\label{lemma:sufficientaccesspair}
   $G_{C\rightarrow \mathcal{A}}$, an access pair graph constructed from a causal graph $G_C$, is realizable as an entanglement sharing scheme whenever the following conditions are satisfied.
    \begin{itemize}
        \item \textbf{(M1)} There is no odd-length path from $T_{i\backslash k}$ to $T_i$ for all $i,k\in\{1,...,n\}$,
        \item \textbf{(M2)} For each $i\in\{1,...,n\}$ where there exists $j_1,j_2\in\mathcal{O}_i$, $j_1\neq j_2$, it is the case that $\{T_{j_1}, T_{j_2}\}$ is \textbf{not} authorized.
         \item \textbf{(M3)} Whenever there is an even-length path between $T_{i_1}$ and $T_{i_2}$, $\{T_{j_1}, T_{j_2}\}$ must \textbf{not} be authorized $\forall j_1\in\mathcal{O}_{i_1}\backslash\{i_2\},j_2\in\mathcal{O}_{j_2}\backslash\{i_1\}$
        \item \textbf{(NOC)} There are no odd cycles.
    \end{itemize}
\end{lemma}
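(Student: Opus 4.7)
The plan is to invoke Theorem \ref{thm:ESSwithunknownpartner}: since Algorithm \ref{algorithm:accesspairgraphconstruction} produces an access structure with no unauthorized pairs, it suffices to verify the monogamy condition, namely that for every even-length simple path between vertices $U, V$ of $G_{C\to\mathcal{A}}$ the associated subsystems satisfy $U\cap V \neq \emptyset$. The key structural observation is that each wing vertex $T_{j\backslash i}$ is a leaf attached only to its parent body vertex $T_i$, so in a simple path wings appear only as endpoints. This splits the even-length simple paths into three classes by endpoint type: body-body, body-wing, and wing-wing, which I would handle separately.

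For body-body paths, the entire path lies in the body subgraph, which is exactly $G_C^c$. Condition (NOC) renders this subgraph bipartite, so an even-length path between $T_a$ and $T_b$ puts them in the same part, hence $T_a T_b$ is not an edge of $G_C^c$, equivalently $D_a \sim D_b$ in $G_C$, so $Y^{a\to b}$ or $Y^{b\to a}$ lies in $T_a\cap T_b$, mirroring the argument in the proof of Theorem \ref{thm:bidirectedandNOC}. For wing-wing paths $T_{j_1\backslash i_1}, T_{i_1}, P, T_{i_2}, T_{j_2\backslash i_2}$ with $P$ a body sub-path of even length, I would split on whether $i_1=i_2$. When $i_1=i_2$ (a length-two path), both wings hang off the same parent and $j_1\neq j_2 \in \mathcal{O}_{i_1}$, so (M2) directly forces $\{T_{j_1},T_{j_2}\}$ non-authorized, giving $D_{j_1}\sim D_{j_2}$; the shared system avoids the removed $Y^{i_1\to j_1}$ and $Y^{i_1\to j_2}$ since $j_1,j_2\neq i_1$. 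When $i_1\neq i_2$, I would first apply (M1) to eliminate the degenerate sub-cases $j_1=i_2$ and $j_2=i_1$: in either, $P$ would furnish an odd path from the wing to its label of length at least three, which is precisely what (M1) forbids. The remaining generic case is then handled by (M3) applied to $P$, with the same bookkeeping verifying that the shared system survives the wing truncations.

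The body-wing case is the main obstacle. A simple path from body $T_c$ to wing $T_{a\backslash b}$ has the form $T_c = u_0, u_1, \ldots, u_m = T_b, T_{a\backslash b}$ with odd $m$. The degenerate subcase $c=a$ is trivial since $T_c\cap T_{a\backslash b}= T_{a\backslash b}\neq \emptyset$, so we may assume $c\neq a$. I would then argue by contradiction, assuming $D_c\not\sim D_a$ so that $T_c T_a$ is a $G_C^c$-edge. Case-split on whether $T_a$ lies in the interior $u_1,\ldots,u_{m-1}$. If it does not, the edge $T_a T_c$ prepended to the body sub-path is a simple even body path from $T_a$ to $T_b$, contradicting (M1). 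If $T_a=u_\ell$, the sub-paths $T_c\to T_a$ (length $\ell$) and $T_a\to T_b$ (length $m-\ell$) have opposite parities; the even one produces either an odd cycle through $T_c,T_a$ in the body subgraph (contradicting (NOC)) or an even body path $T_a\to T_b$ (contradicting (M1)). Thus $D_c\sim D_a$, and the witnessing system $Y^{c\to a}$ or $Y^{a\to c}$ differs from the removed $Y^{b\to a}$ since $c\neq b$, yielding $T_c\cap T_{a\backslash b}\neq \emptyset$. The delicate step is the simple-path extraction, and combining (NOC) with (M1) is precisely what closes off all configurations.
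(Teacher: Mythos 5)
Your proposal follows essentially the same route as the paper's proof: reduce to the monogamy condition of Theorem~\ref{thm:ESSwithunknownpartner} and verify it case-by-case over body-to-body, body-to-wing, and wing-to-wing even paths, invoking (NOC), (M1), (M2), and (M3) in the same subcases (your extra care with simple paths and the contradiction framing of the body-to-wing case are cosmetic, since (NOC) makes the graph bipartite so walk and simple-path parities agree). The one subcase you gloss over is wing-to-wing with $j_1=j_2$ and $i_1\neq i_2$, where (M3) does not literally apply because the pair $\{T_{j_1},T_{j_2}\}$ degenerates; there $T_{j\backslash i_1}\cap T_{j\backslash i_2}\neq\emptyset$ holds trivially because $\mathcal{O}_j\neq\emptyset$ guarantees both sets contain a system originating at $D_j$, which is exactly how the paper disposes of that case.
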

\begin{proof}
    From theorem \ref{thm:ESSwithunknownpartner}, we only need to check that the listed conditions imply monogamy is satisfied in $G_{C\rightarrow \mathcal{A}}$. 
    This means checking the monogamy condition for every possible even length path in the graph. 
    Recall that there are two types of vertices in $G_{C\rightarrow \mathcal{A}}$: body and wing vertices. 
    The two types of vertex lead to three types of path: 1) from a body to a body vertex, 2) from a body to a wing vertex, and 3) from a wing to a wing vertex. 
    Our proof involves treating each case separately and checking that the conditions listed in the lemma statement enforce monogamy. 

    \vspace{0.2cm}
    \noindent \textbf{Body-to-body path:} This follows from the same argument as in the bidirected case: consider an even length path from $T_i$ to $T_j$. Then by the NOC condition, $\{T_i,T_j\}$ are not neighbours in $G_{C\rightarrow \mathcal{A}}$, so they are neighbours in $G_C$. But then $T_i$ and $T_j$ will overlap as needed, and in particular share $Y^{j\to i}$ and $Y^{i\to j}$. 

    \vspace{0.2cm}
    \noindent \textbf{Body-to-wing path:} Consider an even length path that starts at $T_i$ and ends on $T_{j\backslash k}$. Note we do not need to consider the case where $i = k$, since $T_i$ and $T_{j\backslash i}$ are only ever connected by a path of length 1. This leaves two cases, $i=j$ and $i\neq j$.
    \begin{itemize}
        \item $i=j$: From equation \ref{eq:Tijdef} we have $T_{i\backslash k}=T_i\setminus \{Y^{k\to i}\}$ (that is, it has all the same subsystems as $T_i$ with one subsystem from $k$ removed), hence $T_{i\backslash k} \subset T_i$. Moreover, it is guaranteed that $T_{i\backslash k}\neq \emptyset$, thus $T_{i\backslash k}\cap T_i \neq \emptyset$.
        \item $i\neq j$: We are interested in even length paths from $T_i$ to $T_{j\backslash k}$, with $i,j,k$ all distinct. 
        To study this, consider the vertex $T_j$.
        We first claim $\{T_i,T_j\}$ is not authorized.
        To see why, suppose it \textit{is} authorized and hence share an edge in $G_{C\to\mathcal{A}}$. 
        We have assumed there is an even length path from $T_{j\backslash k}$ to $T_i$ so this gives an odd length path from $T_{j\setminus k}$ to $T_j$ (via $T_i$).
        But this contradicts M1, so $\{T_i,T_j\}$ is hence not authorized. 
        By remark \ref{remark:noedge}, this means $T_i\cap T_j \neq \emptyset$. 
        Furthermore, we know that $T_{j\backslash k}=T_j\setminus \{Y^{k\to j}\}$ and $Y^{k\to j}\not\subseteq T_i$, so it follows that $T_i\cap T_{j\backslash k} \neq \emptyset$, as needed.
    \end{itemize}

    \vspace{0.2cm}
    \noindent \textbf{Wing-to-wing path:}
    Considering vertices $T_{j_1\backslash i_1}$ and $T_{j_2\backslash i_2}$, we further separate this part of the argument into multiple cases:
    \begin{enumerate}
        \item $i_1 = i_2$, any $j_1, j_2$: This means $T_{j_1\backslash i_1}$ and $T_{j_2\backslash i_2}$ are wing vertices attached to the same body vertex. Letting $i_1=i_2=i$, they are both attached to $T_i$. From condition M2, we have that $\{T_{j_1},T_{j_2}\}$ is not authorized. 
        From remark \ref{remark:noedge} this means $T_{j_1}\cap T_{j_2} \neq \emptyset$. Then use that $T_{i \backslash j_1}=T_{j_1}\setminus Y^{i\to j_1}$ and $T_{i \backslash j_2}=T_{j_2}\setminus Y^{i\to j_2}$. Since $Y^{i \to{j_1}}$ is the system that can be sent from $i$ to $j_1$, and $Y^{i \to{j_2}}$ is the system sent from $i$ to $j_2$, neither of these can be in $T_{j_1}\cap T_{j_2}$, so it must be that $T_{j_1}\cap T_{j_2} = T_{i\backslash {j_1}}\cap T_{i\backslash{j_2}}$ and hence $T_{i\backslash {j_1}}\cap T_{i\backslash{j_2}}\neq \emptyset$ as needed. 
        \item $i_1 \neq i_2$ and $j_1 = j_2$: This means the wing vertices are attached to different body vertices. Let $j_1=j_2=:j$. Recall that $T_{j\backslash{i_1}}$ is comprised of the systems $D_j$ receives when it keeps its local systems, but it loses those that would come from $i_1$, while $T_{j\backslash{i_2}}$ are the systems that $D_j$ receives when it keeps its local systems but doesn't receive the system coming from $i_2$. Since there is always at least one subsystem starting at $j$ (since $\mathcal{O}_j$ is never empty), these sets must always overlap, as needed. 
        \item $i_1=j_2$ or $i_2 = j_1$: We show such an even length path can never exist. Suppose without loss of generality that $j_1 = i_2 =: l$. For sake of contradiction, assume that an even path from from $T_{{l}\backslash {i_1}}$ to $T_{{j_2}\backslash {l}}$ exists. Note that $T_{{j_2}\backslash {l}}$ is a wing vertex attached to $T_{l}$. Then, removing the last edge in the even path we assumed exists from $T_{{j_2}\backslash {l}}$ to $T_{{l}\backslash {i_1}}$, we have constructed an odd length path from $T_{{l}\backslash {i_1}}$ to $T_{l}$ which contradicts condition M1. Hence, such a path cannot exist.
        \item $i_1 \neq i_2\neq j_1\neq j_2$: Finally, the last case we have to consider is when each index is unique. Suppose there is an even-length path between $T_{j_1\backslash i_1}$ and $T_{j_2\backslash i_2}$. Now, consider a subpath with the last start and end edges $\{ T_{j_1\backslash i_1}, T_{i_1}\}$ and $\{ T_{i_2}, T_{j_2\backslash i_2}, \}$ removed. This is now an even length path between $T_{i_1}$ and $T_{i_2}$. By condition M3, the pair \{$T_{j_1}$, $T_{j_2}$\} is not authorized, hence $T_{j_1} \cap T_{j_2}\neq\emptyset$ by remark \ref{remark:noedge}. Then, using that $T_{j_1\backslash i_1}=T_{j_1}\setminus Y^{i_1\to j_1}$ and $T_{j_2\backslash i_2}=T_{j_2}\setminus Y^{i_2\to j_2}$, we have that $T_{j_1}\cap T_{j_2}=T_{j_1\backslash i_1}\cap T_{j_2\backslash i_2}\neq\emptyset $.
    \end{enumerate}
    Hence, every possible even length path in the access-pair graph satisfies monogamy, so the access pair graph is realizable as an entanglement sharing scheme. 
\end{proof}

%%%%%%%%%%%%%%%%%%%%%%%%%%%%%%%%%%%%%%%%%%%%%%%%%%%%%%%%%%%%
\subsection{Sufficient conditions for the causal graph}
%%%%%%%%%%%%%%%%%%%%%%%%%%%%%%%%%%%%%%%%%%%%%%%%%%%%%%%%%%%%

We claim that a causal graph $G_C$ can be implemented in a summoning problem whenever its associated sharing graph $G_{C\rightarrow \mathcal{A}}$ satisfies the conditions listed in lemma \ref{lemma:sufficientaccesspair}.
To understand these conditions better, we begin translating them from statements about the graph $G_{C\rightarrow \mathcal{A}}$ into statements about the original causal graph $G_C$.

To begin, we will first interpret the no odd cycles (NOC) condition in terms of the causal graph. 
This is similar to the discussion presented already in the bidirected case, where we found NOC implies the existence of a two-clique partition.
Recall that a clique is a subset of vertices in an undirected graph that all share an edge, and in the bidirected context we interpreted bidirected edges as undirected edges. In the context of directed graphs, we must use instead the language of quasi-cliques, which leads us to the following translation of NOC in terms of the causal graph.

\begin{lemma}\label{lemma:NOCstar}
    Consider a causal graph $G_C$ and the access pair graph $G_{C\rightarrow \mathcal{A}}$ constructed from it according to algorithm \ref{algorithm:accesspairgraphconstruction}. 
    Then the following two conditions are equivalent:
    \begin{itemize}
        \item (NOC): $G_{C\rightarrow \mathcal{A}}$ does not contain any odd length cycle.
        \item (NOC$\star$): $G_C$ admits a two-quasi-clique partitioning.
    \end{itemize}
\end{lemma}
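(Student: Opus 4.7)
The plan is to reduce the odd-cycle analysis of $G_{C\to\mathcal{A}}$ to one performed only on its body-vertex subgraph, which by construction coincides with the undirected complement of $G_C$, and then invoke the graph-theoretic results already recorded in the paper.

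First I would observe a structural fact about $G_{C\to\mathcal{A}}$: by algorithm \ref{algorithm:accesspairgraphconstruction}, every wing vertex $T_{j\setminus i}$ is introduced together with a single edge $\{T_i,T_{j\setminus i}\}$ and no subsequent step of the algorithm ever adds another edge incident to it. In particular, the construction never produces edges between two wing vertices (as noted right after the algorithm). Hence every wing vertex has degree $1$ in $G_{C\to\mathcal{A}}$, i.e.\ it is a pendant.

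Next I would use the elementary fact that a pendant vertex cannot lie on any cycle (any cycle through a vertex requires at least two distinct incident edges). Consequently, every cycle of $G_{C\to\mathcal{A}}$, and in particular every odd cycle, is contained entirely in the induced subgraph on the body vertices $\{T_i\}_{i=1}^n$. By step~2 of algorithm \ref{algorithm:accesspairgraphconstruction}, the edges among body vertices are exactly those pairs $\{T_i,T_j\}$ for which there is no edge (in either direction) between $v_i$ and $v_j$ in $G_C$. That is, the induced subgraph on body vertices is precisely the undirected complement $G_C^{\bar c}$. Therefore $G_{C\to\mathcal{A}}$ has no odd cycles if and only if $G_C^{\bar c}$ has no odd cycles.

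Finally I would chain together the graph-theoretic equivalences already available. By theorem \ref{thm:bipartite-is-noc}, $G_C^{\bar c}$ has no odd cycles if and only if $G_C^{\bar c}$ is bipartite; by theorem \ref{thm:two-quasi-clique-partition}, this is equivalent to $G_C$ admitting a two-quasi-clique partitioning, i.e.\ condition NOC$\star$. Combining with the previous paragraph gives NOC $\Leftrightarrow$ NOC$\star$. There is no real obstacle here: the only substantive observation is that wing vertices are pendants and hence cannot participate in cycles, after which everything reduces to the already-established correspondence between bipartite complements and two-quasi-clique partitions.
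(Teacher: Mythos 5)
Your proof is correct and follows essentially the same route as the paper's: both arguments hinge on the fact that each wing vertex is attached by a single edge to one body vertex, identify the body subgraph with the undirected complement $G_C^{\bar c}$, and then invoke theorems \ref{thm:bipartite-is-noc} and \ref{thm:two-quasi-clique-partition}. The only cosmetic difference is that you argue directly that pendant vertices cannot lie on cycles, whereas the paper phrases the same reduction in terms of extending or restricting a two-colouring; the substance is identical.
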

\begin{proof}
    The arguments are similar to those in theorem \ref{thm:es-when-twoclique}. We first show that NOC implies NOC$\star$. 
    Suppose that $G_{C\to\mathcal{A}}$ does not contain any odd cycles. By theorem \ref{thm:bipartite-is-noc}, it follows that $G_{C\to\mathcal{A}}$ is bipartite, and admits a two-colouring. 
    Note that each wing vertex of $G_{C\to\mathcal{A}}$ is connected to a single body vertex, so they always have opposite colours to their attached body vertex in any two-colouring of $G_{C\to\mathcal{A}}$. 
    Hence, if the wing vertices were removed, the resulting graph would still admit a two-colouring and hence be bipartite. 
    Note that the body of $G_{C\to\mathcal{A}}$ is exactly the undirected complement of $G_C$. 
    It follows by theorem \ref{thm:two-quasi-clique-partition} that $G_C$ admits a two-quasi-clique partitioning. 

    In the other direction, suppose that $G_C$ admits a two-quasi-clique partitioning. By theorem \ref{thm:two-quasi-clique-partition}, this implies that $G^c_C$ is bipartite. $G^c_C$ is exactly the graph of the body of $G_{C\to\mathcal{A}}$, hence the body of $G_{C\to\mathcal{A}}$ is bipartite. Including the wing vertices, each of which are only connected to a single body vertex, the graph remains bipartite. Hence, all of $G_{C\to\mathcal{A}}$ is bipartite and thus, by theorem \ref{thm:bipartite-is-noc}, does not contain any odd cycles, as desired.
\end{proof}

The next simplest condition to express in the language of the causal graph is M2. 
In fact, this turns out to be the access pair language avatar of the `no two-out' condition already found in \cite{adlam2015quantum} and restated as lemma \ref{lemma:no-two-out}.
\begin{lemma}\label{lemma:M2star}
    Consider a causal graph $G_C$ and the access pair graph $G_{C\rightarrow \mathcal{A}}$ constructed from it according to algorithm \ref{algorithm:accesspairgraphconstruction}.
    Then the following two conditions are equivalent:
    \begin{itemize}
        \item (M2): For each $i\in\{1,...,n\}$ in $G_{C\rightarrow \mathcal{A}}$ where there exists $j,k\in\mathcal{O}_i$ it is the case that $\{T_{j}, T_{k}\}$ is \textbf{not} authorized.
        \item (M2$\star$): For any vertex $i$ in $G_C$ such that there exists vertices $j,k$ with $D_i\stackrel{!}{\to} D_j$, $D_i\stackrel{!}{\to} D_k$, we have that $D_j\sim D_k$.
    \end{itemize} 
\end{lemma}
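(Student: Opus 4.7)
The proof will be essentially a definitional unpacking: both M2 and M2$\star$ say the same thing once the hypothesis side and the conclusion side of each statement are rewritten in the other graph's language. The plan is to set up two small dictionary entries, translating (a) the hypothesis $j,k\in \mathcal{O}_i$ and (b) the conclusion ``$\{T_j,T_k\}$ is not authorized,'' and then observe that the two statements coincide term-for-term.

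For the hypothesis, I would invoke definition \ref{def:inoutbisets}: whenever $D_i$ has at least one outgoing oriented edge, $\mathcal{O}_i = \{k : D_i \stackrel{!}{\to} D_k\}$, so the existence of a pair of distinct indices $j,k\in\mathcal{O}_i$ is literally the existence of distinct $j,k$ with $D_i\stackrel{!}{\to}D_j$ and $D_i\stackrel{!}{\to}D_k$, which is the hypothesis of M2$\star$. The convention $\mathcal{O}_i=\{i\}$ (used when $D_i$ has no outgoing oriented edges) would be handled as a degenerate case: there $\mathcal{O}_i$ is a singleton, so no pair of distinct indices exists, and similarly there are no two distinct vertices witnessing M2$\star$, so both sides hold vacuously.

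For the conclusion, I would invoke algorithm \ref{algorithm:accesspairgraphconstruction}. Body–body edges in $G_{C\to\mathcal{A}}$ are added exclusively by step 2 of the algorithm (step 3 only adds body–wing edges), and step 2 adds $\{T_j,T_k\}$ precisely when $D_j$ and $D_k$ share no edge in $G_C$, that is, $D_j\not\sim D_k$. Hence for body vertices $T_j,T_k$, the pair $\{T_j,T_k\}$ is authorized iff $D_j\not\sim D_k$, and so $\{T_j,T_k\}$ is \emph{not} authorized iff $D_j\sim D_k$.

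Combining the two translations, M2 becomes: for every $i$ and every pair of distinct $j,k$ with $D_i\stackrel{!}{\to}D_j$ and $D_i\stackrel{!}{\to}D_k$, we have $D_j\sim D_k$, which is exactly M2$\star$. There is no real obstacle in the argument; the only item requiring any care is the degenerate branch $\mathcal{O}_i=\{i\}$, where both conditions are vacuous for that $i$ and so the equivalence holds trivially.
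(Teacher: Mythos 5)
Your proposal is correct and follows essentially the same route as the paper: identify $j,k\in\mathcal{O}_i$ with the diamonds $D_j,D_k$ satisfying $D_i\stackrel{!}{\to}D_j$, $D_i\stackrel{!}{\to}D_k$, and then note that a body--body pair $\{T_j,T_k\}$ is authorized iff $D_j\not\sim D_k$ (the paper cites remark \ref{remark:noedge} for this step, while you read it off directly from step 2 of algorithm \ref{algorithm:accesspairgraphconstruction}, which amounts to the same fact). Your explicit handling of the degenerate singleton case $\mathcal{O}_i=\{i\}$ is a minor extra care the paper omits but does not change the argument.
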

\begin{proof}
    Beginning with M2, recall that $\mathcal{O}_i$ is the set of diamonds vertex $i$ points to in the causal graph, so we can identify $j,k\in \mathcal{O}_i$ as labelleling the diamonds $D_j$, $D_k$ that appear in the second condition. 
    Then from remark \ref{remark:noedge}, we know $\{T_j,T_k\}$ is not authorized if and only if $D_j \sim D_k$, so the two conditions are indeed equivalent. 
\end{proof}

Next, we move on to consider condition M1. 
We begin by observing that since M1 is a condition about odd length paths from a body to a wing vertex, we can rephrase it as a condition on even length paths between body vertices. 
In particular, we have the following claim. 
\begin{lemma} Consider a causal graph $G_C$ and the access pair graph $G_{C\rightarrow \mathcal{A}}$ constructed from it according to algorithm \ref{algorithm:accesspairgraphconstruction}.
    Then the following two conditions are equivalent:
    \item (M1): There is no odd-length path in $G_{C\rightarrow \mathcal{A}}$ from $T_{i\backslash k}$ to $T_i$ for all $i,k\in\{1,...,n\}$.
    \item (M1'): There is no even length path in $G_C^c$ from $D_k$ to $D_i$ for all diamonds $D_k\stackrel{!}{\to} D_i$.
\end{lemma}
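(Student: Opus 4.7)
The plan is to exploit the degree-one structure of wing vertices to collapse the statement about paths in $G_{C\to\mathcal{A}}$ into one about paths in its body subgraph, which is precisely $G_C^c$. Recall from algorithm \ref{algorithm:accesspairgraphconstruction} that the wing vertex $T_{i\backslash k}$ is introduced exactly when $D_k\stackrel{!}{\to}D_i$, and its only neighbour is the body vertex $T_k$. So I would begin by noting that any path in $G_{C\to\mathcal{A}}$ starting at $T_{i\backslash k}$ must begin with the edge $\{T_{i\backslash k},T_k\}$. An odd-length path from $T_{i\backslash k}$ to $T_i$ therefore decomposes as that first edge followed by an even-length path from $T_k$ to $T_i$. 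Hence M1 is equivalent to the statement: for all $D_k\stackrel{!}{\to}D_i$, there is no even-length path from $T_k$ to $T_i$ in $G_{C\to\mathcal{A}}$.

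Next I would show that even-length paths between body vertices in $G_{C\to\mathcal{A}}$ exist if and only if they exist in the body-induced subgraph $G_C^c$. The reverse direction is immediate: an even-length path in $G_C^c$ is already a path in $G_{C\to\mathcal{A}}$. For the forward direction, suppose a path from $T_k$ to $T_i$ visits some wing vertex $T_{b\backslash a}$. Because $T_{b\backslash a}$ has degree one and the path's endpoints are body vertices, the path cannot terminate at $T_{b\backslash a}$; the only way it can leave is by traversing $\{T_{b\backslash a},T_a\}$ again. Thus each wing visit contributes a length-two detour $T_a\to T_{b\backslash a}\to T_a$. Deleting all such detours removes an even number of edges and leaves a walk within the body subgraph whose length has the same parity. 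Combining this with the previous reduction identifies M1 with M1'.

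The main delicacy is that the paper's notion of path allows repeated vertices, so I cannot appeal to simple paths; the argument must be framed in terms of walks and length parity. Beyond that, it is worth noting that degenerate cases cause no trouble: a length-zero walk from $T_k$ to $T_i$ would force $k=i$, which contradicts $D_k\stackrel{!}{\to}D_i$, so no edge cases arise from trivial paths, and the equivalence follows.
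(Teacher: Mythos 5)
Your proof is correct and follows essentially the same route as the paper: peel off the forced first edge $\{T_{i\backslash k},T_k\}$ to reduce M1 to the non-existence of even-length paths from $T_k$ to $T_i$, then identify body-to-body paths with paths in $G_C^c$. Your explicit parity-preserving removal of length-two wing detours is a small refinement of a step the paper asserts without elaboration, but it does not change the argument.
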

\begin{proof}
    Note that $D_k\stackrel{!}{\to} D_i$ if and only if vertex $T_{i\backslash k}$ exists. 
    Next, note that $T_{i\backslash k}$ shares an edge only with $T_k$, so an odd length path between $T_i$ and $T_{i\backslash k}$ exists if only if an even length path between $T_i$ and $T_k$ exists. 
    Finally, edges between body vertices in $G_{C\rightarrow \mathcal{A}}$ are exactly the edges in $G_C^c$, so an even length path between $T_i$ and $T_k$ exists in $G_{C\rightarrow \mathcal{A}}$ if and only an even length path between $D_i$ and $D_k$ exists in $G_{C}^c$.
\end{proof}

This lemma does not yet translate M1 to the causal graph, but instead is phrased in terms of its complement. 
Our next task is to complete the translation. 
To do this, we use a result from graph theory. 
Specifically, because we already know our causal graphs satisfy NOC$\star$, we can use the following result about even length paths in co-bipartite graphs. 
\begin{lemma}\label{lemma:two-clique-for-M3}
    Suppose that $G$ is co-bipartite. There is an even length path between $v_1$ and $v_2$ in $G^c$ if and only if in every two-quasi-clique partition of $G$, $v_1$ and $v_2$ lie in the same clique.
\end{lemma}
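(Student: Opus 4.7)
The strategy is to translate both sides of the biconditional into statements about a proper 2-coloring of the bipartite graph $G^c$, and then invoke the standard relationship between path parity and color classes in a bipartite graph. To this end, I would first observe that a two-quasi-clique partition of $G$ is the same data as a proper 2-coloring of $G^c$: a subset $K\subseteq V$ is a quasi-clique of $G$ iff every pair in $K$ has at least one directed edge in $G$, iff $K$ contains no edge of the undirected complement $G^c$, iff $K$ is an independent set of $G^c$. The hypothesis that $G$ is co-bipartite guarantees such 2-colorings exist (by theorem \ref{thm:two-quasi-clique-partition}), and the residual freedom in choosing one is exactly the freedom to swap colors independently within each connected component of $G^c$.

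With this dictionary in hand, the forward direction is immediate: any even length path in $G^c$ alternates between the two color classes of any fixed proper 2-coloring, so its endpoints share a class and thus lie in the same quasi-clique of $G$. For the backward direction I would argue the contrapositive, exhibiting a two-quasi-clique partition that separates $v_1$ and $v_2$ whenever no even length path between them exists in $G^c$. If $v_1$ and $v_2$ lie in distinct connected components of $G^c$, independently flipping the coloring on one component yields a proper 2-coloring placing them in different classes. If they lie in the same component, the standard bipartite fact that all walks between a fixed pair of vertices share a parity (namely that of the bipartite distance), together with the absence of an even path, forces all paths between them to be odd; hence the essentially unique 2-coloring of that component already separates them, and I extend it arbitrarily on the remaining components.

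The only conceptually non-trivial step is the correspondence between two-quasi-clique partitions of the directed graph $G$ and proper 2-colorings of the undirected complement $G^c$; once that is in place, the argument reduces to the standard bipartite parity fact applied component by component, so I do not anticipate any real obstacle.
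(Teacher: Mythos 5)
Your proof is correct and follows the same overall strategy as the paper's: establish the dictionary between two-quasi-clique partitions of $G$ and proper 2-colourings of $G^c$, then apply the standard parity argument for bipartite graphs. The one place you genuinely diverge is the backward direction, which you argue by contrapositive and split according to whether $v_1$ and $v_2$ lie in the same connected component of $G^c$, flipping the colouring on one component when they do not. This is actually more careful than the paper's own converse, which asserts in one line that having the same colour forces an even path and thereby glosses over both the case of vertices in distinct components (same colour, no path at all) and the quantifier ``in \emph{every} partition'' from the statement; your component-wise analysis is exactly what is needed to close that gap. No issues to flag.
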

\begin{proof}
    Consider any two-quasi-clique partition of $G=(V,E)$, say $V=Q_1\cup Q_2$. 
    Since $Q_1$ is a quasi-clique, all vertices in $Q_1$ are disconnected in $G^c$. The same holds true for $Q_2$ in $G^c$. 
    This means we can define a two colouring of $G^c$ by $G^c=Q_1\cup Q_2$. 
    Conversely, begin with any two colouring of $G^c$, $G^c=C_1\cup C_2$. 
    Then since no edge in $G^c$ connects any pair of vertices in $C_1$, it forms a quasi-clique in $G$. Similarly, $C_2$ is a clique in $G$. 
    Thus, the two-clique partitions of $G$ are exactly the two colourings of $G^c$. 
    
    Now suppose there is an even length path from $v_1$ to $v_2$ in $G^c$. This means they must have the same colour in any two colouring, and hence be in the same quasi-clique in $G$. 
    Conversely, if they are in the same quasi-clique in $G$ they must have the same colour in the corresponding two colouring of $G^c$, so they must be connected by an even length path.     
\end{proof}

This lets us complete the translation of M1 to the causal graph. 
\begin{lemma}\label{lemma:M1star} Consider a causal graph $G_C$ and the access pair graph $G_{C\rightarrow \mathcal{A}}$ constructed from it according to algorithm \ref{algorithm:accesspairgraphconstruction}.
    Then the following two conditions are equivalent:
\begin{itemize}
    \item (M1'): For diamonds $D_k\stackrel{!}{\to} D_i$, there is no even length path in $G_C^c$ from $D_k$ to $D_i$ .
    \item (M1$\star$): For all diamonds $D_k\stackrel{!}{\to} D_i$, there is at least one two-quasi-clique partition of $G_C$ such that $D_k$ and $D_i$ are in different cliques. 
\end{itemize}
\end{lemma}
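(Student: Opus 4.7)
The plan is to derive this equivalence as an essentially immediate corollary of Lemma \ref{lemma:two-clique-for-M3}, which has already done the combinatorial work of converting ``existence of an even length path in the complement'' into ``forced to share a quasi-clique in every two-quasi-clique partition.'' Reading that lemma contrapositively gives precisely the atomic building block that M1' and M1$\star$ pair up.

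Concretely, for each pair of diamonds $D_k \stackrel{!}{\to} D_i$, I would apply Lemma \ref{lemma:two-clique-for-M3} with $v_1 = D_k$ and $v_2 = D_i$, which yields: there is no even length path in $G_C^c$ between $D_k$ and $D_i$ if and only if at least one two-quasi-clique partition of $G_C$ places $D_k$ and $D_i$ in different cliques. Since both M1' and M1$\star$ are just universal quantifications of their respective side of this biconditional over all such pairs, quantifying both sides yields M1' $\iff$ M1$\star$.

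The only hypothesis of Lemma \ref{lemma:two-clique-for-M3} that needs to be checked is that $G_C$ is co-bipartite. In the intended context of Lemma \ref{lemma:sufficientaccesspair}, NOC$\star$ has already been imposed, so by Theorem \ref{thm:two-quasi-clique-partition} this is automatic and Lemma \ref{lemma:two-clique-for-M3} applies directly; if one wishes to state the equivalence without NOC$\star$ in the background, the degenerate case where $G_C$ admits no two-quasi-clique partition at all can be handled as a brief separate remark. The main ``obstacle'' is really just being careful about the quantifier structure -- keeping the ``for all pairs $D_k \stackrel{!}{\to} D_i$'' quantifier outside and the ``some/every partition'' quantifier inside, so that the contrapositive is applied at the right level. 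Beyond that, I do not anticipate any real difficulty, since all the combinatorial substance has already been packaged into Lemma \ref{lemma:two-clique-for-M3}.
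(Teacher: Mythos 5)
Your proposal matches the paper's own proof, which consists of the single line ``Follows from lemma \ref{lemma:two-clique-for-M3}'': both arguments obtain the equivalence by applying that lemma (in contrapositive form) to each pair $D_k \stackrel{!}{\to} D_i$ and quantifying over all such pairs. Your additional care about the co-bipartiteness hypothesis and the quantifier ordering is a welcome elaboration of details the paper leaves implicit, but it is not a different approach.
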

\begin{proof}
    Follows from lemma \ref{lemma:two-clique-for-M3}. 
\end{proof}

Finally, we need to translate condition M3. 
\begin{lemma}\label{lemma:M3prime}
Consider a causal graph $G_C$ and the access pair graph $G_{C\rightarrow \mathcal{A}}$ constructed from it according to algorithm \ref{algorithm:accesspairgraphconstruction}.
    Then the following two conditions are equivalent:
\begin{itemize}
    \item (M3) Whenever there is an even-length path from $T_{i_1}$ to $T_{i_2}$, $\{T_{j_1}, T_{j_2}\}$ is not authorized for all $j_1\in \mathcal{O}_{i_1}\backslash\{i_2\}$ and $j_2\in\mathcal{O}_{i_2}\backslash\{i_1\}$
    \item (M3') There is no even length path in $G_C^c$ between $D_{i_1}$ and $D_{i_2}$ where $D_{i_1}\stackrel{!}{\to} D_{j_1}$ and $D_{i_2}\stackrel{!}{\to} D_{j_2}$ and $D_{j_1}\not\sim D_{j_2}$.
\end{itemize}
\end{lemma}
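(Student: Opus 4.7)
The plan is to follow the same translation strategy already used to establish Lemmas \ref{lemma:NOCstar}, \ref{lemma:M2star}, and \ref{lemma:M1star}: rewrite each clause of M3 directly in terms of the causal graph using the construction in Algorithm \ref{algorithm:accesspairgraphconstruction} together with Remark \ref{remark:noedge}. The core observation I would lean on is that even-length paths between body vertices $T_{i_1}$ and $T_{i_2}$ in $G_{C\to\mathcal{A}}$ are in one-to-one correspondence with even-length paths between $D_{i_1}$ and $D_{i_2}$ in $G_C^c$. This is because every wing vertex $T_{a\setminus b}$ has degree exactly one in $G_{C\to\mathcal{A}}$ (its sole neighbour being $T_b$), so wings cannot appear as interior vertices of a simple path, and the subgraph induced by the body vertices is by construction $G_C^c$.

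Once this correspondence is in hand, the rest is a dictionary translation. The index $j_1 \in \mathcal{O}_{i_1}$ means, by Definition \ref{def:inoutbisets}, precisely that $D_{i_1}\stackrel{!}{\to} D_{j_1}$, and similarly for $j_2$. By Remark \ref{remark:noedge}, the pair $\{T_{j_1},T_{j_2}\}$ is authorized (i.e.\ is an edge of $G_{C\to\mathcal{A}}$) if and only if $D_{j_1}\not\sim D_{j_2}$ in $G_C$. Putting these together, M3 and M3' become contrapositives: M3 says that an even path between $T_{i_1}$ and $T_{i_2}$ rules out any authorized $\{T_{j_1},T_{j_2}\}$ for admissible $j_1,j_2$, while M3' says that the existence of such $j_1,j_2$ with $D_{j_1}\not\sim D_{j_2}$ rules out any even path between $D_{i_1}$ and $D_{i_2}$ in $G_C^c$.

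The only delicate point I would need to address is the discrepancy between the explicit exclusions $j_1 \in \mathcal{O}_{i_1}\setminus\{i_2\}$, $j_2 \in \mathcal{O}_{i_2}\setminus\{i_1\}$ in M3 and the absence of corresponding exclusions in M3'. I would argue these exclusions are vacuous on the M3' side: if $j_1 = i_2$, then $D_{i_1}\stackrel{!}{\to} D_{j_1}$ reads $D_{i_1}\stackrel{!}{\to} D_{i_2}$, and then the additional hypothesis $D_{i_2}\stackrel{!}{\to} D_{j_2}$ forces $D_{i_2}\sim D_{j_2}$, i.e.\ $D_{j_1}\sim D_{j_2}$, directly contradicting $D_{j_1}\not\sim D_{j_2}$; the case $j_2 = i_1$ is symmetric. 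Hence the quadruples actually constrained by M3' are exactly those with $j_1 \neq i_2$ and $j_2 \neq i_1$, matching M3 precisely.

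I do not expect any real obstacle here; the lemma is a routine dictionary statement, and the only thing requiring care is the degree-one property of wing vertices (to ensure paths between body vertices genuinely live in $G_C^c$) and the book-keeping of the excluded index cases above. With those in hand, both directions of the equivalence follow immediately from unwinding definitions.
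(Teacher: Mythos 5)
Your proposal is correct and follows essentially the same dictionary-translation argument as the paper's own proof: identify the body subgraph of $G_{C\rightarrow\mathcal{A}}$ with $G_C^c$, use Remark~\ref{remark:noedge} to translate ``not authorized'' into $D_{j_1}\sim D_{j_2}$, and Definition~\ref{def:inoutbisets} to translate $j\in\mathcal{O}_i$ into $D_i\stackrel{!}{\to}D_j$. You are in fact slightly more careful than the paper on two points it glosses over, namely that degree-one wing vertices cannot occur in the interior of a path between body vertices, and that the exclusions $j_1\neq i_2$ and $j_2\neq i_1$ are automatically satisfied by any quadruple meeting the hypotheses of M3'.
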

\begin{proof}
    Suppose that there is a path between $T_{i_1}$ and $T_{i_2}$, and that $\{T_{j_1}, T_{j_2}\}$ is not authorized in $G_{C\to\mathcal{A}}$. 
    Note that $j_1\in\mathcal{O}_{i_1}\backslash\{i_2\}$ and $j_2\in\mathcal{O}_{i_2}\backslash\{i_1\}$ if and only if $D_{i_1}\stackrel{!}{\to}D_{j_1}$ and $D_{i_2}\stackrel{!}{\to}D_{j_2}$.
    Moreover, $\{T_{j_1}, T_{j_2}\}$ is not authorized if and only if $D_{j_1}\sim D_{j_2}$. Note that $G^c_C$ is comprised of $\{T_i\}$ and the authorized pairings between them. Thus, all even-length paths between $T_{i_1}$ and $T_{i_2}$ having $\{T_{j_1}, T_{j_2}\}$ not authorized in $G_{C\to\mathcal{A}}$ is equivalent to every even-length path in $G^c_C$ between $D_{i_1}$ and $D_{i_2}$ requiring $D_{j_1}\sim D_{j_2}$ as required.
\end{proof}

Finally, lemma \ref{lemma:two-clique-for-M3} allows us to restate M3 in terms of the original graph.
\begin{lemma}\label{lemma:M3star}
Consider a causal graph $G_C$ and the access pair graph $G_{C\rightarrow \mathcal{A}}$ constructed from it according to algorithm \ref{algorithm:accesspairgraphconstruction}.
    Then the following two conditions are equivalent:
    \begin{itemize}
        \item (M3'): There is no even length path in $G_C^c$ between $D_{i_1}$ and $D_{i_2}$ where $D_{i_1}\stackrel{!}{\to} D_{j_1}$ and $D_{i_2}\stackrel{!}{\to} D_{j_2}$ and $D_{j_1}\not\sim D_{j_2}$.
        \item (M3$\star$):  Suppose that $D_{i_1} \stackrel{!}{\to} D_{j_1}$ and $D_{i_2} \stackrel{!}{\to} D_{j_2}$, and that $D_{i_1}$ and $D_{i_2}$ belong to the same quasi-clique for every two-quasi-clique partition of $G$. Then we must have $D_{j_1}\sim D_{j_2}$.
    \end{itemize}
\end{lemma}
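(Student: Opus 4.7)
The plan is to invoke Lemma \ref{lemma:two-clique-for-M3} as the workhorse, since it already translates the notion "even length path in $G_C^c$ between two vertices" into "those two vertices always share a quasi-clique in every two-quasi-clique partition of $G_C$". Once that equivalence is in place, passing from M3' to M3$\star$ is a purely formal rearrangement of a universally quantified implication. Before applying Lemma \ref{lemma:two-clique-for-M3}, however, I would flag that it requires $G_C$ to be co-bipartite; in the setting of Lemma \ref{lemma:sufficientaccesspair}, condition NOC is assumed alongside M3, and by Lemma \ref{lemma:NOCstar} together with Theorem \ref{thm:two-quasi-clique-partition} this forces $G_C^c$ to be bipartite, which is exactly the hypothesis needed.

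Next, I would rewrite M3' as the statement: for every quadruple $(i_1,i_2,j_1,j_2)$, if $D_{i_1}\stackrel{!}{\to} D_{j_1}$, $D_{i_2}\stackrel{!}{\to} D_{j_2}$, and $D_{j_1}\not\sim D_{j_2}$, then there is no even length path in $G_C^c$ from $D_{i_1}$ to $D_{i_2}$. Substituting the equivalence supplied by Lemma \ref{lemma:two-clique-for-M3}, the final clause becomes: there is at least one two-quasi-clique partition of $G_C$ in which $D_{i_1}$ and $D_{i_2}$ sit in different quasi-cliques, i.e., it is \emph{not} the case that $D_{i_1}$ and $D_{i_2}$ share a quasi-clique in every two-quasi-clique partition.

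Finally, I would take the contrapositive on the part of the implication concerning $(j_1,j_2)$. This rewrites the condition as: whenever $D_{i_1}\stackrel{!}{\to} D_{j_1}$, $D_{i_2}\stackrel{!}{\to} D_{j_2}$, and $D_{i_1},D_{i_2}$ belong to the same quasi-clique in every two-quasi-clique partition of $G_C$, it must be that $D_{j_1}\sim D_{j_2}$. That is exactly M3$\star$. Since every step is an equivalence, the reverse direction (M3$\star$ implies M3') is obtained by reading the argument backwards.

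I do not anticipate any real obstacle: the substantive graph-theoretic content has already been absorbed into Lemma \ref{lemma:two-clique-for-M3}, and what remains is just bookkeeping of quantifiers. The only subtlety worth stating explicitly in the write-up is the invocation of NOC$\star$ to ensure $G_C$ is co-bipartite, so that Lemma \ref{lemma:two-clique-for-M3} is legitimately applicable in this setting.
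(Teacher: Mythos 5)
Your proposal is correct and matches the paper's approach exactly: the paper's proof is a one-line appeal to Lemma \ref{lemma:two-clique-for-M3}, with the remaining content being the same quantifier rearrangement and contrapositive you describe. Your explicit remark that NOC$\star$ is needed to guarantee $G_C$ is co-bipartite (so that Lemma \ref{lemma:two-clique-for-M3} applies) is a point the paper leaves implicit, and is worth keeping.
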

\begin{proof}
    Follows from lemma \ref{lemma:two-clique-for-M3} and lemma \ref{lemma:M3prime}
\end{proof}
At this point, we can observe that condition M2 is actually a special case of condition M3 (corresponding to setting $i_1=i_2$). 

We can thus rephrase the previously found sufficient conditions of \ref{lemma:sufficientaccesspair} in terms of the original causal graph.
\begin{theorem}\label{thm:mainsufficiency}
    If a causal graph $G$ satisfies the following, then the entanglement summoning task is achievable.
    \begin{itemize}
        \item \textbf{(NOC$\star$)} $G$ admits a two-quasi-clique partition.
        \item \textbf{(M1$\star$)} If $D_{i_1} \stackrel{!}{\to} D_{j_1}$ then there is at least one two-clique partitioning of $G$ such that $D_{i_1}$ and $D_{j_1}$ belong to different quasi-cliques.
        %\item \textbf{(M2$\star$)} If $D_{i_1}\stackrel{!}{\to} D_{j_1}, D_{j_2}$ then $D_{j_1}\sim D_{j_2}$.
        \item \textbf{(M3$\star$)} If $D_{i_1}$ and $D_{i_2}$ belong to the same quasi-clique for every two-quasi-clique partition of $G$ and $D_{i_1}\stackrel{!}{\to} D_{j_1}$ and $D_{i_2}\stackrel{!}{\to} D_{j_2}$, then $D_{j_1}\sim D_{j_2}$.
    \end{itemize}
\end{theorem}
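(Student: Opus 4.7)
The plan is to consolidate the translation lemmas already established, so the proof is largely bookkeeping: each condition stated on the causal graph $G_C$ has an equivalent already shown to be a condition on the access pair graph $G_{C\to\mathcal{A}}$, and the conjunction of the latter is precisely the sufficient condition of Lemma \ref{lemma:sufficientaccesspair} for the associated entanglement sharing scheme to be realizable. By the construction in algorithm \ref{algorithm:accesspairgraphconstruction} and the reasoning preceding it (analogous to the bidirected case in Lemma \ref{lemma:es-implies-essup}), a realizable entanglement sharing scheme with sharing graph $G_{C\to\mathcal{A}}$ furnishes a correct summoning protocol, giving the claim.

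First I would apply Lemma \ref{lemma:NOCstar} to convert NOC$\star$ into the NOC condition on $G_{C\to\mathcal{A}}$. Next, Lemma \ref{lemma:M1star} converts M1$\star$ into M1 on $G_{C\to\mathcal{A}}$. Finally, Lemma \ref{lemma:M3star} converts M3$\star$ into M3 on $G_{C\to\mathcal{A}}$. It remains to establish M2 on $G_{C\to\mathcal{A}}$, since it is listed separately in Lemma \ref{lemma:sufficientaccesspair}. For this I would observe, as noted in the text following Lemma \ref{lemma:M3star}, that M2 is the special case of M3 obtained by setting $i_1=i_2$: the hypothesis that $D_{i_1}$ and $D_{i_2}$ belong to the same quasi-clique in every two-quasi-clique partition becomes trivial when $i_1=i_2$, and the conclusion $D_{j_1}\sim D_{j_2}$ for $j_1,j_2 \in \mathcal{O}_i$ is exactly M2$\star$, which via Lemma \ref{lemma:M2star} is equivalent to M2. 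Hence M3$\star$ already entails M2.

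With all four hypotheses of Lemma \ref{lemma:sufficientaccesspair} verified for $G_{C\to\mathcal{A}}$, that lemma produces an entanglement sharing scheme whose access-pair graph is $G_{C\to\mathcal{A}}$. Composing this with protocol \ref{protocol:mixeddirected} yields a valid summoning protocol for $G_C$: the called-to diamonds in any admissible input pattern end up holding exactly a pair of subsets $\{T_i,T_j\}$ or $\{T_i,T_{j\backslash i}\}$ that constitutes an authorized pair of $G_{C\to\mathcal{A}}$, so local channels extract $\ket{\Psi^+}_{AB}$. Calls to pairs sharing a bidirected edge are handled, as in the bidirected case, by a dedicated pre-shared maximally entangled state per bidirected edge.

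There is no substantial obstacle; the work is concentrated in the preceding translation lemmas. The only mildly non-routine step is recognizing that M3$\star$ subsumes M2$\star$, which removes the need for a fourth hypothesis in the theorem statement. The strategic observation behind the whole chain is that once NOC$\star$ holds, Lemma \ref{lemma:two-clique-for-M3} lets us re-express the combinatorial statements about even-length paths in the complement of $G_C$ (that appear after applying Lemma \ref{lemma:M1star} and Lemma \ref{lemma:M3star}) as statements about which vertices are forced to share a quasi-clique, which is what makes the final conditions intrinsic and checkable on $G_C$.
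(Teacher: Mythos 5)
Your proposal is correct and follows essentially the same route as the paper: the paper's proof is a one-line appeal to Lemma \ref{lemma:sufficientaccesspair} combined with the translation lemmas \ref{lemma:NOCstar}, \ref{lemma:M2star}, \ref{lemma:M1star}, and \ref{lemma:M3star}, which is exactly the chain you assemble. Your observation that M2 is subsumed by M3$\star$ via the case $i_1=i_2$ matches the remark the paper makes immediately after Lemma \ref{lemma:M3star} and correctly accounts for the absence of M2$\star$ from the theorem statement.
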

\begin{proof}
    This follows immediately from lemma \ref{lemma:sufficientaccesspair} and lemmas \ref{lemma:NOCstar}, \ref{lemma:M2star}, \ref{lemma:M1star}, and \ref{lemma:M3star}.
\end{proof}

The above conditions can be rephrased in terms of ``no one-way causal connections'' (which we denote $\not\leftarrow$) rather than ``exclusive one-way causal connections'' (i.e. $\stackrel{!}{\to}$). 
Doing so ends up reducing the number of conditions on the causal graph from four to two, as highlighted by the following corollary.\footnote{We thank Adrian Kent for pointing out to us how to reduce the number of conditions in this way.}

\begin{corollary}\label{coroll:mainsuff-rewritten}
    An entanglement summoning task is achievable whenever the following two conditions on its causal graph are satisfied:
    \begin{itemize}
        \item \textbf{(M1$\star\star$)} If $D_{i_1}\not\leftarrow D_{j_1}$, then there is at least one two-clique partitioning of $G$ such that $D_{i_1}$ and $D_{j_1}$ belong to different quasi-cliques.
        \item \textbf{(M3$\star\star$)} If $D_{i_1}$ and $D_{i_2}$ belong to the same quasi-clique for every two-quasi-clique partition of $G$ and $D_{i_1}\not\leftarrow D_{j_1}$ and $D_{i_2}\not\leftarrow D_{j_2}$, then $D_{j_1}\sim D_{j_2}$. 
    \end{itemize}
\end{corollary}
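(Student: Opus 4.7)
The plan is to derive the corollary directly from Theorem~\ref{thm:mainsufficiency} by showing that the two reformulated conditions M1$\star\star$ and M3$\star\star$ jointly imply the three original sufficient conditions NOC$\star$, M1$\star$, and M3$\star$. The first observation I would make is that $D_{i_1}\stackrel{!}{\to} D_{j_1}$ implies $D_{i_1}\not\leftarrow D_{j_1}$, since an exclusive forward connection is strictly stronger than the mere absence of a backward connection. Therefore M1$\star$ is just M1$\star\star$ restricted to $\stackrel{!}{\to}$ pairs, and M3$\star$ is likewise a special case of M3$\star\star$. Both of these implications are immediate from this observation alone and require no further argument.

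The remaining and only real obstacle is to show that M1$\star\star$ on its own implies the existence of a two-quasi-clique partition, i.e., NOC$\star$. I would handle this by a simple case split based on whether $G_C$ contains any ordered pair $(D_{i_1},D_{j_1})$ satisfying $D_{i_1}\not\leftarrow D_{j_1}$. In the first case, M1$\star\star$ directly asserts the existence of a two-quasi-clique partition in which $D_{i_1}$ and $D_{j_1}$ lie in distinct cliques, so NOC$\star$ holds. In the second case, every ordered pair of distinct vertices $(D_i,D_j)$ must have the edge $(D_j,D_i)$ present in $G_C$; applying this to both orderings of each unordered pair forces $G_C$ to be the complete bidirected graph, so that any non-trivial bipartition of the vertex set is automatically a two-quasi-clique partition.

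Having verified NOC$\star$, M1$\star$, and M3$\star$, the achievability of the entanglement summoning task follows immediately from Theorem~\ref{thm:mainsufficiency}. The argument reduces to a careful comparison of the two formulations together with the one small case analysis above, and no additional graph-theoretic machinery beyond what is already developed in this section is needed.
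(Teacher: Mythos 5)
Your proposal is correct and follows essentially the same route as the paper's own argument: observe that $D_{i_1}\stackrel{!}{\to}D_{j_1}$ implies $D_{i_1}\not\leftarrow D_{j_1}$ so that M1$\star\star$ and M3$\star\star$ specialize to M1$\star$ and M3$\star$, then recover NOC$\star$ by the same case split on whether some ordered pair satisfies $D_{i_1}\not\leftarrow D_{j_1}$ (otherwise the graph is complete and bidirected), and conclude via Theorem~\ref{thm:mainsufficiency}. The paper's appendix additionally proves the converse implication, establishing full equivalence of the two condition sets, but that extra direction is not needed for the corollary as stated.
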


\noindent The above is proven in appendix 
\hyperref[sect:appendix-alt-conditions]{A}.

%%%%%%%%%%%%%%%%%%%%%%%%%%%%%%%%%%%%%%%%%%%%%%%%%%%%%%%%%%%%
\subsection{Comments on necessity}\label{sec:necessity}
%%%%%%%%%%%%%%%%%%%%%%%%%%%%%%%%%%%%%%%%%%%%%%%%%%%%%%%%%%%%

The above section found that the conditions (M1$\star$), (M2$\star$), (M3$\star$), and (NOC$\star$) on a causal graph $G$ are sufficient to show that an entanglement summoning task is achievable. This was done by mapping the problem of entanglement summoning to a problem of entanglement sharing using protocol \ref{protocol:mixeddirected}, and using the characterization of entanglement sharing by Khanian et al. \cite{Khanian2025} to determine which such graphs were allowed. 
In contrast to the bidirected case, however, this protocol may not be fully general and hence we have only established sufficiency of the conditions in theorem \ref{thm:mainsufficiency}.
We are able to prove necessity of (NOC$\star$) and (M2$\star$)\footnote{Recall that M2$\star$) is a weaker version of M2$\star$.}, but leave necessity of the remaining conditions open.

\begin{lemma}
    Suppose that an entanglement summoning task with causal graph $G$ is achievable. Then (NOC$\star$) is satisfied.
\end{lemma}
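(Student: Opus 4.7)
My plan is to reduce the claim to the already-proven bidirected case, Theorem \ref{thm:bidirectedandNOC}, via a short monotonicity argument. I would introduce an auxiliary bidirected graph $G^{bi}$ obtained from $G$ by promoting every singly-directed edge to a bidirected edge, while leaving already-bidirected edges and non-edges untouched. Two useful observations follow immediately from this construction: $G^{bi}$ is bidirected (so Theorem \ref{thm:bidirectedandNOC} applies to it), and $G^{bi}$ preserves the non-adjacency relation of $G$, i.e.\ $D_i \not\sim D_j$ in $G^{bi}$ if and only if $D_i \not\sim D_j$ in $G$. Consequently the undirected complements coincide, $(G^{bi})^{\bar c} = G^{\bar c}$.

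Next I would establish the monotonicity claim: any summoning protocol that succeeds on $G$ also succeeds on $G^{bi}$. This is because $G^{bi}$ contains every communication channel available in $G$ (plus possibly some reverse channels), so Alice can execute her original protocol verbatim while ignoring the extra channels, with identical success. Thus achievability of the summoning task on $G$ transfers to $G^{bi}$. Applying Theorem \ref{thm:bidirectedandNOC} to $G^{bi}$ then forces $(G^{bi})^{\bar c}$ to contain no odd cycles.

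Combining the two steps, $G^{\bar c}$ contains no odd cycles. Theorem \ref{thm:bipartite-is-noc} promotes this to bipartiteness of $G^{\bar c}$, and Theorem \ref{thm:two-quasi-clique-partition} converts bipartiteness of the undirected complement of $G$ into the existence of a two-quasi-clique partition, which is exactly (NOC$\star$). The only step requiring any genuine observation is the monotonicity claim, but this is essentially a definitional remark: the structure of a summoning protocol — pre-distributed resource state, local pre-processing, one simultaneous round of communication, local post-processing — is entirely stable under the addition of unused communication channels. All other steps are routine invocations of results already established in the excerpt, so I anticipate no substantive obstacle.
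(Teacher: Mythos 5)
Your proposal is correct and follows essentially the same route as the paper: pass to the bidirected graph obtained by promoting all one-way edges, note the task can only get easier, apply Theorem \ref{thm:bidirectedandNOC}, and observe that the (undirected) complement and hence the quasi-clique partitions are unchanged. You simply spell out the monotonicity step and the conversion from ``no odd cycles in the complement'' to ``two-quasi-clique partition'' in slightly more detail than the paper does.
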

\begin{proof}
    Consider the task with causal graph $G'$, which is identical to $G$ except all one-way causal connections are replaced with bidirected connections. 
    This task is never harder than the original task.
    Then by theorem \ref{thm:bidirectedandNOC}, (NOC$\star$) must be satisfied. 
    That is, $G'$ admits a two-quasi-clique partition. 
    Since $G$ and $G'$ have the same edges (but perhaps not the same orientation of edges), $G$ will admit the same quasi-clique partitions as $G$, hence $G$ satisfies (NOC$\star$).
\end{proof}

\begin{lemma}
    Suppose that an entanglement summoning task with causal graph $G$ is achievable. Then (M2$\star$) is satisfied.
\end{lemma}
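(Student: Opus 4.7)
The plan is to prove this by contrapositive: assume M2$\star$ fails and derive that the task is not achievable, thereby contradicting our assumption. If M2$\star$ fails then there exist vertices $D_i, D_j, D_k$ with $D_i \stackrel{!}{\to} D_j$, $D_i \stackrel{!}{\to} D_k$, and $D_j \not\sim D_k$. Exclusivity of the outgoing edges gives $D_j \not\to D_i$ and $D_k \not\to D_i$, while $D_j \not\sim D_k$ gives no edge in either direction between $D_j$ and $D_k$. Hence the subgraph of $G$ induced by $\{D_i, D_j, D_k\}$ is exactly the two-out graph of figure \ref{fig:no-two-out}. The goal is then to reduce any successful strategy on $G$ to a successful strategy on this two-out subgraph, which would contradict lemma \ref{lemma:no-two-out}.

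For the reduction, I would restrict attention to input patterns where $b_\ell = 0$ for all $\ell \notin \{i,j,k\}$, leaving exactly the three two-out input patterns $(b_i, b_j, b_k) \in \{(1,1,0), (1,0,1), (0,1,1)\}$. At each non-special vertex $D_\ell$ the pre-communication operation is fixed (since $b_\ell = 0$ always), so the systems $D_\ell$ forwards to each destination are deterministic functions of the pre-shared systems at $D_\ell$. I would absorb these into a new pre-shared state $\ket{\Psi''}$ distributed only to $D_i, D_j, D_k$: each $D_m$ with $m\in\{i,j,k\}$ receives its original share together with every system $D_\ell$ would have forwarded to it. Systems forwarded between non-special vertices are then irrelevant, and any systems $D_i, D_j, D_k$ would have sent to a non-special $D_\ell$ can simply be discarded, since with $b_\ell = 0$ that vertex produces no output and its post-communication operation may be replaced by a partial trace. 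The original strategy's operations at $D_i, D_j, D_k$ applied unchanged to $\ket{\Psi''}$ then define a strategy for the two-out subgraph whose output on each of the three allowed inputs matches the corresponding output of the original strategy.

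The step that needs the most care is checking that the absorption is valid, which hinges on the single-round communication structure: because there is only one round of communication, any systems originating at a non-special $D_\ell$ that end up at $r_i, r_j, r_k$ must have been forwarded directly along an edge $D_\ell \to D_m$ with $m\in\{i,j,k\}$, with no possibility of re-routing through other intermediaries. This is precisely what allows the contributions of the non-special vertices to be faithfully encoded into an effective pre-shared state on just $\{D_i, D_j, D_k\}$. Once this point is verified, the constructed three-vertex strategy achieves the two-out task and contradicts lemma \ref{lemma:no-two-out}, giving necessity of M2$\star$.
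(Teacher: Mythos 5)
Your proposal is correct and follows the same route as the paper: both arguments observe that a failure of M2$\star$ produces the two-out configuration of figure \ref{fig:no-two-out} as an induced subgraph and then invoke its impossibility (lemma \ref{lemma:no-two-out}, i.e.\ lemma 5 of \cite{dolev2021distributing}). The only difference is that the paper delegates the reduction from the full graph to the three-vertex subgraph to the cited reference, whereas you carry it out explicitly by fixing $b_\ell=0$ elsewhere and absorbing the non-special vertices' deterministic forwarding into the pre-shared state; that restriction argument is sound given the single-round communication model.
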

\begin{proof}
    The necessity of (M2$\star$) is proven by lemma 5 of \cite{dolev2021distributing}. In particular, it is shown that any entanglement summoning task that corresponds to a ``two-out'' causal graph (see figure \ref{fig:no-two-out}) is unachievable.
\end{proof}

We now turn our attention to the necessity of (M1$\star$) and (M3$\star$). 
The bidirected case admitted a complete characterization due to the fact that we were able to ignore the cases where neighbouring diamonds received calls. 
Since any two connected diamonds have a two-way causal connection, in the case where neighbouring diamonds received calls, they could see that their neighbour received a call, and they could complete the task using a dedicated Bell pair. By ignoring calls between neighbouring diamonds (i.e. assuming that all calls are made between causally disconnected diamonds), we were able to argue that protocol \ref{protocol:bidirected} was the most general protocol for completing the task. 
In the mixed case, we can no longer argue this way. 
We have not been able to find alternative arguments that either the given protocol which maps to an entanglement sharing protocol is fully general (we believe it is not), in which case necessity would follow, or (more likely) found alternative proofs that the remaining conditions are necessary. 

The simplest example of an uncharacterized problem is shown in figure \ref{fig:uncharacterized}. 
Note that making the single uni-directional edge there bidirected results in a feasible task, and removing the single uni-directed edge results in an impossible task (both claims can be seen using the characterization of bidirected examples).

\appendix

%%%%%%%%%%%%%%%%%%%%%%%%%%%%%%%%%%%%%%%%%%%%%%%%%%%%%%%%%%%%%%%%%%%%%%%%
\section{Proof of corollary \ref{coroll:mainsuff-rewritten}}\label{sect:appendix-alt-conditions}
%%%%%%%%%%%%%%%%%%%%%%%%%%%%%%%%%%%%%%%%%%%%%%%%%%%%%%%%%%%%%%%%%%%%%%%%

In this section, we prove that the conditions of corollary \ref{coroll:mainsuff-rewritten} are equivalent to the conditions in theorem \ref{thm:mainsufficiency}.

The proof that the double-starred conditions of corollary \ref{coroll:mainsuff-rewritten} imply the single starred conditions of theorem \ref{thm:mainsufficiency} is simple.
First, note that (M1$\star\star$) and (M3$\star\star$) immediately imply (M1$\star$) and (M3$\star$) respectively, just because $D_i\rightarrow D_j$ implies $D_i\not\leftarrow D_j$.  
Further, M1$\star\star$ implies NOC$\star$: if every pair of diamonds are connected bidirectionally, then automatically there is a two quasi-clique partition (since the entire graph is a clique). 
Otherwise, there is at least one pair of diamonds $D_1,D_2$ with $D_1 \not\leftarrow D_2$. 
But then M1$\star\star$ implies there is a quasi-clique.

To show the singly-starred conditions implies the double stars, we prove the following two lemmas.
\begin{lemma}\label{lemma:appendixA-1}
    (M1$\star$) and (NOC$\star$) imply (M1$\star\star$).
\end{lemma}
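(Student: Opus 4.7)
The plan is to unpack the hypothesis $D_{i_1}\not\leftarrow D_{j_1}$ into two disjoint sub-cases and handle each with exactly one of the two assumed conditions. First, I would observe that $D_{i_1}\not\leftarrow D_{j_1}$ rules out both $D_{i_1}\leftrightarrow D_{j_1}$ and any exclusive edge $D_{j_1}\stackrel{!}{\to}D_{i_1}$, since either of these would supply a causal arrow from $D_{j_1}$ to $D_{i_1}$. Hence the only possibilities compatible with the hypothesis are (a) $D_{i_1}\stackrel{!}{\to} D_{j_1}$, or (b) $D_{i_1}\not\sim D_{j_1}$.

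In case (a), the hypothesis of (M1$\star$) is satisfied, and I would invoke it directly to produce the desired two-quasi-clique partition of $G$ separating $D_{i_1}$ and $D_{j_1}$; this is exactly what (M1$\star\star$) requires in this case.

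In case (b), I would instead invoke (NOC$\star$) to obtain some two-quasi-clique partition $V = K_1\cup K_2$ of $G$, and then argue that $D_{i_1}$ and $D_{j_1}$ cannot lie in the same $K_\ell$. The key observation is that by the definition of a quasi-clique, every pair of vertices in a common part must share at least one directed edge in $G$; since $D_{i_1}\not\sim D_{j_1}$ means no edge in either direction exists, the partition supplied by (NOC$\star$) already witnesses the conclusion of (M1$\star\star$).

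I do not anticipate any real obstacle here: the only subtlety is remembering that quasi-cliques require at least one edge between every pair of members, so that causally disconnected pairs are automatically forced into different parts of every two-quasi-clique partition. With this observation in hand, the lemma reduces to a clean case split with one line of argument per case.
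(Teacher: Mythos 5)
Your proposal is correct and follows essentially the same route as the paper's proof: the same case split of $D_{i_1}\not\leftarrow D_{j_1}$ into $D_{i_1}\stackrel{!}{\to}D_{j_1}$ (handled by M1$\star$) and $D_{i_1}\not\sim D_{j_1}$ (handled by NOC$\star$). Your added remark that causally disconnected pairs are automatically separated by any two-quasi-clique partition makes explicit a step the paper leaves implicit, but the argument is the same.
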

\begin{proof}
Suppose that $D_i \not\leftarrow D_j$. 
Then to prove (M1$\star\star$), we need to show there is a two-quasi-clique partition of $G$ such that $D_i$ and $D_j$ belong to different quasi-cliques. 
Since $D_i \not\leftarrow D_j$, we have that either $D_i \not\sim D_j$ or $D_i \stackrel{!}{\to} D_j$.

Suppose first that $D_i \not\sim D_j$. Then by NOC$\star$, there exists a two-quasi-clique partition with $D_i$ and $D_j$ in different cliques, as needed. 

Next, suppose that $D_i \stackrel{!}{\to} D_j$. 
Then M1$\star$ tells us there is a two-quasi-clique partitioning such that $D_i$ and $D_j$ belong to different cliques, as needed.
\end{proof}

\begin{lemma}\label{lemma:appendixA-2}
    If a causal graph $G$ satisfies (NOC$\star$), (M1$\star$) and (M3$\star$), then it also satisfies (M3$\star\star$).
\end{lemma}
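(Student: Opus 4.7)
The plan is a case analysis on how each hypothesis $D_i \not\leftarrow D_j$ unpacks. Each such hypothesis is equivalent to ``$D_i \stackrel{!}{\to} D_j$ or $D_i \not\sim D_j$,'' giving four combinations for the pair $(D_{i_1}\not\leftarrow D_{j_1},\, D_{i_2}\not\leftarrow D_{j_2})$. The case in which both are $\stackrel{!}{\to}$ is exactly the hypothesis of (M3$\star$), which then immediately yields $D_{j_1}\sim D_{j_2}$. The remaining three cases — those in which at least one of the two hypotheses is actually ``no causal connection'' — will be handled uniformly by producing a single two-quasi-clique partition of $G_C$ that simultaneously splits both pairs $(D_{i_1},D_{j_1})$ and $(D_{i_2},D_{j_2})$ across its two classes.

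To produce such a partition, I would rely on two observations. First, whenever $D_a\not\sim D_b$, the vertices $D_a$ and $D_b$ are adjacent in the undirected complement $G_C^{\bar{c}}$; since (NOC$\star$) together with theorem \ref{thm:two-quasi-clique-partition} makes $G_C^{\bar{c}}$ bipartite, \emph{every} two-quasi-clique partition automatically places $D_a$ and $D_b$ in opposite classes. Second, whenever $D_a\stackrel{!}{\to} D_b$, (M1$\star$) supplies at least one two-quasi-clique partition that separates $D_a$ and $D_b$. In each of the three remaining cases at most one of the two pairs is of $\stackrel{!}{\to}$-type, so a single invocation of (M1$\star$) — or, if both pairs are $\not\sim$, any partition furnished by (NOC$\star$) — yields a partition that splits both pairs simultaneously.

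With such a partition fixed, the rest of the argument is immediate: by the main hypothesis of (M3$\star\star$), $D_{i_1}$ and $D_{i_2}$ lie in the same quasi-clique; since the partition separates $(D_{i_1},D_{j_1})$ and $(D_{i_2},D_{j_2})$, the vertices $D_{j_1}$ and $D_{j_2}$ both land in the opposite quasi-clique. Two vertices in the same quasi-clique are connected by definition, so $D_{j_1}\sim D_{j_2}$, as required.

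The main obstacle is merely the bookkeeping for the four sub-cases. The conceptual content is the clean decomposition of the weaker hypothesis $\not\leftarrow$ into a part handled ``for free'' by bipartiteness of $G_C^{\bar{c}}$ (from (NOC$\star$)) and a part handled by (M1$\star$), with the one remaining doubly-$\stackrel{!}{\to}$ corner absorbed directly into the statement of (M3$\star$); one should only take care to notice that we never need to apply (M1$\star$) to both pairs at once, which is exactly why that doubly-$\stackrel{!}{\to}$ corner is not accessible by this partition-splitting strategy and must instead be covered by (M3$\star$).
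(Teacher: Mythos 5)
Your proposal is correct and follows essentially the same route as the paper's proof: the doubly-$\stackrel{!}{\to}$ case is deferred to (M3$\star$), pairs with $\not\sim$ are separated by every two-quasi-clique partition (guaranteed to exist by NOC$\star$), a single $\stackrel{!}{\to}$ pair is separated via one invocation of (M1$\star$), and the conclusion follows because $D_{j_1}$ and $D_{j_2}$ then land together in the quasi-clique opposite the one containing $D_{i_1}$ and $D_{i_2}$. Your uniform treatment of the three non-doubly-$\stackrel{!}{\to}$ cases is a slightly cleaner packaging of the paper's explicit case split, but the underlying argument is identical.
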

\begin{proof} 
    Suppose that $D_{i_1}$ and $D_{i_2}$ belong to the same quasi-clique for every two-quasi-clique partition of $G$, and that  
    $D_{i_1}\not\leftarrow D_{j_1}$ and $D_{i_2}\not\leftarrow D_{j_2}$.
    We consider various cases. 

    First, suppose that $D_{i_1}\stackrel{!}{\to} D_{j_1}$ and $D_{i_2}\stackrel{!}{\to} D_{j_2}$. 
    Then M3$\star$ applies, so $D_{j_1}\sim D_{j_2}$ and we are done.

    Next, suppose that $D_{i_1}\not\sim D_{j_1}$ and $D_{i_2}\not\sim D_{j_2}$.
    Then $D_{j_1}$ and $D_{j_2}$ cannot be in the same quasi-clique as $D_{i_1}$ and $D_{i_2}$ respectively, so they must be in the other clique in any two-clique partition (which exists by NOC$\star$). 
    Thus, they are in the same quasi clique, and hence are connected, and we are done. 

    Next, suppose that one of the two pairings shares a causal connection; without loss of generality suppose that $D_{i_1}\stackrel{!}{\to} D_{j_1}$ and $D_{i_2}\not\sim D_{j_2}$.
    Then by M1$\star$, there is a two quasi-clique partition such that $D_{i_1}$ and $D_{j_1}$ are in different cliques, call these two cliques clique 1 and clique 2.
    Let clique 1 be the clique containing $D_{i_1}$, and clique 2 be the clique containing $D_{j_1}$.
    Note that $D_{i_2}$ is always in the same clique as $D_{i_1}$, so is also in clique 1. 
    Further, $D_{j_2}$ cannot be in clique 1 because $D_{i_2}\not\sim D_{j_2}$, so must be in clique 2. But then $D_{j_1}$ and $D_{j_2}$ are in the same clique, so they share a causal connection, as claimed.
\end{proof}

%%%%%%%%%%%%%%%%%%%%%%%%%%%%%%%%%%%%%%%%%%%%%%%%%%%%%%%%%%%%
\section{Relationship to entanglement summoning tasks with oriented graphs}
%%%%%%%%%%%%%%%%%%%%%%%%%%%%%%%%%%%%%%%%%%%%%%%%%%%%%%%%%%%%

In theorem \ref{thm:oriented-entanglement-summoning}, we recalled an earlier result that characterized all singly-directed entanglement summoning tasks.  
In this section, we show that the sufficient cases there are a subset of the cases shown to be sufficient by our theorem \ref{thm:mainsufficiency}. 
We proceed by showing that the conditions of theorem \ref{thm:oriented-entanglement-summoning} imply all of the conditions of theorem \ref{thm:mainsufficiency}.

\begin{lemma}
    \label{lem:orient-to-noc}
    Consider a causal graph $G_C$ that is an oriented graph and describes an entanglement summoning task. If every subgraph $G_{\mathcal{S}_j}$ of $G_C$ induced by $\mathcal{S}_{j} = \{ D_i\hspace{3pt} | \hspace{3pt}  D_i \not\to D_j\}$ is a tournament, then $G_C$ satisfies NOC$\star$. 
\end{lemma}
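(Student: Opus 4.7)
My plan is to show that the undirected complement $G_C^{\bar{c}}$ is bipartite, then invoke theorem \ref{thm:two-quasi-clique-partition} to conclude NOC$\star$. I will proceed by contradiction: assume $G_C^{\bar{c}}$ contains an odd cycle with vertices $D_{i_1}, D_{i_2}, \ldots, D_{i_{2m+1}}$ in cyclic order (indices read mod $2m+1$). By the definition of the complement, $D_{i_k}\not\sim D_{i_{k+1}}$ in $G_C$ for every $k$, so both $D_{i_{k-1}}$ and $D_{i_{k+1}}$ lie in $\mathcal{S}_{i_k}$.

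The key tool will be the tournament hypothesis on $\mathcal{S}_{i_k}$: any other vertex in $\mathcal{S}_{i_k}$ must be adjacent in $G_C$ to both $D_{i_{k-1}}$ and $D_{i_{k+1}}$. Since $\{D_{i_{k+1}}, D_{i_{k+2}}\}$ is an edge of the complement (i.e.\ $D_{i_{k+1}}\not\sim D_{i_{k+2}}$ in $G_C$), $D_{i_{k+2}}$ cannot lie in $\mathcal{S}_{i_k}$; hence $D_{i_{k+2}}\to D_{i_k}$ in $G_C$ by definition of $\mathcal{S}_{i_k}$. Applying the same reasoning with $D_{i_{k-1}}$ in place of $D_{i_{k+1}}$ gives $D_{i_{k-2}}\to D_{i_k}$, for every $k$.

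To conclude, I will split on the cycle length. For a cycle of length at least $5$, reindexing $k\mapsto k+2$ in $D_{i_{k-2}}\to D_{i_k}$ yields $D_{i_k}\to D_{i_{k+2}}$; combined with $D_{i_{k+2}}\to D_{i_k}$, both directed edges between the distinct vertices $D_{i_k}$ and $D_{i_{k+2}}$ exist, contradicting that $G_C$ is oriented. For a $3$-cycle, $D_{i_{k+2}}$ coincides with the cycle-neighbour $D_{i_{k-1}}$, so $D_{i_{k+2}}\not\sim D_{i_k}$ in $G_C$; this immediately contradicts $D_{i_{k+2}}\to D_{i_k}$. I expect the main subtlety to be the interplay between the two symmetric applications of the tournament condition: for longer odd cycles they produce opposite directed edges on the same unordered pair, while for a $3$-cycle they collapse onto edges that are already known to be absent in $G_C$.
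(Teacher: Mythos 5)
Your proof is correct, but it takes a genuinely different route from the paper's. The paper argues constructively: picking any non-adjacent pair $D_1\not\sim D_2$, it sets $K_1=\mathcal{S}_1\setminus\mathcal{S}_2$ and $K_2=\mathcal{S}_2$ and verifies directly that this is a two-quasi-clique partition (the only nontrivial step being that no vertex $D_k$ is left out, since such a $D_k$ would force the non-adjacent pair $D_1,D_2$ into $\mathcal{S}_k$). You instead verify NOC$\star$ via its equivalent formulation in theorem \ref{thm:two-quasi-clique-partition}, showing the undirected complement has no odd cycle. Your two applications of the tournament hypothesis are sound: since $D_{i_{k\pm1}}\in\mathcal{S}_{i_k}$ and $D_{i_{k\pm2}}\not\sim D_{i_{k\pm1}}$, membership of $D_{i_{k\pm2}}$ in $\mathcal{S}_{i_k}$ would put a non-adjacent pair inside a tournament, so $D_{i_{k\pm2}}\to D_{i_k}$; the reindexing then cleanly produces both orientations on $\{D_{i_k},D_{i_{k+2}}\}$ for cycles of length at least $5$, and the $3$-cycle case collapses as you describe (there $\mathcal{S}_{i_k}$ already contains the non-adjacent pair $D_{i_{k-1}},D_{i_{k+1}}$, so the contradiction is immediate). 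What each approach buys: the paper's proof is shorter and exhibits the partition explicitly, which is useful for actually running the protocol of section \ref{section:physical-bidirected}; yours localizes exactly where orientedness is needed (to forbid antiparallel edges) and makes transparent why the obstruction is an odd cycle in the complement, but it is non-constructive and slightly longer. One small point of care in your write-up: you should note explicitly that it suffices to rule out \emph{simple} odd cycles (an odd closed walk always contains a simple odd cycle), since your distinctness claims for $D_{i_k}$ and $D_{i_{k\pm2}}$ rely on the cycle being simple.
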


\begin{proof}
    If every pair of diamonds share an edge then any partition of the vertices is also a two-clique partition, so we are done. 
    Otherwise, we're missing at least one edge, say between $D_1$ and $D_2$. 
    We define the sets of vertices:
    \begin{align}
        K_1 &= \mathcal{S}_1 \setminus \mathcal{S}_2, \nonumber \\
        K_2 &= \mathcal{S}_2.
    \end{align}
    We claim that $K_1$, $K_2$ is a two quasi-clique partition. 
    To see why, first notice that $K_1\cap K_2=\emptyset$ by construction. 
    Further, $K_1$ and $K_2$ are both tournaments, since $\mathcal{S}_1$ and $\mathcal{S}_2$ are tournaments. 
    It remains to show that $K_1\cup K_2$ makes up the entire graph. 
    To show this, assume by way of contradiction that there exists a diamond, $D_k$, in neither $K_1$ nor $K_2$. 
    Since this means $D_k$ is in neither $S_1$ nor $S_2$, it means $D_k \stackrel{!}{\to} D_1$, $D_k \stackrel{!}{\to} D_2$. 
    But then $D_1,D_2$ must both be in $\mathcal{S}_k$, but since $D_1\not\sim D_2$ then $\mathcal{S}_k$ is not a tournament, which is a contradiction. 
\end{proof}

\begin{lemma}
    \label{lem:orient-to-m1s}
    Consider a causal graph $G_C$ that is an oriented graph and describes an entanglement summoning task. If every subgraph $G_{\mathcal{S}_j}$ of $G_C$ induced by $\mathcal{S}_{j} = \{ D_i\hspace{3pt} | \hspace{3pt}  D_i \not\to D_j\}$ is a tournament, then $G_C$ satisfies M1$\star$. 
\end{lemma}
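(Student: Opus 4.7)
The plan is to imitate the partition construction used for Lemma~\ref{lem:orient-to-noc}, but tailored so that the specific pair $D_{i_1},D_{j_1}$ singled out by $D_{i_1}\stackrel{!}{\to}D_{j_1}$ lands on opposite sides. My proposed partition is
\begin{align}
    K_1 \;:=\; \mathcal{S}_{j_1},\qquad K_2 \;:=\; V\setminus\mathcal{S}_{j_1}\;=\;\{D_k:\,D_k\to D_{j_1}\}.
\end{align}
These are disjoint and cover $V$ by construction. Since $D_{j_1}\not\to D_{j_1}$ (no self-loops), $D_{j_1}\in K_1$; and because $G_C$ is oriented, $D_{i_1}\to D_{j_1}$ precludes $D_{i_1}\not\to D_{j_1}$, so $D_{i_1}\notin\mathcal{S}_{j_1}$, i.e.\ $D_{i_1}\in K_2$. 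Hence this partition separates $D_{i_1}$ from $D_{j_1}$, which is precisely what (M1$\star$) requires.

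The remaining task is to verify that the induced subgraphs on $K_1$ and $K_2$ are both quasi-cliques. The $K_1$ side is immediate: by hypothesis $G_{\mathcal{S}_{j_1}}$ is a tournament, and any tournament is a quasi-clique. The substantive step, and the only real obstacle I anticipate, is the $K_2$ side. I would proceed by contradiction: fix distinct $D_a,D_b\in K_2$ and suppose $D_a\not\sim D_b$. Then in particular $D_b\not\to D_a$, so $D_b\in\mathcal{S}_a$; and $D_a\in\mathcal{S}_a$ holds trivially. Applying the tournament hypothesis now to the index $a$ (rather than $j_1$), the induced subgraph $G_{\mathcal{S}_a}$ is a tournament, so it contains an edge between $D_a$ and $D_b$. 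This contradicts $D_a\not\sim D_b$, showing $K_2$ is a quasi-clique.

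Combining these two checks, $\{K_1,K_2\}$ is a valid two-quasi-clique partition with $D_{i_1}\in K_2$ and $D_{j_1}\in K_1$, which establishes (M1$\star$). The key conceptual point — and the one that distinguishes the argument from the NOC$\star$ proof — is that the tournament hypothesis is invoked \emph{at the index $a$}, not at $j_1$; applying it at $j_1$ only handles $K_1$, whereas closing the argument for $K_2$ requires exploiting the tournament property at every vertex in $K_2$ simultaneously.
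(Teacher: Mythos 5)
Your argument has a genuine gap, and it sits exactly where you flag the ``substantive step'': the claim that $D_a\in\mathcal{S}_a$ ``holds trivially'' is false. A causal diamond can always signal itself (there is a causal curve from $c_a$ to $r_a$), so $D_a\to D_a$ and $D_a\notin\mathcal{S}_a$. This convention is forced by the surrounding text: under self-inclusion, your own $K_2$ argument applied to an arbitrary non-adjacent pair would show that the tournament hypothesis forces $G_C$ to be a complete tournament, which would trivialize theorem~\ref{thm:oriented-entanglement-summoning} and render vacuous the proof of lemma~\ref{lem:orient-to-noc}, which explicitly treats the case of a missing edge $D_1\not\sim D_2$ and needs $D_2\notin\mathcal{S}_2=K_2$ for $K_2$ to be a tournament. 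With the correct convention your construction fails twice over: both $D_{j_1}$ and $D_{i_1}$ land in $K_2=V\setminus\mathcal{S}_{j_1}$ (the former because it signals itself, the latter because $D_{i_1}\to D_{j_1}$), so the partition does not separate the target pair; and the contradiction you derive for $K_2$ evaporates, since $D_b\in\mathcal{S}_a$ alone supplies no tournament edge between $D_a$ and $D_b$.

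The failure is not merely one of bookkeeping: no partition of the form $(\mathcal{S}_x,\,V\setminus\mathcal{S}_x)$ works in general. Take $V=\{D_1,D_2,D_3,D_4\}$ with edges $D_1\to D_2$, $D_3\to D_1$, $D_4\to D_1$, $D_3\to D_2$, $D_4\to D_2$, and no edge between $D_3$ and $D_4$. Then $\mathcal{S}_1=\{D_2\}$, $\mathcal{S}_2=\emptyset$, $\mathcal{S}_3=\{D_1,D_2,D_4\}$, $\mathcal{S}_4=\{D_1,D_2,D_3\}$ are all tournaments, and M1$\star$ holds for $D_1\stackrel{!}{\to}D_2$ via the partition $\{D_1,D_3\}$, $\{D_2,D_4\}$; but your $K_2=V\setminus\mathcal{S}_2=V$ contains the non-adjacent pair $D_3,D_4$ and fails to separate $D_1$ from $D_2$ (the analogous partition built from $\mathcal{S}_1$ fails for the same reason). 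This is why the paper's proof instead grows a \emph{maximal} tournament $X$ containing $D_{i_1}$ and excluding $D_{j_1}$, and invokes the consequence, due to Dolev et al., that each diamond is non-adjacent to at most one other diamond to conclude that the complement $Y$ is also a tournament. That maximality step, rather than a canonical choice of $\mathcal{S}_x$, is the essential ingredient your construction is missing.
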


\begin{proof}
    Suppose $D_{i_1}$ and $D_{j_1}$ are diamonds in $G_C$ that satisfy $D_{i_1} \stackrel{!}{\to} D_{j_1}$. Let $X, Y$ be a partitioning of $G_C$ such that (i) $D_{i_1} \in X$, (ii) $D_{j_1} \in Y$, (iii) $X$ induces a tournament in $G_C$, and (iv) $X$ is a maximal (in the number of diamonds) set with properties (i)-(iii).

    We claim that $Y$ must also induce a tournament in $G_C$. For the sake of contradiction, suppose otherwise. 
    Then there exist diamonds $D_s, D_t \in Y$ such that $D_s \not \sim D_t$. Dolev et al. \cite{dolev2021distributing} showed that theorem \ref{thm:oriented-entanglement-summoning} implies each diamond can only be non-adjacent to at most one other diamond. 
    Thus, $D_s$ and $D_t$ must be adjacent to all diamonds in $X$. At least one of $D_s \neq D_{j_1}$ or $D_t \neq D_{j_1}$ must hold, so we may add either $D_s$ or $D_t$ to $X$. Therefore, $X$ is not maximal in the number of diamonds, a contradiction.
\end{proof}

\begin{lemma}
    Consider a causal graph $G_C$ that is an oriented graph and describes an entanglement summoning task. If every subgraph $G_{\mathcal{S}_j}$ of $G_C$ induced by $\mathcal{S}_{j} = \{ D_i\hspace{3pt} | \hspace{3pt}  D_i \not\to D_j\}$ is a tournament, then $G_C$ satisfies M2$\star$. 
\end{lemma}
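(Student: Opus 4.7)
The plan is to prove (M2$\star$) by contradiction. Recall that (M2$\star$) states: if $D_i \stackrel{!}{\to} D_j$ and $D_i \stackrel{!}{\to} D_k$ for distinct $D_j, D_k$, then $D_j \sim D_k$. Suppose for contradiction that we have such $D_i, D_j, D_k$ with $D_j \not\sim D_k$. The goal is to locate a specific subgraph $G_{\mathcal{S}_\ell}$ that is supposed to be a tournament but in fact contains two non-adjacent vertices, yielding the desired contradiction.

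The natural choice is $\mathcal{S}_i$. Since $D_i \stackrel{!}{\to} D_j$ in the oriented graph $G_C$, we have $D_j \not\to D_i$, hence $D_j \in \mathcal{S}_i$. An identical argument shows $D_k \in \mathcal{S}_i$. But by our contradiction hypothesis $D_j \not\sim D_k$, so the induced subgraph $G_{\mathcal{S}_i}$ contains two vertices with no edge between them, contradicting the hypothesis that $G_{\mathcal{S}_i}$ is a tournament (in a tournament every pair of vertices is joined by exactly one directed edge). This establishes (M2$\star$).

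I expect no significant obstacle here: the argument is a direct unpacking of definitions, exactly parallel in spirit to the ``no two-out'' lemma of \cite{adlam2018relativistic} (lemma~\ref{lemma:no-two-out}), which is essentially the same statement phrased differently. The only subtlety worth noting is that the oriented-graph assumption is used implicitly when identifying $D_i \stackrel{!}{\to} D_j$ with $D_i \to D_j$, so that the negation ``$D_j \not\to D_i$'' follows automatically, placing $D_j$ in $\mathcal{S}_i$. This step is what makes the proof so short relative to the earlier translations of (M1) and (M3).
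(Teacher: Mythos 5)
Your proof is correct and follows essentially the same route as the paper: both arguments place $D_j$ and $D_k$ in $\mathcal{S}_i$ via $D_j \not\to D_i$ and $D_k \not\to D_i$ (which follow from the exclusive arrows out of $D_i$) and then observe that $D_j \not\sim D_k$ contradicts $G_{\mathcal{S}_i}$ being a tournament. No gaps.
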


\begin{proof}
    Suppose for the sake of contradiction that M2$\star$ is not satisfied. That is, there exists $D_{i_1}\stackrel{!}{\to} D_{j_1}, D_{j_2}$ but $D_{j_1}\not\sim D_{j_2}$. But then it is the case that $D_{j_1}, D_{j_2}\in\mathcal{S}_{i_1}$, yet there is no edge between them, implying that $\mathcal{S}_{i_1}$ is not a tournament, which is a contradiction. 
\end{proof}

\begin{lemma}
    \label{lem:orient-to-m3s}
    Consider a causal graph $G_C$ that is an oriented graph and describes an entanglement summoning task. If every subgraph $G_{\mathcal{S}_j}$ of $G_C$ induced by $\mathcal{S}_{j} = \{ D_i\hspace{3pt} | \hspace{3pt}  D_i \not\to D_j\}$ is a tournament, then $G_C$ satisfies M3$\star$. 
\end{lemma}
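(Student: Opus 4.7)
The plan is to prove M3$\star$ by contrapositive: assuming $D_{i_1}\stackrel{!}{\to} D_{j_1}$, $D_{i_2}\stackrel{!}{\to} D_{j_2}$, and $D_{j_1}\not\sim D_{j_2}$, I will exhibit an explicit two-quasi-clique partition placing $D_{i_1}$ and $D_{i_2}$ in different parts, contradicting the hypothesis that they lie in the same quasi-clique under every partition.

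The key structural fact I will invoke is the observation used in the proof of Lemma \ref{lem:orient-to-m1s}: under the tournament hypothesis on every $\mathcal{S}_j$, each diamond is non-adjacent to at most one other diamond (proved in \cite{dolev2021distributing}). Hence the non-adjacent pair $\{D_{j_1},D_{j_2}\}$ is the \emph{unique} non-neighbor of each of its endpoints. From this, I first verify that $D_{i_1},D_{i_2},D_{j_1},D_{j_2}$ are four distinct vertices: $D_{i_1}\neq D_{j_1}$ and $D_{i_2}\neq D_{j_2}$ by $\stackrel{!}{\to}$; $D_{i_1}\neq D_{j_2}$ because $D_{i_1}\to D_{j_1}$ forces $D_{i_1}\sim D_{j_1}$ while $D_{j_2}\not\sim D_{j_1}$; similarly $D_{i_2}\neq D_{j_1}$; and $D_{i_1}\neq D_{i_2}$, since otherwise $\{D_{j_1},D_{j_2}\}\subset \mathcal{S}_{D_{i_1}}$ would make $\mathcal{S}_{D_{i_1}}$ non-tournament (this is also just M2$\star$).

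Next I form the partition exactly as in the proof of Lemma \ref{lem:orient-to-noc}, using the non-adjacent pair $(D_{j_1},D_{j_2})$:
\begin{align}
K_1 &= \mathcal{S}_{D_{j_1}}\setminus \mathcal{S}_{D_{j_2}}, & K_2 &= \mathcal{S}_{D_{j_2}}.
\end{align}
Lemma \ref{lem:orient-to-noc} already shows this is a valid two-quasi-clique partition of $G_C$ (in particular $K_1\cup K_2$ covers $V$, since any missing vertex $D_k$ would have $\{D_{j_1},D_{j_2}\}\subset \mathcal{S}_{D_k}$ without an edge). What remains is to locate $D_{i_1}$ and $D_{i_2}$. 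Since $D_{i_2}\to D_{j_2}$, we have $D_{i_2}\notin \mathcal{S}_{D_{j_2}}=K_2$, so $D_{i_2}\in K_1$. For $D_{i_1}$, note that $D_{i_1}\to D_{j_1}$ excludes it from $K_1$; and we cannot have $D_{i_1}\to D_{j_2}$ either, for then $\{D_{j_1},D_{j_2}\}\subset \mathcal{S}_{D_{i_1}}$ would violate the tournament property. Since the graph is oriented and $D_{i_1}\neq D_{j_1}$ rules out $D_{i_1}\not\sim D_{j_2}$ (the unique non-neighbor of $D_{j_2}$ is $D_{j_1}$), the remaining possibility is $D_{j_2}\to D_{i_1}$, giving $D_{i_1}\in \mathcal{S}_{D_{j_2}}=K_2$.

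Thus $D_{i_1}\in K_2$ and $D_{i_2}\in K_1$, which contradicts the assumption that $D_{i_1}$ and $D_{i_2}$ belong to the same quasi-clique in every two-quasi-clique partition. The main obstacle is the bookkeeping of all possible coincidences among $D_{i_1},D_{i_2},D_{j_1},D_{j_2}$ and the careful use of ``at most one non-neighbor'' to rule out alternative edge orientations; once those are organized, the partition from Lemma \ref{lem:orient-to-noc} does the work.
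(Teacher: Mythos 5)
Your proof is correct, but it takes a genuinely different route from the paper's. The paper shows that for an oriented graph satisfying the tournament hypothesis, the premise of M3$\star$ is \emph{never} satisfied for distinct $D_{i_1},D_{i_2}$: it splits into the cases $D_{i_1}\not\sim D_{i_2}$ (handled by NOC$\star$ from lemma \ref{lem:orient-to-noc}) and $D_{i_1}\stackrel{!}{\to}D_{i_2}$ (handled by M1$\star$ from lemma \ref{lem:orient-to-m1s}), concluding in either case that some two-quasi-clique partition separates $D_{i_1}$ from $D_{i_2}$, so M3$\star$ holds vacuously. You instead argue by contrapositive, assuming $D_{j_1}\not\sim D_{j_2}$ and explicitly building the separating partition $K_1=\mathcal{S}_{D_{j_1}}\setminus\mathcal{S}_{D_{j_2}}$, $K_2=\mathcal{S}_{D_{j_2}}$ from that non-edge; the placements $D_{i_2}\in K_1$ (since $D_{i_2}\to D_{j_2}$) and $D_{i_1}\notin K_1$ (since $D_{i_1}\to D_{j_1}$) already finish the job, making your direct verification that $D_{j_2}\to D_{i_1}$ via the ``at most one non-neighbour'' fact redundant, though harmless. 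Your argument is more constructive and self-contained, pinpointing exactly which partition does the separating; the paper's argument is shorter and yields the stronger observation that the hypothesis of M3$\star$ is vacuous for all distinct $i_1,i_2$ in the oriented setting, which is what licenses the remark following the lemma that M3$\star$ only bites when bidirected edges are present.
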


\begin{proof}
    Since we already proved (M2$\star$), we can focus on the case where $i_1\neq i_2$.
    We show that the premise of M3$\star$ never holds in this setting.
    Consider an arbitrary pair of diamonds $D_{i_1}, D_{i_2}$ in $G_C$. Since $G_C$ is an oriented graph, we can assume (without loss of generality) that either (i) $D_{i_1} \not \sim D_{i_2}$ or (ii) $D_{i_1} \stackrel{!}{\to} D_{i_2}$.

    Suppose (i) holds. By lemma \ref{lem:orient-to-noc} we obtain NOC$\star$, which implies that $G_C$ admits a two-quasi-clique partition. If $D_{i_1}$ and $D_{i_2}$ appear in the same quasi-clique, then it must be that $D_{i_1} \sim D_{i_2}$, a contradiction.

    Suppose (ii) holds. By lemma \ref{lem:orient-to-m1s} we obtain M1$\star$, so there must exist a two-quasi-clique partition of $G_C$ such that $D_{i_1}$ and $D_{i_2}$ are in different quasi-cliques.

    Therefore, $D_{i_1}$ and $D_{i_2}$ cannot belong to the same quasi-clique for every two-quasi-clique partition of $G_C$, so M3$\star$ holds vacuously.
\end{proof}

Lemma \ref{lem:orient-to-m3s} demonstrates that M3$\star$ is an additional property that appears only with causal graphs that include bidirected edges. Overall, the preceding four lemmas demonstrate that theorem \ref{thm:oriented-entanglement-summoning} describes a strict subset of the entanglement summoning tasks provided by the conditions in theorem \ref{thm:mainsufficiency}.

\bibliographystyle{unsrtnat}
\bibliography{biblio}

@phdthesis{adlam2018relativistic,
  title={Relativistic quantum tasks},
  author={Adlam, Emily Christine},
  year={2018},
  doi={10.17863/CAM.22081},
school={University of Cambridge}
}

@article{dolev2021distributing,
  title={Distributing bipartite quantum systems under timing constraints},
  author={Dolev, Kfir and May, Alex and Wan, Kianna},
  journal={Journal of Physics A: Mathematical and Theoretical},
  volume={54},
  number={14},
  pages={145301},
  year={2021},
  publisher={IOP Publishing},
  doi={10.1088/1751-8121/abe5ce},
}

@article{Khanian2025,
title={Entanglement sharing schemes},
year={2025},
journal={arXiv preprint arXiv:2509.21462v1},
author={Khanian, Zahra and Lee, Dongjin and Leung, Debbie and Li, Zhi and May, Alex and Mori, Takato and Miao, Stanley and Salek, Farzin and Yi, Jinmin and Yoshida, Beni},
doi={10.48550/arXiv.2509.21462},
}

@article{kent2013no,
  title={A no-summoning theorem in relativistic quantum theory},
  author={Kent, Adrian},
  journal={Quantum information processing},
  volume={12},
  pages={1023--1032},
  year={2013},
  publisher={Springer},
  doi={10.1007/s11128-012-0431-6},
}

@article{hayden2016summoning,
  title={Summoning information in spacetime, or where and when can a qubit be?},
  author={Hayden, Patrick and May, Alex},
  journal={Journal of Physics A: Mathematical and Theoretical},
  volume={49},
  number={17},
  pages={175304},
  year={2016},
  publisher={IOP Publishing},
  doi={10.1088/1751-8113/49/17/175304},
}

@article{adlam2015quantum,
  title={A Quantum Paradox of Choice and Purported Classical Analogues},
  author={Adlam, Emily and Kent, Adrian},
  journal={arXiv preprint arXiv:1509.08094},
  year={2015},
  doi={10.48550/arXiv.1509.08094},
}

@article{kent2012quantum,
  title={Quantum tasks in {M}inkowski space},
  author={Kent, Adrian},
  journal={Classical and Quantum Gravity},
  volume={29},
  number={22},
  pages={224013},
  year={2012},
  publisher={IOP Publishing},
  doi={10.1088/0264-9381/29/22/224013},
}

@book{diestel2025graphtheory,
  author    = {Diestel, Reinhard},
  title     = {Graph Theory},
  edition   = {6},
  volume    = {173},
  year      = {2025},
  publisher = {Springer-Verlag},
  address   = {Berlin},
  series    = {Graduate Texts in Mathematics},
  isbn      = {978-3-662-70106-5},
  chapter   = {Bipartite Graphs},
  pages     = {17},
  doi       = {10.1007/978-3-662-70107-2},
}

@article{Buhrman2014,
  title = {Position-Based Quantum Cryptography: Impossibility and Constructions},
  volume = {43},
  ISSN = {1095-7111},
  url = {http://dx.doi.org/10.1137/130913687},
  DOI = {10.1137/130913687},
  number = {1},
  journal = {SIAM Journal on Computing},
  publisher = {Society for Industrial & Applied Mathematics (SIAM)},
  author = {Buhrman,  Harry and Chandran,  Nishanth and Fehr,  Serge and Gelles,  Ran and Goyal,  Vipul and Ostrovsky,  Rafail and Schaffner,  Christian},
  year = {2014},
  month = jan,
  pages = {150–178}
}

@article{Kent2011-qpv,
  title = {Quantum tagging: Authenticating location via quantum information and relativistic signaling constraints},
  volume = {84},
  ISSN = {1094-1622},
  url = {http://dx.doi.org/10.1103/PhysRevA.84.012326},
  DOI = {10.1103/physreva.84.012326},
  number = {1},
  journal = {Physical Review A},
  publisher = {American Physical Society (APS)},
  author = {Kent,  Adrian and Munro,  William J. and Spiller,  Timothy P.},
  year = {2011},
  month = jul 
}

\end{document}